\documentclass[a4paper,USenglish,cleveref, autoref, thm-restate]{lipics-v2021}

\usepackage{amssymb,amsmath,amsthm,multirow,color,xcolor,cancel,bbm,tikz,tikz-cd,hyperref,tcolorbox}
\usepackage{hyperref}

\usetikzlibrary{decorations.markings}
\usetikzlibrary{shapes.multipart}
\usetikzlibrary{decorations.pathmorphing}

\hypersetup{
    colorlinks=true,
    linkcolor=blue,
    citecolor=cyan,
    filecolor=magenta,      
    urlcolor=blue,
    pdftitle={UFP},
    }

\def\B{\{0,1\}}
\def\ONE{\mathbf{1}}
\def\ZERO{\mathbf{0}}

\def\S{\mathcal{S}}

\def\Sol{\mathrm{Sol}}
\def\PosSol{\Sol^{\textrm{\tiny $+$}}}
\def\Fixe{\mathrm{Fix}}
\def\lcm{\mathrm{lcm}}
\def\N{\mathbb{N}}
\def\Ns{\N^*} 
\def\Per{\mathrm{Per}}
\def\id{\mathrm{id}}

\newcommand{\EM}[1]{{\em\textcolor{Maroon}{#1}}}
\newcommand{\EMM}[1]{{\textcolor{Maroon}{#1}}}

\newcommand{\decisionpb}[3]{
\begin{tcolorbox}[
left=1mm,
right=1mm,
top=1mm,
bottom=1mm,
colframe=black,
colback=white,
arc=3mm,      
boxrule=0.6pt 
]
{{\sc #1}\\{\bf Input:} #2\\{\bf Question:} #3}
\end{tcolorbox}
}

\def\qsat{\textrm{\sc QSAT$_2$}}
\def\succinct{\textrm{\sc Succinct-4SAT}}
\def\UFP{\textrm{\sc UFP}}
\def\LCUFP{\textrm{\sc loop-to-cycle-UFP}}
\def\dUFP{\textrm{\sc $d$-UFP}}
\def\dLCUFP{\textrm{\sc $d$-loop-to-cycle-UFP}}

\def\OSvsD{\textrm{\sc OS-vs-D}}
\def\PeriodPiece{\textrm{\sc Period Piece}}

\def\NEXPTIME{\mathrm{\sf NEXPTIME}}
\def\Poly{\mathrm{\sf P}}
\def\NP{\mathrm{\sf NP}}
\def\coNP{\mathrm{\sf coNP}}
\def\NPoNP{\NP^{\NP}}

\usepackage[dvipsnames]{xcolor}

\DeclareFontShape{T1}{lmr}{m}{scit} { <-> ssub * cmr/m/sc }{}

\usepackage{placeins} 
\makeatletter 
\newcommand{\newpagexorapx}[0]{\ifthenelse{\equal{\axp@appendix}{inline}} {\newpage} {}}
\makeatother

\usepackage{silence}

\usepackage[appendix=append,bibliography=common]{apxproof}

\newtheoremrep{theorem}[thm]{Theorem}
\newtheoremrep{claim}[thm]{Claim}
\newtheoremrep{lemma}[thm]{Lemma}
\newtheoremrep{corollary}[thm]{Corollary}
\newtheoremrep{observation}[thm]{Observation}
\newtheoremrep{proposition}[thm]{Proposition}
\newtheoremrep{conjecture}[thm]{Conjecture}

\nolinenumbers

\title{\texorpdfstring{Period Piece: $\NEXPTIME$-complete problems\\ on digraphs via Boolean network dynamics
}{Period Piece: NEXPTIME-complete problems on digraphs via Boolean network dynamics
}} 

\titlerunning{Period Piece: $\NEXPTIME$-complete problems on digraphs via Boolean network dynamics} %

\author{Florian Bridoux}{Université Côte d’Azur, CNRS, I3S, Sophia Antipolis, France.}{bridoux@i3s.unice.fr}{https://orcid.org/0000-0002-8546-5995}{}
\author{Maximilien Gadouleau}{Department of Computer Science, Durham University, Durham, UK}{m.r.gadouleau@durham.ac.uk}{https://orcid.org/0000-0003-4701-738X}{Supported by Leverhulme Trust Research Project Grant RPG-2025-400.}
\author{David C. Kutner}{Department of Computer Science, Durham University, Durham, UK}{david.c.kutner@durham.ac.uk}{https://orcid.org/0000-0003-2979-4513}{Supported by Leverhulme Trust Research Project Grant RPG-2025-400.}
\author{Adrien Richard}{Université Côte d’Azur, CNRS, I3S, Sophia Antipolis, France.}{richard@unice.fr}{https://orcid.org/0000-0003-3360-5869}{}

\funding{Supported by ANR-24-CE48-7504 “ALARICE” project.}

\authorrunning{F. Bridoux, M. Gadouleau, D. C. Kutner and A. Richard} %

\Copyright{Florian Bridoux, Maximilien Gadouleau, David C. Kutner, Adrien Richard} 

\ccsdesc{Theory of computation~Complexity classes}
\ccsdesc{Mathematics of computing~Discrete mathematics}

\keywords{Boolean networks, NEXPTIME, Directed graph, Fixed point} %

\category{} %

\relatedversion{}
\begin{document}
\maketitle

\begin{abstract}
    The central object in this work is the \EM{Boolean network}, a function $f:\{0,1\}^V \to \{0,1\}^V$. Often identified with $f$ is its \EM{dynamics} $D_f$, the digraph with exactly $\{x \to y : f(x)=y\}$ as its arcs.
    The \EM{interaction graph} of $f$ is the digraph $G$ on $V$ which can be thought of as representing the dependencies of the components of $f$ on one another. 
    On the other hand, model-checking for relational structures in general (and digraphs in particular) is a major field of study in algorithms and complexity, concerned with problems asking whether an input graph satisfies a specified formula. 
    
    We consider problems at the intersection of these paradigms: for a given property $\phi$, we say that a digraph $G$ is $\phi$-dynamic if $G$ is the interaction graph of some Boolean network $f$ such that $D_f \models \phi$. The corresponding decision problem $\phi$-\textsc{Dynamic} asks whether $G$ is $\phi$-dynamic.
    Problems of this form are well-studied in the Boolean networks community. For example, it is a well-known result that, letting $\phi$ be ``No fixed points'' (i.e. $\nexists x : x \to x$), $\phi$-\textsc{Dynamic} is equivalent to the structural digraph problem \textsc{Non-acyclic}, which is solvable in linear time.  
    
    We establish, for the first time, properties $\phi$ for which $\phi$-\textsc{Dynamic} is computationally hard. These include the ``Unique fixed point'' property $\exists! x: x \to x$, and a range of properties concerning the periodic points of $f$ (i.e. vertices of $D_f$ which lie on a cycle).
    In so doing, we introduce, to our knowledge, the first naturally-defined languages on digraphs which are complete for $\NEXPTIME$. Typically, languages complete for this class either take as input a circuit which is a succinct representation of some other object, or include a binary representation of an integer. $\phi$-\textsc{Dynamic} does neither. 

    More widely, we exhibit an infinite family of natural properties $\phi$ such that $\phi$-\textsc{Dynamic} is $\NEXPTIME$-complete, which we call \textsc{Period Piece}. As a special case, these include all properties of the form ``Unique cycle of length $k$'' and ``Exactly $k$ fixed points'' for any $k \ge 1$. 
    On the other hand, we show that if the input digraph has bounded in-degree, then for those same problems $\phi$-\textsc{Dynamic} collapses down to $\NPoNP$, the second level of the polynomial hierarchy, though it remains $\NPoNP$-complete.  
\end{abstract}

\newpage

\noindent
\textit{Full proofs are deferred to the %
appendix. }
\medskip\\
\noindent\textbf{On the use of AI tools:}
\textit{No GenAI tools were used in the preparation of this manuscript, nor in the development of the theory presented herein. 
If the article is accepted, a commercially available model will be used to check for any missing issues or bugs prior to uploading the camera-ready version. 
This disclosure will be updated to describe this use.}

\newpage

\section{Introduction}

\subsection{Background} \label{subsection:background}

Boolean networks are a fundamental framework for modelling complex discrete systems, with prominent applications in biology, ecology, and the social sciences~\cite{Gaucherel_2017,Grabisch_2013,Montagud22,Poindron_2021,Rozum_2024}.
A Boolean network represents a network of $n$ interacting entities, where each entity $i \in V = \{1, \dots, n\}$ has a Boolean state $x_i \in \B$, which evolves over time according to a deterministic function $f_i(x_1, \dots, x_n)$ of the current states of the entities.
Mathematically, a Boolean network is simply a mapping $f : \B^n \to \B^n$, which takes an overall configuration of states $x = (x_1, \dots, x_n)$ as input and returns $f(x) = (f_1(x), \dots, f_n(x))$.
The \EM{dynamics} of $f$ is the unlabelled digraph with vertex set $\B^n$, and an arc from each configuration $x$ to $f(x)$. 

The main feature of a Boolean network is its interaction graph. It is a digraph on $V$, with an arc from $i$ to $j$ whenever the local function $f_j$ depends essentially
on the state $x_i$. The interaction graph then represents the architecture of the network, as it captures which components interact which each other. A major problem in the theory of Boolean networks is to understand how the interaction graph, i.e. the structure of the network, influences its dynamics (see \cite{G20} for a review). This question is of interest not only from a theoretical perspective, but also from a practical one, since in many applications, particularly in biology, the interaction graph is known or can be inferred, whereas the underlying dynamics remains largely unknown \cite{N15}. A fixed point of the Boolean network $f$ is a configuration $x$ such that $f(x) = x$. Fixed points are essential, as they represent the stable states of the system, and as such have received significant attention. It is no surprise then that the seminal result in the area, namely Robert's theorem \cite{R80}, is about fixed points: if the interaction graph of $f$ is acyclic, then $f$ converges to a unique fixed point.

Work on the influence of the interaction graph on dynamics broadly follows three avenues. The first one, more classical, aims at inferring properties of the dynamics given a structural property of the interaction graph. Robert's theorem started that avenue; another classical example is the feedback bound: if the interaction graph has a feedback vertex set of size $\tau$, then $f$ has at most $2^\tau$ fixed points \cite{R07}. Recently, \cite{ABGGPRT26} considered the dynamics of networks with an interaction graph with bounded in-degree. A second, more recent avenue of study can be seen as the converse to the first: given a particular dynamics, it aims at inferring the interaction graphs of the Boolean networks with that dynamics. For instance, there exist dynamics on $\B^n$ that can only be produced by dense interaction graphs \cite{bppmr2023}, while other dynamics can be produced by any non-empty interaction graph on $n$ vertices \cite{BPR26}.

Recently, a third avenue of study relating the interaction graph to the dynamics was initiated in \cite{BDPR22} in the case of signed digraphs. In many applications of Boolean networks, some components act as activators/inhibitors of other components. As such, the local functions $f_i$ are monotonically increasing or decreasing in their variables. These monotonic relations are encapsulated by the signed interaction graph: the signed digraph on $V$ with a positive arc (negative arc, respectively) from $i$ to $j$ if $f_j$ is a non-decreasing (non-increasing, respectively) function of $x_i$. The study of the signed interaction graph is an important part of the theory of Boolean networks, which typically followed the first avenue of study mentioned above (for instance, the so-called positive feedback bound in \cite{A08} is the refinement of the feedback bound to positive feedback vertex sets). However, \cite{BDPR22} considers the following problem: given a signed digraph $D$, what is the complexity of deciding whether there exists a Boolean network with signed interaction graph $D$ that satisfies a particular dynamical property? The authors in \cite{BDPR22} focus on the complexity of deciding dynamical properties of the form ``Fewer than $k$ fixed points'' or ``At least $k$ fixed points'', where $k\geq 1$. Remarkably, the first problem is $\NEXPTIME$-complete for general signed digraphs and $\NPoNP$-complete for signed digraphs of bounded in-degree. The second problem is polynomial for $k=1$ and $\NP$-complete for $k\geq 2$, regardless of whether the in-degree is bounded.

\subsection{Contributions} \label{subsection:contributions}

\noindent \textbf{$\phi$-dynamic graphs.}
This paper is the first to explore this third avenue of study in the case of unsigned digraphs. Explicitly, for a given dynamical property $\phi$, we say that a digraph $G$ is \EM{$\phi$-dynamic} if there exists a Boolean network $f$ with interaction graph $G$ that satisfies the dynamical property $\phi$. 
We then consider the complexity of the following decision problem: 

\begin{quote}
\decisionpb
	{$\phi$-Dynamic}
	{a digraph $G$.}
	{is $G$ $\phi$-dynamic?}
\end{quote}

We also consider the same decision problem restricted to digraphs with bounded in-degree:
\begin{quote}
\decisionpb
	{$d$-$\phi$-Dynamic}
	{a digraph $G$ with $\Delta^-(G)\leq d$.}
	{is $G$ $\phi$-dynamic?}
\end{quote}

The relevance to practical problems from biology, where $G$ is known and the dynamics of $f$ is unknown but of interest, is that a negative answer to $\phi$-\textsc{Dynamic} entails that $f$ does not have the dynamical property $\phi$; a positive answer merely fails to exclude this possibility.
For many classical dynamical properties $\phi$, the problem $\phi$-\textsc{Dynamic} is in $\Poly$. Here are a few examples, related to the number of fixed points:
\begin{itemize}
    \item ``At least one fixed point.'' Constant time, as any digraph $G$ is a yes-instance. 
    (Such a network is easy to construct explicitly: let $f_i( x ) = 0$ if $i$ is a source, and $f_i( x ) = \bigwedge_{j \to i} x_j$ otherwise, so the all-zero configuration is a fixed point; there may be others.) %

    This problem is polynomial for signed digraphs \cite{BDPR22} (it reduces to the even dicycle problem, which has a highly non-trivial polynomial solution \cite{McC00,RST99}).

    \item ``At most one fixed point.'' Again, constant time, as any digraph $G$ is a yes-instance.
    (Since the average number of fixed points is exactly $1$ \cite{G20}.)

    This problem is $\NEXPTIME$-complete for signed digraphs \cite{BDPR22}.

    \item ``No fixed points.'' Polynomial, as $G$ is a yes-instance if and only if it is not acyclic.
    (If $G$ is acyclic, then Robert's theorem implies that it is a no-instance. Otherwise, one can explicitly construct a fixed point free Boolean network, see \cite{G20}.)

    This problem is $\NEXPTIME$-complete for signed digraphs \cite{BDPR22}.

    \item ``At least two fixed points.'' Polynomial, as $G$ is a yes-instance if and only if it is not acyclic. 
    (Directly follows from the previous two items.) 
    
    This problem is $\NP$-complete for signed digraphs \cite{BDPR22}.
\end{itemize}
These few examples highlight a strong difference in the complexity landscape between signed and unsigned interaction graphs. Moreover, we see that the same digraph $G$ can be both $\phi$-dynamic and $\neg \phi$-dynamic, e.g. any non-acyclic digraph when $\phi$ is ``At least one fixed point'' and hence $\neg \phi$ is ``No fixed points.''

In this paper, we consider natural properties $\phi$ which can be expressed in FO$( = , \to )$ using the equality sign and the $\to$ relation from the dynamics: $x \to y$ denotes %
that $f(x) = y$.
We note that all the properties above indeed belong to this language; e.g. ``At least one fixed point'' corresponds to $\exists x (x \to x)$.

\medskip 
\noindent \textbf{Complexity of UFP.} We first consider the {\sc Unique Fixed Point (UFP)} problem, i.e. $\phi$-\textsc{Dynamic} with $\phi$ being ``Exactly one fixed point'' or $\exists x \forall y (x \to x) \land ( (y \to y) \implies (x = y) )$ in FO. When the property $\phi$ is given explicitly, we omit the -\textsc{Dynamic} suffix from the decision problem name. 

\begin{quote}
\decisionpb
	{($d$-)Unique Fixed Point (\UFP)}
	{a digraph $G$ (with $\Delta^-(G)\leq d$).}
	{is $G$ UFP-dynamic?}
\end{quote}

We prove the hardness of $\UFP$, thus establishing, for the first time, a property $\phi$ for which $\phi$-\textsc{Dynamic} is computationally hard. More precisely, we obtain:

\begin{theoremrep}\label{thm:UFP}
{\UFP} is $\NEXPTIME$-complete and {\dUFP} is $\NPoNP$-complete for all $d\geq 2$. 
\end{theoremrep}

The hardness of $\UFP$ is surprising in several ways. 
First, as seen above, many related $\phi$-\textsc{Dynamic} problems are in $\Poly$. 
Second, as we shall prove in this paper, $\UFP$ can be decided in linear time when the input digraph is strongly connected.
Third, the complexity of the Unique Fixed Point problem is also specific to the Boolean case. In the non-Boolean case, for any digraph $G$ and any $q \ne 2$, the property is universal: if $q \ge 3$ then there is always a function $f : \{0, \dots, q-1\}^n \to \{0, \dots, q-1\}^n$ with interaction graph $G$ and a unique fixed point \cite{GR16}, and if $q = 1$ then every function trivially has a unique fixed point.

\medskip

\noindent \textbf{Further dynamical properties.} We then consider a wide family of dynamical properties which we call Period Piece$(\Lambda, \Sigma)$ and define as follows. 
Let $\Lambda$ be a non-empty finite set of positive integers and $\Sigma$ be a non-empty finite set of finite permutations $\sigma$ such that all the cycles of $\sigma$ have lengths in $\Lambda$. 
Let $\Per_\Lambda( f )$ be the set of periodic points of $f$ on a cycle whose length belongs to $\Lambda$. 
We say that $f$ is \EM{Period Piece$(\Lambda, \Sigma)$} if $f$ restricted to $\Per_\Lambda( f )$ is conjugate to some $\sigma \in \Sigma$.

\begin{quote}
\decisionpb
	{($d$-)$\PeriodPiece(\Lambda, \Sigma)$}
	{a digraph $G$ (with $\Delta^-(G)\leq d$).}
	{is $G$ Period Piece$(\Lambda, \Sigma)$-dynamic?}
\end{quote}

The Period Piece$(\Lambda, \Sigma)$ property includes many dynamical properties about periodic points. 
For instance, if $\Lambda = \{ 1 \}$ and $\Sigma = \{ \id_1, \dots, \id_k \}$, it is equivalent to ``Between $1$ and $k$ fixed points.'' 
Moreover, if $\Lambda = \{ \ell \}$ and $\Sigma = \{ C_\ell \}$ where $C_\ell$ is a cyclic permutation on $\ell$ elements, it is equivalent to ``Unique cycle of length $\ell$'';
on the other hand, if $\Lambda=\{1,\ldots, \ell\}$ and $\Sigma=\{C_\ell\}$ then it is equivalent to ``Unique cycle of length $\ell$ and no cycle of length less than $\ell$''.

In addition to UFP's appeal as a natural problem for Boolean networks, it acts as the fulcrum of this work: indeed, our hardness proofs for $d$-\PeriodPiece{} and \PeriodPiece{} are both reductions from ($d$-)UFP.
We prove the hardness of $\PeriodPiece(\Lambda, \Sigma)$ for all possible $\Lambda$ and $\Sigma$. Let $\N^* = \{ 1, 2, 3, \dots \}$ and denote by $S_\Lambda$ the set of finite permutations $\sigma$ such that all the cycles of $\sigma$ have lengths in $\Lambda$. 

\begin{theoremrep} \label{theorem:perlambda}
For every finite non-empty sets $\Lambda \subseteq \N^*$ and $\Sigma \subseteq S_\Lambda$, $\PeriodPiece(\Lambda, \Sigma)$ is $\NEXPTIME$-complete and $d$-$\PeriodPiece(\Lambda, \Sigma)$  is $\NPoNP$-complete for all $d \ge 2$.
\end{theoremrep}

 \medskip

\noindent \textbf{Comparison with signed digraphs.} We would like to emphasize the distinction between our work and that in \cite{BDPR22}. Signed digraphs are essentially easier to handle in order to derive complexity lower bounds because of the nature of the local functions and of their sheer numbers. As an illustrative example, consider a vertex $u$ of the signed interaction graph with a positive in-neighbor $v$ and a negative in-neighbor $w$. Then there are only two possible choices for the local function $f_u$: either $f_u( x_v, x_w ) = x_v \land \neg x_w$ or $f_u( x_v, x_w ) = x_v \lor \neg x_w$. Both possible local functions are canalyzing: for $f_u( x ) = x_v \land \neg x_w$, if $x_v = 0$, then $f_u( x ) = 0$; if $x_w = 1$, then $f_u( x ) = 0$ (similar canalyzations occur for $f_u( x_v, x_w ) = x_v \lor \neg x_w$). These canalyzations make it possible to track the values of components in any fixed point of $f$. In contrast, when dealing with unsigned digraphs, if $u$ has two in-neighbors $v$ and $w$, then there are 10 choices for $f_u$, some of which are not canalyzing, like $f_u( x_v, x_w ) = x_v + x_w$. As a result, it is much more difficult to handle the wealth and variety of Boolean networks that have a given interaction graph, compared to the more restricted and well-behaved class of networks that have a given signed interaction graph. 
This in turn requires us to develop significantly different tools for our main results compared to those in \cite{BDPR22}: in contrast to the proof techniques in that work, where an explicit hardness reduction is given directly, in our setting we build a theory of so-called ``odd-systems'' to support our constructions. 

\medskip

\noindent
\textbf{Contribution to the $\NEXPTIME$ landscape.} 
In addition to being natural for the Boolean network community and justified from the perspective of biological applications, the {\PeriodPiece} problem is, up to the authors' knowledge, the first naturally-defined language on digraphs which is complete for $\NEXPTIME$.
Typically, $\NEXPTIME$-complete problems either take as input a circuit which is a succinct representation of some object \cite{P94}, or include the binary representation of some integer \cite{GI09-tiling-NEXP}.  
We note that it is trivial to construct an \textit{unnaturally}-defined problem on digraphs which is $\NEXPTIME$-complete: for example, a variant of \textsc{Succinct SAT} where the input circuit is given (via some encoding) as a digraph.
The $\phi$-{\sc Dynamic} problem, on the other hand, is a ``true'' digraph problem, with its input given explicitly as a tuple $(V,E)$.
{\PeriodPiece} thus constitutes a vast family of $\NEXPTIME$-complete natural languages on digraphs.

\newpagexorapx

\section{Preliminaries}
\subsection{Definitions and notations}

A \EM{configuration} is a function $x$ from a finite set (of components) $V$ to $\B$. Such a configuration $x$ is a configuration \EM{on} $V$ and given $v\in V$ we write $x_v$ instead of $x(v)$. The set of configurations on $V$ is denoted $\B^V$. We perform all sums over $\B$ modulo $2$. We say that $x$ is \EM{odd} if there is an odd number of $v\in V$ such that $x_v=1$, or in other words if $\sum_{v \in V} x_v = 1$; we say $x$ is \EM{even} otherwise. Given $I\subseteq V$, the restriction of $x$ on $I$ is denoted \EM{$x_I$}. By extension, for any set of configurations $X \subseteq \B^V$, we denote $X_I = \{x_I: x \in X\}$. A configuration $y$ \EM{extends} $x$, and $x$ is a \EM{subconfiguration} of $y$, if $x = y_I$ for some $I\subseteq V$. The configuration on $V$ which always returns $0$ is denoted $\ZERO_V$ or $\ZERO$ if $V$ is clear from the context. Given $I\subseteq V$, \EM{$e_I$} denotes the configuration on $V$ such that $(e_I)_v=1$ if $v\in I$ and $(e_I)_v=0$ otherwise; here $V$ will always be clear from the context. 

The vertex set of a digraph $G$ is denoted \EM{$V(G)$}. A digraph \EM{on} $I$ is a digraph $G$ with $V(G)=I$. We abusively write \EM{$\B^G$} instead of $\B^{V(G)}$ and say that a configuration in $\B^G$ is a configuration on $G$. Similarly, we write \EM{$\ZERO_G$} instead of $\ZERO_{V(G)}$. 

A \EM{Boolean network} is a mapping $f : \B^V \to \B^V$.
A Boolean network can be viewed as a tuple $f = ( f_v : v \in V )$ where $f_v( x ) = f( x )_v$ for any $v \in V$.
We use similar notations as for configurations, and denote, e.g. $f_I = ( f_v : v \in I )$ for any $I \subseteq V$.
The \EM{dynamics} of $f$ is the digraph $D_f$ with vertex set $\B^V$ and arcs $\{ x \to y : f(x) = y \}$.
A \EM{fixed point} of $f$ is a configuration $x \in \B^V$ such that $f(x) = x$.

The \EM{interaction graph} of the Boolean network $f$ is the digraph $G_f$ on $V$ such that there is an arc from $i$ to $j$ if and only if $f_j$ depends essentially on $x_i$,  
i.e. there exists $x$ such that $f_j(x)\ne f_j(x + e_i)$. 
We say $f$ is a network \EM{over} $G_f$.
Therefore, each local function $f_j$ only depends on $x_N$, where $N$ is the in-neighborhood of $j$ in $G_f$.
We then use the shortcut notation $f_j( x ) = f_j( x_A )$ whenever $N \subseteq A$.
For a digraph $G$, the set of Boolean networks $f$ whose interaction graph is $G_f = G$ is denoted by $F( G )$.

A transformation is a function $f:X\to X$ where $X$ is a finite set, called the \EM{domain} of $f$. Two transformations $f,h$ with domains $X,Y$ are \EM{conjugate} if there exists a bijection $\pi:X\to Y$ such that $f=\pi^{-1}\circ h\circ \pi$. Equivalently, $f$ and $h$ are conjugate if and only if the digraphs associated with them, $D_f$ and $D_h$, are isomorphic. 
We say a property $\phi$ of Boolean networks is a \EM{dynamical property} if whenever $D_f \models \phi$, $D_g \models \phi$ for all $g$ conjugate to $f$. In other words, $\phi$ is a property of the isomorphism class of $D_f$. 

In this paper, we focus on dynamical properties $\phi$ which can be expressed in First Order logic. We consider the language $\mathrm{FO}( = , \to )$ with the equality and the assignment symbols, where $x \to y$ means that $f(x) = y$. We further consider the fragments of FO with bounded alternations: let $\mathrm{FO}\Sigma_i$ be the set of FO formulas expressible as the alternation $\phi \equiv \exists X^1 \forall X^2 \exists X^3 \dots Q X^i \psi( X^1, \dots, X^i )$ and $\mathrm{FO}\Pi_i$ be the set of FO formulas expressible as the alternation $\phi \equiv \forall X^1 \exists X^2 \forall X^3 \dots Q' X^i \psi( X^1, \dots, X^i )$. It is easily seen that $\mathrm{FO}\Sigma_i$ and $\mathrm{FO}\Pi_i$ are closed under conjunction and disjunction.

Also, we denote $\Ns$ the set $\N = \{1,2,3\ldots\}$ and we recall that for a permutation $\sigma$ of $k$ elements, $k$ is the \EM{degree} of $\sigma$.

\subsection{Basic results}

For any dynamical property $\phi$, and any digraph $G$, we say $G$ is \EM{$\phi$-dynamic} if there exists $f \in F(G)$ such that $D_f \models \phi$. We are particularly interested in the dynamical property ``Unique Fixed Point (UFP),'' which is expressed in $\mathrm{FO}\Sigma_2$ as $\exists x \forall y (x \to x) \land [ (y \to y) \implies (y = x) ]$. We begin by classifying the strongly connected UFP-dynamic digraphs.

\begin{theoremrep}\label{thm:strong-not-cycle-implies-ufp}
A strongly connected digraph is UFP-dynamic if and only if it is not a cycle. 
\end{theoremrep}

\begin{proofsketch}
If $G$ is a cycle, then any $f \in F(G)$ has either zero or two fixed points \cite{DNS12}.
On the other hand, every strongly connected digraph can be obtained from a cycle by successive additions of ears \cite[Theorem 7.2.2]{BG08}. 
As such, we are able to recursively construct a network $f \in F(G)$ with a unique fixed point thanks to the ear decomposition of the digraph.  
\end{proofsketch}

\begin{proof}
If $G$ is a cycle, then any $f \in F(G)$ has either zero or two fixed points \cite{DNS12}. We now prove the converse.

Adding an ear in a directed digraph from a vertex $a$ to a vertex $b$, not necessarily distinct, consists either in adding an arc from $a$ to $b$ if it does not already exist, or in adding a path whose vertices are new, say from $a'$ to $b'$ (possibly $a'=b'$), together with an arc from $a$ to $a'$ and from $b'$ to $b$. In the resulting digraph, $b$ has a new in-neighbor, which is $a$ in the first case, and $b'$ in the second; in the second case $b'$ is linear and there is a path from $a$ to $b'$ whose internal vertices are linear. 

Every strongly connected digraph can be obtained from a cycle by successive additions of ears \cite[Theorem 7.2.2]{BG08}. Consequently, a strongly connected digraph $G$ which is not a cycle can be obtained from a strongly connected digraph $H$ by adding one ear, say from $a$ to $b$. Let $c$ be the corresponding new in-neighbor of $b$. Let $N_v$ be the in-neighbors of a vertex $v$ in $G$. Let $f\in F(G)$ be defined by
\begin{itemize}
\item
$f_v(x)=\bigwedge_{u\in N_v} x_u$ for all vertices $v\neq b$, and 
\item
$f_b(x)=\neg x_{c}\land \bigwedge_{u\in N_b\setminus \{c\}} x_u$.
\end{itemize} 
It is clear that $f(\ZERO)=\ZERO$. Suppose that $f(x)=x$. First assume that $x_a=0$. Since $H$ has a path from $a$ to $b$ (which does not contain $c$) we have $x_b=0$. Suppose now that $x_a=1$. If $a=c$ then $x_b=0$. Otherwise, there is a path from $a$ to $c$ whose internal vertices are linear, and thus $x_{c}=1$ (since $c$ is also linear), so that $x_b=0$. Thus, in every case, $x_b=0$. Since $G$ has a path from $b$ to all the other vertices, we deduce that $x=\ZERO$. Thus $\ZERO$ is the unique fixed point of $f$.  
\end{proof}

Given a digraph $G$, we denote by \EM{$F_\ZERO(G)$} the set of $f\in F(G)$ such that $f(\ZERO)=\ZERO$, and we denote by \EM{$UFP_\ZERO( G )$} the set of $f \in F( G )$ such that $\ZERO$ is the unique fixed point of $f$; by definition, $UFP_\ZERO( G ) \subseteq F_\ZERO( G ) \subseteq F( G )$.

\begin{lemmarep}\label{lem:UFP0}
Let $G$ be a digraph. Then $F_\ZERO(G)$ is not empty. Furthermore, if $G$ is UFP-dynamic, then $UFP_\ZERO(G)$ is non-empty. 
\end{lemmarep}

\begin{proof}
Let any $h\in F(G)$. Let $f:\B^G\to\B^G$ defined by $f(x)=h(x)+h(\ZERO)$. We easily check that $f\in F(G)$. Furthermore, $f(\ZERO)=h(\ZERO)+h(\ZERO)=\ZERO$. Let $h\in F(G)$ with a unique fixed point, say $y$. Let $f:\B^G\to\B^G$ defined by $f(x)=h(x+y)+y$. We easily check that $f\in F(G)$. Furthermore, if $f(x)=x$ then $h(x+y)+y=x$, that is, $h(x+y)=x+y$. Since $y$ is the unique fixed point of $h$, we deduce that $x=\ZERO$. Thus $\ZERO$ is the unique fixed point of $f$. 
\end{proof}

\newpagexorapx

\section{Complexity upper bounds}

We begin by establishing complexity upper bounds for our problems. In the case of $\phi$-\textsc{Dynamic} we are able to apply Courcelle's theorem to obtain containment in $\NEXPTIME$ for any property $\phi$ expressible in monadic second order logic (MSO2), which allows quantification over sets of vertices and over sets of edges and thus is strictly more expressive than, e.g., $\mathrm{FO}(=,\to)$. For the case of $d$-$\phi$-\textsc{Dynamic} we restrict ourselves to properties expressible in First Order logic. These upper bounds apply to all properties $\phi$ which we consider. 

\begin{theoremrep}\label{thm:in-nexptime}
    Let $\phi$ be a property expressible in MSO2. Then $\phi$-\textsc{Dynamic} is in $\NEXPTIME$. 
\end{theoremrep}

\begin{proofsketch}    
The proof constructs a Turing machine $M$ with runtime $O(2^{\mathrm{poly}(n)})$ such that the language $L$ of $\phi$-dynamic digraphs is given by $L = \{w | \exists c : M \textrm{ accepts } \langle w,c \rangle\}$,
    where $w$ encodes the graph $G$ and $c$ encodes the network $f \in F(G)$.
    In order to decide whether $D_f \models \phi$, we realize that $D_f$ has bounded treewidth and hence apply Courcelle's theorem.
\end{proofsketch}

\begin{proof}
    Recall that the decision problems we are interested in can be expressed as languages of the following form:
    $$
    L_\phi = \{G | \exists f \in F(G) : f\models \phi\}
    $$
    where $\phi$ is a formula in MSO2 and $G$ is a digraph.
    
    Recall that a language $L$ is in $\NEXPTIME$ if and only if there is a Turing machine $M$ with runtime $O(2^{\mathrm{poly}(n)})$ such that:
    $$
    L = \{w | \exists c : M \textrm{ accepts } \langle w,c \rangle\}
    $$
    where $c$ is called the certificate (we refer the interested reader to, e.g., \cite{P94}) and $n$ is the size of the input $w$ (in our case the number of vertices in $G$).

    For our purposes, $w$ is always a digraph $G$. The intention is that $c$ shall represent a function $f: \{0,1\}^{V(G)} \to \{0,1\}^{V(G)}$. 

    We shall evidence that for any MSO2 sentence $\phi$, $L_\phi$ is in $\NEXPTIME$ by describing the machine $M_\phi$ which yields $L_\phi$ when plugged into the definition above. 
    
    As a first step, $M_\phi$ verifies that the function $f$ (as encoded in $c$) is indeed in $F(G)$ by computing $G_f$. This is straightforwardly doable in exponential time -- checking whether $f_i$ depends essentially on $x_j$ (and thus whether $G_f$ has an arc from $j$ to $i$) can be done by brute force in time $O(2^n)$ for each $i,j \in [n]$, so verifying that $(i,j) \in E(G) \iff (i,j) \in E(G_f)$ is doable in time $O(n^2\cdot 2^n)$. (If $c$ does not encode a function $f\in F(G)$ then $M_\phi$ rejects.)
    
    Next, $M_\phi$ decides whether $D_f \models \phi$. The functional digraph $D_f$ of $f$ can be  represented explicitly in space $O(n\cdot 2^n)$ (such a representation would be computed in the previous step). Also, $D_f$ has bounded treewidth: each connected component of $D_f$ contains at most one cycle, and thus the component (and $D_f$ as a whole) has treewidth at most $2$. 
    
    Consequently, by applying Courcelle's theorem (see \cite{DF13}, chapter 13) $M_\phi$ can decide whether $D_f \models \phi$ in time $|D_f| \cdot g(\mathrm{tw}(D_f)+|\phi|)) \in O(|D_f|)$ for some function $g : \N \to \N$. Since $|D_f|=2^n$, for any fixed $\phi$ the runtime is simply $O(2^{\mathrm{poly}(n)})$. The result follows.
\end{proof}

\begin{theoremrep}
    Let $\phi$ be a property in $\mathrm{FO}\Sigma_i$. Then $d$-$\phi$-{\sc Dynamic} is in $\Sigma_i^P$. 
\end{theoremrep}

\begin{proofsketch}
We first note that when the in-degree is bounded by $d$, any Boolean network $f \in F(G)$ can be represented as a collection of $n$ lookup tables each of size $O(2^d)$.
Thanks to that representation, one can evaluate $f$ at $x$ in linear time $O( n \cdot d )$ on input $x$.

Let $\phi$ be represented by $\exists X^1 \forall X^2 \exists X^3 \ldots Q X^i \psi(X^1,X^2,X^3,\ldots,X^i)$ where $\psi$ is quantifier-free. The proof constructs a polynomial Turing machine $M$ such that the following characterizes the language $L$ of $\phi$-dynamic digraphs with in-degree at most $d$:
    $$
    L = \{w | \exists c^1 \forall c^2 \exists c^3 \ldots Q c^i : M \textrm{ accepts } \langle w, c^1, c^2, \ldots, c^i\rangle\},
    $$
    where $w$ encodes the graph $G$, $c^1$ encodes a tuple $\langle f, a^1 \rangle$ and for $1<j\le i$, $c^j$ encodes $a^j$, and each $a^j$ an assignment to $X^j$.
\end{proofsketch}

\begin{proof}
    We use $Q^i$ to denote $\exists$ if $i$ is odd and $\forall$ otherwise.     
    Since $\phi$ is in $\mathrm{FO}\Sigma_i$ we shall represent it as
    $$
    \exists X^1 \forall X^2 \exists X^3 \ldots Q^i X^i \psi(X^1,X^2,X^3,\ldots,X^i) 
    $$
    where $\psi$ is quantifier-free and each $X^j=(x^{j,1},\ldots, x^{j,\ell(j)})$ is a finite sequence of $\ell(j)$ variables.
    
    A language $L$ is in $\Sigma_i^P$ if and only if there is a polynomial Turing machine $M$ such that the following characterizes $L$ (\cite{P94}, Chapter 17):
    $$
    L = \{w | \exists c^1 \forall c^2 \exists c^3 \ldots Q^i c^i : M \textrm{ accepts } \langle w, c^1, c^2, \ldots, c^i\rangle\}.
    $$
    
    Since we are concerned with the complexity of $d$-$\phi$-\textsc{Dynamic} for some $d\in \Ns$, $\phi \in \mathrm{FO}\Sigma_i$, which is a subset of the digraphs of maximum in-degree $d$, for us $w$ is always a digraph $G$ of maximum in-degree $d$. 

    The key claim for this proof is the following:

    \begin{claim}\label{clm:in-deg-2-linear-eval}
        Let $G$ a digraph of maximum in-degree $d$ on $n$ vertices and $f \in F(G)$. 
        Then there is an algorithm $\mathcal{A}_f$ of size $O(n \cdot 2^d)$ which returns $f(x)$ in time $O(n\cdot d)$ on input $x$.
    \end{claim}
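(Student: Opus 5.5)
The plan is to represent $f$ locally, one lookup table per vertex. Since $f\in F(G)$, each local function $f_j$ depends only on $x_{N_j}$, where $N_j$ is the in-neighbourhood of $j$ in $G$ and $|N_j|\le d$. So for each vertex $j$ I fix an ordering $u_1,\dots,u_{k_j}$ of $N_j$ (with $k_j\le d$). I then store a table $T_j$ of $2^{k_j}\le 2^d$ bits, where entry $T_j[b]$ is the value $f_j(x)$ for any $x$ with $(x_{u_1},\dots,x_{u_{k_j}})=b$. This is well defined because $f_j$ depends only on these coordinates.

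The algorithm $\mathcal{A}_f$ consists of the list of the $n$ pairs $(N_j,T_j)$. Each pair takes $O(d\log n)$ bits for the neighbour list and at most $2^d$ bits for the table. Treating vertex indices as unit-size words, as in the statement's accounting, the total size is $O(n\cdot 2^d)$.

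Evaluation on input $x$ proceeds vertex by vertex. For each $j$, read the $k_j\le d$ bits $x_{u_1},\dots,x_{u_{k_j}}$ and assemble them into an index $b$ with $O(d)$ operations. Then look up $T_j[b]$ in constant time, using random access into a table of size $2^d$; since $d$ is a constant, even a sequential scan costs only $O(2^d)=O(1)$. Output this bit as $f_j(x)$. Summed over all $j$, this gives time $O(n\cdot d)$. Correctness is immediate from the definition of $T_j$.

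The only point needing care is the computational model: the bound $O(n\cdot d)$ assumes constant-time access to $x_u$ and to table entries. On a multitape Turing machine there is an extra polynomial overhead for seeking. I would remark that this overhead is harmless, since for the subsequent use in the $\Sigma_i^P$ argument any polynomial bound suffices. Everything else is routine.
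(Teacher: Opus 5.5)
Your proposal is correct and takes essentially the same approach as the paper: one lookup table with at most $2^d$ entries per vertex, indexed by the configuration on its in-neighbourhood, and evaluation by reading the at most $d$ in-neighbour bits and indexing, for $O(n\cdot d)$ in total. The only difference is that the paper reads the in-neighbour sets off the input digraph $G$ rather than storing them with the tables, which sidesteps your $O(d\log n)$ word-size accounting; this does not affect the argument.
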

    \begin{claimproof}
        Let $f(x)=(f_1(x),\ldots, f_n(x))$. Then for each $j$, $f_j$ is fully described by a lookup table $\mathrm{LUT}_j$ with $2^d$ entries, indexed by $\{0,1\}^{N^-(j)}$. Consider the algorithm $\mathcal{A}_f$ which has access to each $\mathrm{LUT}_j$ (using space $2^d \cdot n$) and, on input $x$:
        \begin{enumerate}
            \item Produces the sets $N^-(j)$ for each $j$ in time $|E(G)|\le n \cdot d$. 
            \item Computes $x_{N^-(j)}$ for each $j$ in time $O(d)$ each and $O(n \cdot d)$ in total.
            \item Returns $(\mathrm{LUT}_1[x_{N^-(1)}], \ldots, \mathrm{LUT}_n[x_{N^-(n)}])$.
        \end{enumerate}
        Then $\mathcal{A}_f$ satisfies the conditions of the lemma.
    \end{claimproof}

    The intention is that $c^1$ encodes a tuple $\langle f, a^1 \rangle$ and for $1<j\le i$, $c^j$ encodes $a^j$, with $f$ given as (an encoding of) the algorithm $\mathcal{A}_f$ as in Claim \ref{clm:in-deg-2-linear-eval} and each $a^j=(a^{j,1},\ldots,a^{j,\ell(j)})$ an assignment to $X^j$.  

    As for the proof of Theorem \ref{thm:in-nexptime}, we shall describe a machine $M_{d,\phi}$ which defines the language $d$-$\phi$-\textsc{Dynamic} when plugged into the definition above. On input $\langle w, c^1, \ldots, c^i\rangle$, $M_{d,\phi}$ first verifies that each element of the input encodes the appropriate object(s); for example, if $w$ does not encode a digraph $G$ of maximum degree at most $d$ then $M_{d,\phi}$ rejects early. 
    Likewise if $c^j$ does not encode an assignment $a^j$ to the (existentially-quantified) variables in $X^j$ for some odd $j$.
    For even $j$, $c^j$ is universally quantified and we do the opposite: if $c^j$ does not encode an assignment $a^j$ to the variables in $X^j$ then $M_{d,\phi}$ accepts. 
    Thus, we may henceforth assume that the input is a tuple of the form described above, and make use of the notation we had introduced to explain how $M_{d,\phi}$ processes the input.

    $M_{\phi,d}$ makes a linear pass over the quantifier-free formula $\psi$. Each variable in $\psi$ is $x^{j,k}$ for some $j,k \in \Ns$, and has corresponding assignment $a^{j,k} \in \{0,1\}^{V(G)}$. To simplify the presentation of the proof, we consider variables $x,y$ in $\psi$ which are assigned $u,v$ respectively. Thus two types of primitives can be encountered during the linear pass:
    \begin{itemize}
        \item For a primitive $(x = y)$, the value \textrm{true} is substituted by $M_{\phi,d}$ if $x=y$ and \textrm{false} is substituted otherwise.
        \item For a primitive $(x \to y)$, $f(x)$ is evaluated using $\mathcal{A}_f$ in time $O(n \cdot 2^d)$ as a subroutine in $M_{\phi,d}$ and the value \textrm{true} is substituted if $f(x)=y$ and \textrm{false} is substituted otherwise.
    \end{itemize}
    Evaluating whether $\psi$ is satisfied by the given assignment is thus feasible in time $O(|\psi|\cdot n \cdot 2^d)$. 
    
    Correctness of the algorithm is straightforward. If there exists some $f\in F(G)$ such that $D_f \models \phi$ then 
    $M_{d,\phi}$ accepts $\langle G, \langle \mathcal{A}_f, a^1 \rangle, a^2, \ldots, a^i\rangle$ exactly according to the characterization above. 
    On the other hand, it is clear that if $M_{d,\phi}$ accepts a graph $G$ then there exists some $f \in F(G)$ such that $D_f \models \phi$.

    The result follows.
\end{proof}

We state the following corollary to make explicit the result when $i\in \{1,2\}$:

\begin{corollary}\label{cor:in-np-or-nponp}
    $d$-$\phi$-\textsc{Dynamic} is in $\NP$ (resp. $\NPoNP=\Sigma_2^P$) for any $\phi$ in $\mathrm{FO}\Sigma_1$ (resp. $\mathrm{FO}\Sigma_2$). 
\end{corollary}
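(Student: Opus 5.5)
The corollary should be obtained by specialising the preceding theorem to $i=1$ and $i=2$, together with the standard identities $\Sigma_1^P=\NP$ and $\Sigma_2^P=\NPoNP$.

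First, take $\phi\in\mathrm{FO}\Sigma_1$, i.e. $\phi\equiv\exists X^1\,\psi(X^1)$ with $\psi$ quantifier-free. The preceding theorem with $i=1$ puts $d$-$\phi$-\textsc{Dynamic} in $\Sigma_1^P$. Concretely, the certificate $c^1=\langle\mathcal{A}_f,a^1\rangle$ has polynomial size, since $d$ is fixed and so $O(n\cdot 2^d)$ is linear in $n$, and each assignment in $a^1$ is a configuration on $n$ bits. It is checked in polynomial time by Claim~\ref{clm:in-deg-2-linear-eval} and the linear pass over $\psi$. Since $\Sigma_1^P=\NP$, this gives the first statement.

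Next, take $\phi\in\mathrm{FO}\Sigma_2$, i.e. $\phi\equiv\exists X^1\forall X^2\,\psi(X^1,X^2)$. The theorem with $i=2$ gives membership in $\Sigma_2^P$. By the oracle characterisation of the polynomial hierarchy, $\Sigma_2^P=\NP^{\Sigma_1^P}=\NPoNP$, which is the second statement.

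There is no genuine obstacle here. The only point to check is that $d$ is a constant, so the $2^d$ factors in the size of $\mathcal{A}_f$ and in the evaluation time stay polynomial in $n$. Since $|\phi|$ is also fixed, the number of quantified configurations is constant, and every certificate has polynomial length.
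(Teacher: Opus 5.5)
Your proposal is correct and matches the paper: the paper presents this corollary as simply making explicit the cases $i\in\{1,2\}$ of the preceding theorem, using $\Sigma_1^P=\NP$ and $\Sigma_2^P=\NPoNP$. Your remark that $d$ and $|\phi|$ are constants is exactly what keeps the lookup-table certificate and its evaluation polynomial in that theorem's proof.
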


\newpagexorapx
\section{Cycle functionality} \label{section:cycle_functionality}

In this section, we investigate the dynamical behavior of a single chordless cycle in a larger interaction graph. 
We begin by proving a slight generalization of Robert's theorem, and then develop the concept of what we shall call the context of a cycle.

Let $G$ be a digraph. Robert \cite{R80,R86,R95} proved that if $G$ is acyclic then, for every $f\in F(G)$, $f^n$ is a constant function, where $n = |V|$ and $f^n = f \circ \cdots \circ f$, so that every configuration on $G$ converges to the same fixed point in at most $n$ iterations. We will often use a local version of this theorem, and for that we need some definitions. 

Let $I,J\subseteq V(G)$ and $f\in F(G)$. We say that $z\in\B^I$ is \EM{fixed by} $f$ if $f(x)_I=z$ for all $x\in\B^G$ extending $z$. For instance, if all the in-neighbors of vertices in $I$ are in $I$, and if $x$ is a fixed point of $f$, then $x_I$ is fixed by $f$. 
Let $z\in\B^I$ and $z'\in \B^J$. We say that $z$ \EM{converges} to $z'$ if, for every $x\in\B^G$ extending $z$, there exists $k \ge 1$ such that $f^k(x)$ extends $z'$. Let $\EMM{\epsilon}\in\B^\emptyset$ be the empty configuration; by convention, every configuration extends $\epsilon$, and thus $\epsilon$ is fixed by $f$. Robert's theorem is the extremal case of the following lemma where $I=\emptyset$ (which implies $z=\epsilon$) and $J=V(G)$. The proof of the following lemma is by induction on $|J|$, and is a natural generalization of Robert's theorem.

\begin{lemmarep}\label{lem:robert}
Let $G$ be a digraph and $I,J\subseteq V(G)$ disjoint. Let $f\in F(G)$ and suppose that $z\in\B^I$ is fixed by $f$. If $G[J]$ is acyclic and all the in-neighbors of vertices in $J$ are in $I\cup J$, then there exists a unique $z'\in\B^{I\cup J}$ extending $z$ which is fixed by $f$, and $z$ converges to~$z'$.
\end{lemmarep}

\begin{proof}
We proceed by induction on $|J|$. The claim immediately holds if $|J|=0$ (there is nothing to prove). Suppose that $|J|\geq 1$. Since $G[J]$ is acyclic, it has a sink, say $a$. Let $J'=J\setminus \{a\}$ and $K=I\cup J'$. Since $G[J']$ is acyclic and all the in-neighbors of vertices in $J'$ are in $K$, by induction, there exists a unique $z'\in\B^K$ extending $z$ which is fixed by $f$, and $z$ converges to~$z'$. Let $z''\in\B^{I\cup J}$ extending $z'$ defined by $z''_a = f_a( z' )$.

If $x\in\B^G$ extends $z'$ then $f(x)_{K}=z'$ and, since all the in-neighbors of $a$ are in $K$, $f_a(x)=f_a(\sum_{b\in K}x_be_b)=f_a(\sum_{b\in K}z'_be_b)=z''_a$. We deduce that $z''$ is the unique configuration on $I\cup J$ extending $z'$ which is fixed by $f$ and that $z'$ converges to $z''$. Hence $z$ converges to $z''$. To prove the uniqueness, suppose that $y\in\B^{I\cup J}$ extends $z$ and is fixed by~$f$. Since all the in-neighbors of vertices in $J'$ are in $K$, $y_K$ is fixed by~$f$. Since $y_K$ extends $z$, we obtain $y_K=z'$. So $y$ extends $z'$ and thus $y=z''$. 
\end{proof}

As mentioned above, if the interaction graph of a Boolean network is acyclic, then the network converges towards a fixed point. If, on the other had, the interaction graph is a cycle, then there are two cases; one if the cycle is so-called positive, and one if it is negative. 
Let us illustrate these with some examples. Let $H$ be a cycle graph with arcs $i$ to $i+1 \mod n$ for $0 \le i \le n-1$, and let $h \in F(H)$. 
Any local function $h_i$ is of the form $h_i( x ) = x_{i-1} + \alpha_i$; if $\alpha_i = 0$, then $h_i$ copies $x_{i-1}$, and if $\alpha_i = 1$, then $h_i$ copies the negation of $x_{i-1}$. 
Let $\alpha = (\alpha_0, \dots, \alpha_{n-1}) \in \B^n$. If $\alpha = \ZERO$, then $h$ is a cyclic shift; it has two fixed points, namely the constant configurations $\ZERO$ and $\ONE$. The dynamics of $h$ remain the same (up to conjugation) whenever $\alpha$ is even: in this case $h$ is a positive cycle. Conversely, if $\alpha$ is a unit configuration (i.e. there is a unique $i$ such that $\alpha_i = 1$), then $h$ has no fixed points. Once again, the same can be said whenever $\alpha$ is odd: in that case, $h$ is a negative cycle. Much more is known about cycles; we refer the reader to \cite{DNS12} for a comprehensive analysis of the dynamics of positive and negative cycles.

If a chordless cycle in a graph has in-neighbors outside of its vertex set, then given the configuration of its neighbors, the Boolean network restricted to the cycle, say $h$, can act in three ways: $h$ acts like a positive cycle, $h$ acts like a negative cycle, or $h$ acts like an acyclic network. It is easy to see how the first two cases can occur. For an example of the third one, suppose $v$ is the other in-neighbor of the vertex $i$ of the cycle, and let $h_i( x_{i-1}, x_v ) = x_{i-1} \land x_v$. Then when $x_v = 0$, we have $h_i( x_{i-1}, 0 ) = 0$ is a constant function. As such, the interaction graph of $h$ restricted to $x_v = 0$ becomes acyclic, as the arc from $i-1$ to $i$ is removed. We note that it only takes one arc to be removed for $h$ to become acyclic.

Let us now formalize the intuition. Let $G$ be a digraph and $f\in F(G)$. Suppose that $G$ contains a chordless cycle $C$, with vertex set $J$, and let $N = N^-(J) \setminus J$ be the in-neighbors of the vertices in $J$ that do not belong to $J$. Let us label the vertices of $C$ from $a_0$ to $a_n$ following the order; for convenience $a_{-1}=a_n$. A \EM{context} for $C$ is a configuration on a set $I$ of vertices in $G$ with $N\subseteq I$ and $I\cap J=\emptyset$. Given a context $z$ for $C$, one of the following three cases holds: 
\begin{itemize}
\item
There exists $0\le i\le n$ and $\beta_i$ such that $f_{a_i}(x) = \beta_i$ for all $x\in\B^G$ extending~$z$. In that case, we say that $z$ is an \EM{acyclic context} and that $a_i$ is \EM{fixed} to $\beta_i$ by the context.

\item
For all $0\leq i\leq n$, there exists $\alpha_i$ such that $f_{a_i}(x)=x_{a_{i-1}}+\alpha_i$ for all $x\in\B^G$ extending $z$. In that case, we say that $z$ is a \EM{functionality context}, and that $\alpha=(\alpha_0,\dots,\alpha_n)$ is the \EM{signature} of $C$ in that context. 
\begin{itemize}
    \item If $\alpha$ is even, then $z$ is a \EM{positive context}.

    \item If $\alpha$ is odd, then $z$ is a \EM{negative context}.
\end{itemize}
\end{itemize}
See Figure \ref{fig:context} for an illustration.

\begin{figure}[!ht]
    \centering
    \includegraphics[page=5, width=\linewidth]{ipefigs-dck.pdf}
    \caption{\textit{Top left:} a digraph $G$ with cycle $C = c \to d \to c$. \textit{Bottom left:} a Boolean network $f$ over $G$. \textit{Right:} the dynamics $D_f$, separated by context for $C$.}
    \label{fig:context}
\end{figure}

The last two points of the following lemma can be obtained from exhaustive results about the dynamics of isolated cycles \cite{DNS12}, but since the proof is short, we do not rely on these results. We shall exclusively apply this lemma in settings where no two cycles in the graph $G$ intersect in any vertex.

\begin{lemmarep}\label{lem:acyclic-positive}
Let $G$ be a digraph and $f\in F(G)$. Suppose that $G$ contains a chordless cycle $C$ of length $\ell$. Let $z\in \B^I$ be a context for $C$ fixed by $f$, and let $J$ be the vertex set of $C$. 
\begin{itemize}
\item
If $z$ is an acyclic context, there exists a unique configuration $z'$ on $I\cup J$ extending $z$ which is fixed by~$f$, and $z$ converges to $z'$. 
\item
If $z$ is a positive context, two distinct configurations on $I\cup J$ extending $z$ are fixed by~$f$.
\item
If $z$ is a negative context, then $z$ converges to a periodic configuration whose period is $q\cdot d$ for some $q,d \in \mathbb N^*$ such that $d\mid 2\ell$ and $d\nmid \ell$.
\end{itemize}  
\end{lemmarep}

\begin{proof}
Let $n=\ell-1$, and let us label the vertices of $C$ from $a_0$ to $a_n$ following the order. 

Suppose that $z$ is an acyclic context, and suppose, without loss, that $a_0$ is fixed to $\beta_0$ by the context. Let $z'$ be the configuration on $I' = I \cup \{ a_0 \}$ extending $z$ such that $z'_{a_0}=\beta_0$. Then $z'$ is fixed by $f$. Since $G[J\setminus \{ a_0 \}]$ is acyclic (actually a path), and since all the in-neighbors of vertices in $J\setminus\{ a_0 \}$ are in $I\cup J$, we deduce from Lemma~\ref{lem:robert} that there exists a unique configuration $z'$ on $I\cup J$ extending $z$ which is fixed by~$f$, and $z$ converges to $z'$. 

Suppose that $z$ is a positive context, and let $\alpha$ be the corresponding signature. Let $c\in\{0,1\}$, and let $z^c$ be the configuration on $I\cup J$ extending $z$ obtained by setting $z^c_{a_0}=c$ and successively $z^c_{a_i}=z^c_{a_{i-1}}+\alpha_i$ for $1\leq i\leq n$. Let $x$ be a configuration on $G$ extending $z^c$. Then $f(x)_I=z_I$ since $x$ extended $z$. Next, since $x_{a_n}=z^c_{a_n}=c+\alpha_1+\cdots+\alpha_n$, we have $f_{a_0}(x)=x_{a_n}+\alpha_0=z^c_{a_n}+\alpha_0=c+\alpha_0+\alpha_1+\cdots+\alpha_n$, and thus $f_{a_0}(x)=c=z^c_{a_0}$ since $\alpha$ is even. Finally, for $1\leq i\leq n$, we have $f_{a_i}(x)=x_{a_{i-1}}+\alpha_i=z^c_{a_{i-1}}+\alpha_i=z^c_{a_i}$. Thus $z^0$ and $z^1$ are fixed by $f$.
Suppose that $z$ is a negative context, and let $\alpha$ be the corresponding signature. Then $z$ converges to a periodic configuration $x$ of $f$, and $f^k(x)$ extends $z$ since $z$ is fixed by $f^k$ for all $k \ge 1$. Since $z$ is a functional context, we have $f_{a_0}(x)=x_{a_n}+\alpha_0$, and we obtain $f^{i+1}_{a_i}(x)=x_{a_n}+\alpha_0+\dots+\alpha_i$ by induction on $1\leq i\leq n$. Since the context $z$ is negative, $\alpha$ is odd, so $f^\ell_{a_n}(x) = x_{a_n} + 1$ and thus $f^{2\ell}_{a_n}(x)=x_{a_n}$. Let $p$ be the period of $x$, and let $d$ be the period of $a_n$ in the limit cycle containing $x$, that is, $d$ is the least positive integer such that $f^k_{a_n}(x)=f^{k+d}_{a_n}(x)$ for all $k\in\N$. It is clear that $d\mid p$. Furthermore, we deduce from $f^\ell_{a_n}(x)= x_{a_n} + 1$ and $f^{2\ell}_{a_n}(x)=x_{a_n}$ that $d\nmid \ell$ and $d\mid 2\ell$. 
\end{proof}

\newpagexorapx
\section{Loop-to-cycle digraphs}

A strong component of a digraph with a single vertex may be \EM{trivial} (if it has no edges) or a \EM{loop} (if it has a single edge).
A \EM{loop-to-cycle} digraph is a digraph $G$ such that 
\begin{itemize}
\item
the terminal strong components are either trivial or cycles of length at least $2$,
\item
the initial strong components are trivial or loops, and
\item
all remaining strong components are trivial.
\end{itemize}
The cycles of length at least two are called \EM{terminal cycles}. 

We will prove that it is computationally hard to decide whether such seemingly simple digraphs $G$ are UFP-dynamic. On the other hand, there is a simple necessary and sufficient condition characterizing when a network over such a digraph has a unique fixed point, which is why we focus on this class.

Let $G$ be a loop-to-cycle digraph. Let $I$ be the vertices that are not in the terminal cycles, and let $L$ be the vertices with a loop. Let $f\in F(G)$. Suppose that some configuration in $\B^I$ is fixed by $f$. Then, necessarily, $f_a(x)=x_a$ for all $a\in L$ and for all $x\in \{0,1\}^G$, so every $z\in \B^L$ is fixed by $f$. Furthermore, since $G[I\setminus L]$ is acyclic and all the in-neighbors of vertices in $I\setminus L$ are in $I$, by Lemma~\ref{lem:robert}, there exists a unique $z'\in \B^I$ extending $z$ which is fixed by $f$, and $z$ converges to $z'$. Since $z'$ is a context for all the terminal cycles, we call $z'$ a \EM{fixed global context} of $f$. There are thus $2^{|L|}$ fixed global contexts, and $x_I$ is a fixed global context for every periodic configuration $x$ of $f$. As an immediate application of Lemma~\ref{lem:acyclic-positive}, we have the following lemma.

\begin{lemmarep}\label{lem:global_context}
Let $G$ be a loop-to-cycle digraph. Let $I$ be the vertices that are not in the terminal cycles. Let $f\in F(G)$ and let $z$ be a fixed global context. 
\begin{itemize}
\item
If $z$ is an acyclic context for every terminal cycle, then $f$ has a fixed point $x$ such that $z$ converges  to $x$.
\item
If $z$ is an acyclic or positive context for every terminal cycle, and a positive context for at least one, then $f$ has at least two fixed points extending $z$.
\item
If $z$ is a negative context for at least one terminal cycle, say of length $\ell$, then $z$ converges to a periodic configuration whose period is $q\cdot d$ for some $q \ge 1$ and some $d$ such that $d\mid 2\ell$ and $d\nmid \ell$. In particular, $f$ has no fixed point extending~$z$. 
\end{itemize}
\end{lemmarep}
We then obtain the following characterization (the easy proof is not given since we will not use this characterization). A network $f\in F(G)$ has $x$ as a unique fixed point if, and only if: 
\begin{itemize}
\item
$x_I$ is a fixed global context, which is acyclic for every terminal cycle, and
\item
every fixed global context $z\neq x_I$ is a negative context for at least one terminal cycle. 
\end{itemize}
By examining simple examples, one observes that the simultaneous satisfaction of these two conditions depends in a delicate way on the structure of the paths from loops to terminal cycles. This suggests that deciding whether these conditions can hold may be challenging, a fact that we will confirm. 
\newpagexorapx
\section{Odd-systems}

An \EM{odd-system} is a couple $\S=(I,C)$ where $I$ is a finite set of variables and $C$ is a set of couples of non-empty subsets of $I$, called \EM{odd-constraints}. The \EM{size} of $\S$ is $|I|\cdot|C|$. 
Let $x\in\B^I$. An odd-constraint $(A,B)$ is \EM{satisfied} by $x$ if $x_A=\ZERO$ or $x_B$ is odd, and $x$ is a \EM{solution} of $\S$ if all its odd-constraints are satisfied by $x$. We denote by \EM{$\Sol(\S)$} the set of solutions of $\S$. Note that $\ZERO$ is always a solution, and we set $\EMM{\Sol^+(\S)} = \Sol(\S) \setminus \{ \ZERO \} $. An odd-constraint of the form $(I,\{a,b\})$ is called a \EM{diff-constraint} between $a$ and $b$; it is satisfied by $\ZERO$ and all $x\neq \ZERO$ such that $x_a\neq x_b$. Graphical notations for odd-systems are given in Figure \ref{fig:graphical}.

\begin{figure}
\resizebox{\textwidth}{!}{
\begin{tikzpicture}
\begin{scope}
    \node[draw,rounded corners=4pt,inner sep=3pt] (node) at (1.75,0){\small $\ZERO\lor$odd};
    \node (0') at (0,2){\small $\bullet$};
    \node (1') at (0.5,2){\small $\bullet$};
    \node (2') at (1,2){\small $\bullet$};
    \node at (1.5,2){\small $\cdots$};
    \node at (1.75,3.1){$I$};
    \node at (2,2){\small $\bullet$};
    \node (3') at (2.5,2){\small $\bullet$};
    \node (4') at (3,2){\small $\bullet$};
    \node (5') at (3.5,2){\small $\bullet$};
    
    \node at (0,2.4){$a_1$};
    \node at (0.5,2.4){$a_2$};
    
    \node at (2.5,2.4){$b_1$};
    \node at (3,2.4){$b_2$};
    \node at (3.5,2.4){$b_3$};
    
    \draw[rounded corners=10pt,black!20] (-0.5,1.5) rectangle (4,2.8);

    \path
    (0') edge (node)
    (1') edge (node)
    (node) edge[decorate,decoration={snake,amplitude=0.2mm,segment length=2mm}] (3')
    (node) edge[decorate,decoration={snake,amplitude=0.2mm,segment length=2mm}] (4')
    (node) edge[decorate,decoration={snake,amplitude=0.2mm,segment length=2mm}] (5');
    
    ;
\end{scope}
\draw (4.5,-0.5) to (4.5,3.5);

\begin{scope}[xshift=5.5cm]
    \node[draw,rounded corners=4pt,inner sep=3pt] (node) at (1.75,0){\small $\ZERO\lor$odd};
    \node (0') at (0,2){\small $\bullet$};
    \node (1') at (0.5,2){\small $\bullet$};
    \node (2') at (1,2){\small $\bullet$};
    \node at (1.5,2){\small $\cdots$};
    \node at (1.75,3.1){$I$};
    \node (s) at (2,2){\small $\bullet$};
    \node (3') at (2.5,2){\small $\bullet$};
    \node (4') at (3,2){\small $\bullet$};
    \node (5') at (3.5,2){\small $\bullet$};
    
    \node at (0,2.4){};
    \node at (0.5,2.4){$a$};
    
    \node at (2.5,2.4){};
    \node at (3,2.4){$b$};
    \node at (3.5,2.4){};
    
    \draw[rounded corners=10pt,black!20] (-0.5,1.5) rectangle (4,2.8);

    \path
    (0') edge[bend right=3] (node)
    (2') edge (node)
    (1') edge[bend left=5] (node)
    (3') edge (node)
    (4') edge[bend right=5] (node)
    (5') edge[bend left=3] (node)
    (s) edge (node)
    (node) edge[bend left=5,decorate,decoration={snake,amplitude=0.2mm,segment length=2mm}] (1')
    (node) edge[bend right=5,decorate,decoration={snake,amplitude=0.2mm,segment length=2mm}] (4');

    ;
\end{scope}
\begin{scope}[xshift=10.5cm]
    \node (=) at (-0.8,1.2) {$=$};

    \node[draw,rounded corners=4pt,inner sep=3pt] (node) at (1.75,0){\small $\neq$};
    \node (0') at (0,2){\small $\bullet$};
    \node (1') at (0.5,2){\small $\bullet$};
    \node (2') at (1,2){\small $\bullet$};
    \node at (1.5,2){\small $\cdots$};
    \node at (1.75,3.1){$I$};
    \node at (2,2){\small $\bullet$};
    \node (3') at (2.5,2){\small $\bullet$};
    \node (4') at (3,2){\small $\bullet$};
    \node (5') at (3.5,2){\small $\bullet$};
    
    \node at (0.5,2.4){$a$};
    \node at (3,2.4){$b$};
    
    \draw[rounded corners=10pt,black!20] (-0.5,1.5) rectangle (4,2.8);
    
    \path%
    (1') edge (node)
    (4') edge (node)
    ;
\end{scope}

\end{tikzpicture}
}
\caption{
\textit{Left:} graphical notation for an odd-constraint $(A,B)$ with $A=\{a_1,a_2\}$ and $B=\{b_1,b_2,b_3\}$. \textit{Right:} simplification of the graphical notation for a diff-constraint between $a$ and $b$, which is the odd-constraint $(I,\{a,b\})$. 
}
\label{fig:graphical}
\end{figure}

\subsection{Encoding circuits with odd-systems}

A \EM{circuit} is a Boolean network $f$ with the following properties. Its interaction graph $G$ is obtained from a non-empty  acyclic digraph, with maximum in-degree at most $2$, by adding a loop on each source. Vertices with a loop are the \EM{inputs} of the circuit. Non-input vertices are called \EM{gates}. Gates with a unique in-neighbor are \EM{not-gates}, and those with two in-neighbors are \EM{and-gates}. For every input $a$, we have $f_a(x)=x_a$. For every not-gate $a$ we have $f_a(x)=\neg x_b$, where $b$ is the in-neighbor of $a$, and for every and-gate $c$ we have $f_c(x)=x_a\land x_b$, where $a,b$ are the two in-neighbors of $c$. 
For technical reasons, we further require that $f$ contains at least one not-gate (possibly immediately preceded by another). This yields the property that $f$ does not fix $\ZERO$. 
Note that $f$ is completely defined by its interaction graph $G$. 

Let $V$ be the set of components of the circuit $f$, and let $L$ be the set of inputs of $f$. The \EM{size} of $f$ is $|V|$. An \EM{input configuration} $z$ is a configuration on $L$; it is obviously fixed by $f$. Furthermore, since $G\setminus L$ is acyclic, by Lemma~\ref{lem:robert}, $f$ has exactly one fixed point $x$ extending $z$, and $z$ converges to $x$; thus $f$ has exactly $2^{|L|}$ fixed points. Given a specified subset $J\subseteq V$ called \EM{output}, $f$ \EM{computes} the map $h$ from $\B^L$ to $\B^J$ defined by $h(z)=x_J$, where $x$ is the fixed point of $f$ extending $z$. Equivalently, $h(x_L) = x_J$ for every fixed $x\in\Fixe(f)$; all the information concerning computations is thus contained in $\Fixe(f)$. An example is given in Figure \ref{fig:circuit} (left). Note that for any map $h$ from $\{0,1\}^L$ to $\{0,1\}^J$ there exists some circuit $f$ which computes $h$.

This definition deviates slightly from the textbook definition of circuits (e.g., in \cite{P94}), but the correspondence between the object we describe and that referred to as a circuit there is straightforward. In particular, producing a ``Boolean network-style circuit'' from a ``textbook circuit'' or vice-versa is feasible in linear time, and for a given input configuration, the output computed by the former is precisely the same as that computed by the latter.  

\begin{figure}
\[
\begin{array}{cccc}
\begin{array}{c}
\begin{tikzpicture}[every node/.style={outer sep=1,inner sep=0}]
\node (a) at (0,2){$\boldsymbol{a}$};
\node (b) at (3,2){$\boldsymbol{b}$};
\node (c) at (1,1){$\boldsymbol{c}$};
\node (psi) at (2,0){$\boldsymbol{\psi}$};

\draw[decoration={markings,mark=at position 0.55 with {\arrow{>}}},postaction={decorate}]  
(a.112) .. controls (-0.5,+2.7) and (+0.5,+2.7) .. (a.68);

\draw[decoration={markings,mark=at position 0.55 with {\arrow{>}}},postaction={decorate}]  
(b.112) .. controls (2.5,+2.7) and (+3.5,+2.7) .. (b.68);

\path[decoration={markings,mark=at position 0.55 with {\arrow{>}}}]
(a) edge[postaction={decorate}] (c)
(b) edge[postaction={decorate}] (psi)
(c) edge[postaction={decorate}] (psi)
;
\end{tikzpicture}
\end{array}
\qquad
\qquad
\vline
\qquad
\qquad

\begin{array}{c}
\begin{tikzpicture}[every node/.style={outer sep=1,inner sep=0}]
\node (a) at (0.5,2.5){$\boldsymbol{a}$};
\node (b) at (3,2.5){$\boldsymbol{b}$};
\node (c) at (1.5,1.5){$\boldsymbol{c}$};
\node (psi) at (3,0){$\boldsymbol{\psi}$};

\node[draw,rounded corners=4pt,inner sep=3pt] (ac) at (1,2){\small $\neq$};
\node[draw,rounded corners=4pt,inner sep=3pt] (psic) at (1.5,0.5){\small $\ZERO\lor$odd};
\node[draw,rounded corners=4pt,inner sep=3pt] (ccbpsi) at (3,1){\small $\ZERO\lor$odd};

\path
(a) edge (ac)
(ac) edge (c)
(psi) edge (psic)
(psic) edge[decorate,decoration={snake,amplitude=0.2mm,segment length=2mm}] %
(c)
(c) edge (ccbpsi)
(ccbpsi) edge[decorate,decoration={snake,amplitude=0.2mm,segment length=2mm}, bend right=10] (c)
(ccbpsi) edge[decorate,decoration={snake,amplitude=0.2mm,segment length=2mm}] (b)
(ccbpsi) edge[decorate,decoration={snake,amplitude=0.2mm,segment length=2mm}] (psi)
;
\end{tikzpicture}
\end{array}
\end{array}
\]
{\caption{\label{fig:circuit}
\textit{Left:} Interaction graph of a circuit $f$. Thus, $x$ is a fixed point of $f$ if and only if $x_c=\neg x_a$ and $x_\psi=x_c\land x_b$. Hence $x_\psi=\neg x_a\land x_b$ for the four fixed points $x$ of $f$. \textit{Right:} The tight encoding $\S_f$ of the circuit $f$.} 
}
\end{figure}

An odd-system $\S$ \EM{encodes} $f$ if $V$ is included in the variable set of $\S$ and 
\[
\Sol^{\textrm{\tiny $+$}}(\S)_V=\Fixe(f).
\]

Let $f$ be a circuit with set of components $V$, and let $\S_f = (I, C)$ be the odd-system defined as follows. The variable set is $I = V$. The constraints are as follows: for each not-gate $u$, with in-neighbor $v$, $\S_f$ contains a diff-constraint between $u$ and $v$; for each and-gate~$w$, with in-neighbors $u,v$, $\S_f$ contains the two odd-constraints $( \{ w \}, \{ u \} )$ and $( \{ u \}, \{ u,v,w \} )$. Figure~\ref{fig:circuit} (right) gives an example.

\begin{lemmarep}\label{lem:encoding_circuit}
The odd-system $\S_f$ encodes the circuit $f$.
\end{lemmarep}

\begin{proof}
Since $f$ contains a not-gate, $\ZERO$ is neither a fixed point of $f$ nor a solution of $\S_f$.

It is clear that $x \in \{0,1\}^I$ satisfies the diff-constraint between the not-gate $b$ and its in-neighbor $a$ if and only if $x_b = \neg x_a$. Similarly, the two odd-constraints for the and-gate $c$ with inputs $a$ and $b$ are satisfied by $x \in \{0,1\}^I$ if and only if $x_c = x_a \land x_b$, as seen from the table below.

\begin{tabular}{c|c|c}
     $x_a x_b x_c$ & $( \{ c \}, \{ a \} )$ & $( \{ a \}, \{ a,b,c \} )$\\
     \hline
     \textcolor{black}{$000$} & T & T \\
     \textcolor{red}{$001$} & \textcolor{red}{F} & T \\
     \textcolor{black}{$010$} & T & T \\
     \textcolor{red}{$011$} & \textcolor{red}{F} & T \\
     \textcolor{black}{$100$} & T & T \\
     \textcolor{red}{$101$} & T & \textcolor{red}{F} \\
     \textcolor{red}{$110$} & T & \textcolor{red}{F} \\
     \textcolor{black}{$111$} & T & T 
\end{tabular}

Thus $\Sol^+( \S_f ) = \Sol( \S_f ) = \Fixe(f)$, and hence $\S_f$ encodes $f$. 
\end{proof}

In view of Lemma~\ref{lem:encoding_circuit} and the fact that $I(S_f)=V$,
we shall refer to $\S_f$ as the \EM{tight encoding} of the circuit $f$. We note that $\S_f$ can be obtained in linear time from $f$.

\subsection{Encoding Boolean formulae with odd-systems}

Every Boolean formula $\psi$ with variable set $\Lambda$ can be naturally encoded by a circuit $f$, where the input set is $\Lambda$, the output is unique and denoted $\psi$, and the corresponding computed map is the Boolean function that gives the truth table of $\psi$. In other words, the set of satisfying assignments of $\psi$ is $\{x_\Lambda\mid x\in \Fixe(f)\textrm{ and }x_\psi=1\}$. For instance, the circuit of Figure~\ref{fig:circuit} (left) encodes the formula $\psi=\neg a\land b$. Let $\S$ be an odd-system encoding $f$. The set of satisfying assignments of $\psi$ is also $\{x_\Lambda\mid x\in \Sol^+(\S)\textrm{ and }x_\psi=1\}$.
 
\newpagexorapx
\section{Odd-system vs digraph problem}

The protagonist of this section is the following problem:

\begin{quote}\decisionpb
	{Odd-System-vs-Digraph (\OSvsD)}
	{an odd-system $\S$ and a digraph $H$ on the variable set~of~$\S$.}
	{does there exist $h\in F_\ZERO(H)$ such that $\Fixe^+(h)\cap \Sol^+(\S)=\emptyset$?}
\end{quote}

In this section we prove that {\OSvsD} reduces to \UFP. 
This is key to the remainder of the paper since the computational hardness of {\OSvsD} is the central ingredient of our reductions proving hardness of \UFP.
To illustrate this, let $f$ be a circuit encoding a Boolean formula $\psi$, and let $\S_f$ be the tight encoding of $f$ (which can be constructed in linear time). Let $H$ be the digraph defined as follows: the vertex set $I$ consists of the variables of $\S$; there is a loop on every vertex except $\psi$, and no other arcs. For every $h\in F_\ZERO(H)$, we have $\Fixe(h)=\{x\in\B^I\mid x_\psi=0\}$. Since $\S$ encodes $f$, we obtain 
\[
\textrm{$\psi$ is a tautology}\iff \Sol^+(\S)=\{x\in \Sol^+(\S)\mid x_\psi=1\}
\iff \Fixe^+(h)\cap \Sol^+(\S)=\emptyset.
\]
Hence {\sc Tautology} reduces to {\OSvsD}, the latter thus being $\coNP$-hard. We will significantly improve this result in the next sections. 

Given an instance $\S,H$, the reduction consists in constructing a digraph $G$ obtained from $H$ by  ``plugging in'' copies of a gadget coding the odd-constraints of $\S$. Let us describe this precisely. 

Given positive integers $n,m$, the \EM{$(n,m)$-gadget} is the digraph obtained from the disjoint union of a path of length $n$ and a cycle of length $m+1$ by adding a loop on the first vertex of the path, and an arc from the last vertex of the path to a vertex of the cycle; see Figure~\ref{fig:gadget}$(a)$. Hence the $(n,m)$-gadget has $n$ linear vertices that do not belong to the cycle (green squares in the figure) and $m$ linear vertices that belong to the cycle (red triangles in the figure). Here and throughout this paper, a vertex is called linear if and only if it has in- and out-degree one.

\begin{figure}
\resizebox{\textwidth}{!}{
\begin{tikzpicture}[every node/.style={outer sep=1,inner sep=0}]
    \begin{scope}
    \node (a) at (-1,1) {$(a)$};
    
    \node (0) at (0,0){$\bullet$};
    \node (1) at (1,0){\textcolor{Green}{$\blacksquare$}};
    \node (2) at (2,0){\textcolor{Green}{$\blacksquare$}};
    \node (3) at (3,0){$\bullet$};
    \node (4) at (4,0){\textcolor{Red}{$\blacktriangle$}};
    \node (5) at (5,0){\textcolor{Red}{$\blacktriangle$}};
    \node (6) at (6,0){\textcolor{Red}{$\blacktriangle$}};

    \draw[decoration={markings,mark=at position 0.55 with {\arrow{>}}},postaction={decorate}]  
    (0.-112) .. controls (-0.5,-0.7) and (0.5,-0.7) .. (0.-68);
    
    \draw[decoration={markings,mark=at position 0.55 with {\arrow{>}}},postaction={decorate},color=white]  
    (0.+112) .. controls (-0.5,+0.7) and (0.5,+0.7) .. (0.+68);
    
    \path[decoration={markings,mark=at position 0.55 with {\arrow{>}}}]
    (0) edge[postaction={decorate}] (1)
    (1) edge[postaction={decorate}] (2)
    (2) edge[postaction={decorate}] (3)
    (3) edge[postaction={decorate}] (4)
    (4) edge[postaction={decorate}] (5)
    (5) edge[postaction={decorate}] (6)
    (6) edge[postaction={decorate},bend left=30] (3)
    ;
\end{scope}
\begin{scope}[xshift=8cm]
    \node (b) at (-1,1) {$(b)$};

    \node (0) at (0.5,0){$\bullet$};
    \node (1) at (1.5,0){\textcolor{Green}{$\blacksquare$}};
    \node (2) at (2.5,0){\textcolor{Green}{$\blacksquare$}};
    \node (3) at (3.5,0){$\bullet$};
    \node (4) at (4.5,0){\textcolor{Red}{$\blacktriangle$}};
    \node (5) at (5.5,0){\textcolor{Red}{$\blacktriangle$}};
    \node (6) at (6.5,0){\textcolor{Red}{$\blacktriangle$}};
    \node (0') at (0,2){$\bullet$};
    \node (1') at (1,2){$\bullet$};
    \node (2') at (2,2){$\bullet$};
    \node at (3,2){$\cdots$};
    \node at (3.5,3.1){$I$};
    \node at (4,2){$\bullet$};
    \node (3') at (5,2){$\bullet$};
    \node (4') at (6,2){$\bullet$};
    \node (5') at (7,2){$\bullet$};
    
    \node at (0,2.4){$a'_1$};
    \node at (1,2.4){$a'_2$};
    
    \node at (5,2.4){$b'_1$};
    \node at (6,2.4){$b'_2$};
    \node at (7,2.4){$b'_3$};
    
    \node at (0.5,0.3){$a_0$};
    \node at (1.5,0.3){$a_1$};
    \node at (2.5,0.3){$a_2$};
    \node at (3.5,0.3){$b_0$};
    \node at (4.4,0.3){$b_1$};
    \node at (5.4,0.3){$b_2$};
    \node at (6.4,0.3){$b_3$};
    
    \draw[rounded corners=10pt,black!20] (-0.5,1.5) rectangle (7.5,2.8);
    
    \draw[decoration={markings,mark=at position 0.55 with {\arrow{>}}},postaction={decorate}]  
    (0.-112) .. controls (0,-0.7) and (1,-0.7) .. (0.-68);
    
    \path[decoration={markings,mark=at position 0.55 with {\arrow{>}}}]
    (0) edge[postaction={decorate}] (1)
    (1) edge[postaction={decorate}] (2)
    (2) edge[postaction={decorate}] (3)
    (3) edge[postaction={decorate}] (4)
    (4) edge[postaction={decorate}] (5)
    (5) edge[postaction={decorate}] (6)
    (6) edge[postaction={decorate},bend left=30] (3)
    (0') edge[postaction={decorate}] (1)
    (1') edge[postaction={decorate}] (2)
    (3') edge[postaction={decorate}] (4)
    (4') edge[postaction={decorate}] (5)
    (5') edge[postaction={decorate}] (6)
    ;    
\end{scope}
\end{tikzpicture}
}
\caption{$(a)$ The $(2,3)$-gadget. $(b)$ The $(A,B)$-extension with $A=\{a'_1,a'_2\}$ and $B=\{b'_1,b'_2,b'_3\}$ (labels are those used in Lemma~\ref{lem:AB-extension} in the appendix).}
\label{fig:gadget}
\end{figure}

We now describe the ``plugging'' operation. Let $H$ be a digraph and $I\subseteq V(H)$. Given non-empty $A,B\subseteq I$, the \EM{$(A,B)$-extension} of $H$ from $I$ is the digraph obtained from $H$ by adding 
\begin{itemize}
\item 
a copy of the $(|A|,|B|)$-gadget,
\item 
a matching from $A$ to the linear vertices that do not belong to the cycle, 
\item 
a matching from $B$ to the linear vertices that belong to the cycle. 
\end{itemize}
See Figure~\ref{fig:gadget}$(b)$. Let $\S=(I,C)$ be an odd-system, and let $H$ be a digraph on $I$. The \EM{$\S$-extension} of $H$ is the digraph $G$ obtained from $H$ by applying successively, for each $(A,B)\in C$, an $(A,B)$-extension of $H$ from $I$. The order in which the odd-constraints are processed is immaterial to the graph $G$ obtained. Hence $G$ is constructed in polynomial time.  Furthermore, if $H$ is a loop-to-cycle digraph, then $G$ is still a loop-to-cycle digraph, and $\Delta^-(G)\leq\max(2,\Delta^-(H))$ because all the vertices that belong to the added gadgets have in-degree at most $2$, and since there is no arc from these vertices to vertices of $H$.

\begin{restatable}{theorem}{oddexttheorem}\label{thm:odd-extension}
Let $\S,H$ be an instance of {\OSvsD}. Let $G$ be the $\S$-extension of $H$. We have 
\[
\textrm{$\S,H$ is a positive instance of {\OSvsD}}\iff \text{$G$ is UFP-dynamic}.
\]
\end{restatable}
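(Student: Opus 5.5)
The plan is to prove both directions by analysing a single $(A,B)$-gadget under a fixed context on $I$, and then combining the gadgets. I use the labels of Figure~\ref{fig:gadget}$(b)$. The path is $a_0\to a_1\to\dots\to a_p$ with $p=|A|$, $a_0$ carries a loop, and $a_i$ has in-neighbours $a_{i-1}$ and $a'_i\in A$. The cycle is $b_0\to b_1\to\dots\to b_q\to b_0$ with $q=|B|$, $b_j$ ($j\ge1$) has in-neighbours $b_{j-1}$ and $b'_j\in B$, and $b_0$ has in-neighbours $b_q$ and $a_p$.

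Two structural facts drive everything. First, there is no arc from a gadget into $V(H)=I$, so for every $f\in F(G)$ the restriction $h=f_I$ is a network in $F(H)$. Second, distinct gadgets are disjoint and see only $I$. Hence $x$ is a fixed point of $f$ iff $x_I\in\Fixe(h)$ and, for every gadget, $x$ restricted to that gadget is a fixed extension of $x_I$. The number of fixed points is therefore $\sum_{z\in\Fixe(h)}\prod_{(A,B)\in C}N_{A,B}(z)$, where $N_{A,B}(z)$ counts the fixed extensions of $z$ to that gadget.

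Inside a gadget, $a_0$ has only itself as in-neighbour, so $f_{a_0}$ is either $x_{a_0}$ or $\neg x_{a_0}$; the latter gives no fixed point, so whenever fixed points exist $a_0$ is a free loop. For each value of $a_0$, Lemma~\ref{lem:robert} determines $a_p$. The cycle context $(a_p,x_B)$ is then acyclic, positive or negative, giving $1$, at least $2$ or $0$ fixed extensions by Lemma~\ref{lem:acyclic-positive}.

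\textbf{($\Rightarrow$)} Given $h\in F_\ZERO(H)$ with $\Fixe^+(h)\cap\Sol^+(\S)=\emptyset$, set $f_I=h$ and define each gadget by
\[
f_{a_0}=x_{a_0},\qquad f_{a_i}=x_{a_{i-1}}\lor x_{a'_i}\ (i\ge1),\qquad f_{b_j}=x_{b_{j-1}}+x_{b'_j}\ (j\ge1),\qquad f_{b_0}=x_{a_p}\land\neg x_{b_q}.
\]
All these local functions depend essentially on each in-neighbour, so $f\in F(G)$. Under context $z$ we get $a_p=a_0\lor[z_A\ne\ZERO]$.
\begin{itemize}
\item If $a_p=0$, then $b_0$ is fixed to $0$ and the context is acyclic, giving exactly one extension.
\item If $a_p=1$, the signature has parity $1+\sum_{v\in B}z_v$, which is negative exactly when $z_B$ is even.
\end{itemize}
For $z=\ZERO$ this gives $N_{A,B}(\ZERO)=1$ (only $a_0=0$ survives). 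For $z\in\Fixe^+(h)$, $z$ violates some constraint $(A,B)$, i.e.\ $z_A\ne\ZERO$ and $z_B$ even. Then $a_p=1$ for both values of $a_0$, the context is negative, and $N_{A,B}(z)=0$. So $f$ has the unique fixed point $\ZERO$.

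\textbf{($\Leftarrow$)} By Lemma~\ref{lem:UFP0}, pick $f\in UFP_\ZERO(G)$ and let $h=f_I$. Then $h(\ZERO)=\ZERO$, so $h\in F_\ZERO(H)$. Suppose some $z\in\Fixe^+(h)\cap\Sol^+(\S)$ exists. I will show that every gadget has $N_{A,B}(z)\ge1$; since $z\ne\ZERO$, this gives a second fixed point, a contradiction.

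The key step is to extract information from the context $\ZERO$, where uniqueness forces $N_{A,B}(\ZERO)=1$. Both values of $a_0$ are available, so one value of $a_0$ gives an acyclic cycle context and the other a negative one (positive would give two extensions). These two contexts differ, so they have different values $c_0\ne c_1$ of $a_p$. With $x_B=\ZERO$ and $a_p=c_1$ the context is negative, so every $f_{b_j}(\cdot,0)$ is $x_{b_{j-1}}+\alpha_j$, $f_{b_0}(c_1,\cdot)$ is functional, and $\sum\alpha$ is odd. Acyclicity at $a_p=c_0$ then forces $f_{b_0}(c_0,\cdot)$ to be constant. Consequently, $a_p=c_0$ gives an acyclic context for every $x_B$.

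Now take the constraint $(A,B)$ satisfied by $z$.
\begin{itemize}
\item If $z_A=\ZERO$, the path behaves exactly as under context $\ZERO$, so some value of $a_0$ yields $a_p=c_0$, and an extension exists.
\item If $z_A\ne\ZERO$ and $z_B$ is odd, and some $a_0$ gives $a_p=c_0$, we are done as before. Otherwise $a_p=c_1$. For each $j$ with $z_{b'_j}=1$, essential dependence on $b'_j$ forces $f_{b_j}(\cdot,1)$ to be either constant (acyclic context, done) or $x_{b_{j-1}}+\alpha_j+1$. In the latter case the total parity is odd plus odd, i.e.\ even, so the context is positive and an extension exists.
\end{itemize}

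I expect this case analysis of arbitrary local functions to be the main obstacle. The rest reduces to the counting formula and the lemmas on cycle contexts.
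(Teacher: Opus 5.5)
Your proof is correct, but it is organised differently from the paper's. The $(\Rightarrow)$ direction uses exactly the network from the first claim in the proof of Lemma~\ref{lem:AB-extension}. For $(\Leftarrow)$ the paper proceeds in three steps:
\begin{itemize}
\item Claim~\ref{clm:AB-extension-2} determines the gadget's local functions explicitly for any $f\in F_{\ZERO}(G,I)$, e.g.\ $f_{b_0}=\neg x_{b_m}\land x_{a_n}$, and $f_{b_i}(x)=x_{b_{i-1}}$ when $x_{b'_i}=0$. It does so by testing $f$ on the chain of configurations $x^i$, $y^i$, which are not allowed to be fixed.
\item A third claim uses this normal form to show that every $z\in\Fixe(h)$ satisfying $(A,B)$ extends to a fixed point.
\item The gadgets are assembled by induction on the number of constraints (Lemma~\ref{lem:odd-extension}), phrased via extensions minimising $|\Fixe(f)_I|$.
\end{itemize}
You replace that induction by the count $\sum_{z\in\Fixe(h)}\prod_{(A,B)}N_{A,B}(z)$, which makes the independence of the gadgets given $x_I$ explicit. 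You also replace the explicit normal form by a counting argument on $N_{A,B}(\ZERO)=1$ that extracts only the invariants you need:
\begin{itemize}
\item one value of $a_0$ gives an acyclic context and the other a negative one;
\item some value $c_0$ of $a_p$ makes $b_0$ constant whatever $x_B$ is;
\item the signature at $a_p=c_1$ is odd when $x_B=\ZERO$.
\end{itemize}
Your closing essential-dependence parity step is the same as the paper's. Your route is shorter and self-contained for the equivalence itself, and never needs to know which of $c_0,c_1$ equals $0$. The paper's route yields more than the yes/no equivalence. It gives the exact description $\Fixe^+(f)_I=\Fixe^+(h)\cap\Sol^+(\S)$ for a minimising extension of any $h\in F_{\ZERO}(H)$, together with a normal form for the gadget. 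Together these give a precise account of what the $\S$-extension does to fixed-point sets.
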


For the proof we need some definitions. Let $G$ be a digraph and $I\subseteq V(G)$. We denote by \EM{$F_\ZERO(G,I)$} the set of $f\in F_\ZERO(G)$ such that $f$ has no fixed point $x\neq \ZERO$ with $x_I=\ZERO$; in other words, $\ZERO$ is the unique fixed point of $f$ extending $\ZERO_I$. Thus, for any $G$ and $I \subseteq V(G)$, $UFP_\ZERO(G) \subseteq F_\ZERO(G,I)\subseteq F_\ZERO(G)$.

Suppose that all the in-neighbors of vertices in $I$ are in $I$, and let $H=G[I]$. The \EM{restriction} of $f\in F(G)$ on $H$ is the network $h\in F(H)$ defined by $h(x_I)=f(x)_I$ for all $x\in\B^{G}$; this is well-defined since all the in-neighbors of vertices in $I$ are in $I$. Given $h\in F(H)$, we say that $f\in F(G)$ \EM{extends} $h$ if $h$ is the restriction of $f$ on $H$.

\begin{proofsketch}
The major part of the proof is devoted to proving that the $(A,B)$-extension of $H$ adds the constraint ``$x_A=\ZERO\textrm{ or }x_B\textrm{ is odd}$'' to the set of fixed points of the Boolean network extension of $h \in F_\ZERO( H,I )$ that minimizes the number of fixed points. In other words, the $(A,B)$-extension of digraphs is an application of the odd-constraint $(A,B)$ to the minimal set of fixed points. More precisely, let $h\in F_\ZERO(H,I)$ and $G$ be the $(A,B)$-extension of $H$ from $I$. Then there exists $f\in F_\ZERO(G,I)$ extending $h$, and if $f$ minimizes $|\Fixe(f)_I|$ under these conditions then 
\[
    \Fixe(f)_I=\{x\in\Fixe(h)_I\mid x_A=\ZERO\textrm{ or }x_B\textrm{ is odd}\}.
\]
Intuitively, this allows us to treat  all the odd-constraints of $\S$ as applying to the set of fixed points.
More precisely, let $G$ be the $\S$-extension of $H$ and $h\in F_\ZERO(H)$. Then there exists $f\in F_\ZERO(G,I)$ extending $h$, and if $f$ minimizes $|\Fixe(f)_I|$ under these conditions then 
\[
    \Fixe^+(f)_I=\Fixe^+(h)\cap \Sol^+(\S).
\]
We can then conclude that there exists $h \in F_\ZERO( H )$ with $\Fixe^+(h)\cap \Sol^+(\S) = \emptyset$ if and only if there exists $f \in UFP_\ZERO( G )$.
\end{proofsketch}

\begin{toappendix}

\begin{lemmarep}\label{lem:AB-extension}
Let $H$ be a digraph and $I\subseteq V(H)$. Let $A,B\subseteq I$ non-empty, and let $G$ be the $(A,B)$-extension of $H$ from $I$. Let $h\in F_\ZERO(H,I)$. There exists $f\in F_\ZERO(G,I)$ extending $h$, and if $f$ minimizes $|\Fixe(f)_I|$ under these conditions then 
\[
\Fixe(f)_I=\{x\in\Fixe(h)_I\mid x_A=\ZERO\textrm{ or }x_B\textrm{ is odd}\}.
\]
\end{lemmarep}

\begin{proof}
Let $n=|A|$ and $m=|B|$. Let us label the $(n,m)$-odd-gadget associated with the $(A,B)$-extension as follows. The loop is denoted $a_0$, and the linear vertices that do not belong to the cycle are labelled from $a_1$ to $a_n$, following the order of the path containing them. The vertices of the cycle are denoted from $b_0$ to $b_m$ following the order, the starting vertex $b_0$ being the non-linear vertex of the cycle. Figure~\ref{fig:gadget}$(b)$ gives an illustration for $n=2$ and $m=3$. For $1\leq i\leq n$, we denote by $a'_i$ the in-neighbor of $a_i$ in $A$, and for $1\leq i\leq m$, we denote by $b'_i$ the in-neighbor of $b_i$ in $B$. The proof is divided into three claims, and the lemma is a direct consequence of the first and the third. 

\begin{claim}
There exists $f\in F_\ZERO(G,I)$ extending $h$ such that 
\[
\Fixe(f)_I\subseteq \{x\in\Fixe(h)_I\mid x_A=\ZERO\textrm{ or }x_B\textrm{ is odd}\}.
\]
\end{claim}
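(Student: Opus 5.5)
The plan is to build $f$ explicitly. I take $f_I$ (and $f$ on all of $V(H)$) equal to $h$, and define the local functions on the gadget so that, for every fixed context $x_I$, the gadget's cycle is acyclic when $x_A=\ZERO$ and the loop $a_0$ is $0$, and negative exactly when the constraint must be killed. I use the labelling $a_0,\dots,a_n$, $b_0,\dots,b_m$, $a'_i$, $b'_i$ introduced above. No vertex of the gadget is an in-neighbour of a vertex of $H$. So the restriction of $f$ to $H$ is $h$, and every fixed point $x$ of $f$ satisfies $x_I\in\Fixe(h)_I$.

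Concretely, I set
\begin{itemize}
\item $f_{a_0}(x)=x_{a_0}$;
\item $f_{a_i}(x)=x_{a_{i-1}}\lor x_{a'_i}$ for $1\le i\le n$;
\item $f_{b_0}(x)=\neg x_{b_m}\land x_{a_n}$;
\item $f_{b_i}(x)=x_{b_{i-1}}+x_{b'_i}$ for $1\le i\le m$.
\end{itemize}
Each local function depends essentially on each of its in-neighbours in $G$, so $f\in F(G)$. Since $h(\ZERO)=\ZERO$ and all these functions vanish at $\ZERO$, we also have $f(\ZERO)=\ZERO$.

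Next I analyse a fixed point $x$ of $f$. The loop forces nothing, but along the path Lemma~\ref{lem:robert} gives $x_{a_n}=x_{a_0}\lor\bigvee_{a\in A}x_a$. There are two cases for the context of the cycle $b_0\cdots b_m$.
\begin{itemize}
\item If $x_{a_n}=0$, the context is acyclic ($b_0$ is fixed to $0$), so by Lemma~\ref{lem:acyclic-positive} there is exactly one extension to the cycle.
\item If $x_{a_n}=1$, the context is a functionality context with signature $(1,x_{b'_1},\dots,x_{b'_m})$. It is therefore negative iff $x_B$ is even, and in that case Lemma~\ref{lem:acyclic-positive} rules out any fixed point.
\end{itemize}
Hence a fixed point with $x_A\neq\ZERO$ forces $x_{a_n}=1$ and so $x_B$ odd. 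This gives $\Fixe(f)_I\subseteq\{x\in\Fixe(h)_I\mid x_A=\ZERO\text{ or }x_B\text{ odd}\}$.

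It remains to check $f\in F_\ZERO(G,I)$. Let $x$ be a fixed point with $x_I=\ZERO$. If $x_{a_0}=1$ then $x_{a_n}=1$ and $x_B=\ZERO$ is even, a negative context, which is impossible. If $x_{a_0}=0$ then the path is all zero, $x_{a_n}=0$, and the acyclic context forces $b_0=0$ and then all $b_i=0$, so $x=\ZERO$.

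The main subtlety I expect is exactly this last point. The naive choice $f_{b_0}=x_{b_m}+x_{a_n}$ makes the context positive when $x_A=\ZERO$ and $x_{a_0}=0$, which produces a spurious nonzero fixed point with $x_I=\ZERO$. That is why $b_0$ is canalised to $0$ by $a_n=0$, while the loop value $a_0=1$ is used to push the all-zero-on-$I$ case into a negative context. The remaining claims of the lemma (the reverse inclusion for a minimiser) are not needed for this claim.
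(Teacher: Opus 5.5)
Your construction is exactly the paper's $f$. Your verification also matches the paper's: the path gives $x_{a_n}=x_{a_0}\lor\bigvee_{a\in A}x_a$, the cycle signature is $(1,x_{b'_1},\dots,x_{b'_m})$ when $x_{a_n}=1$, and you treat the case $x_I=\ZERO$ the same way, except that the paper does the parity computation $x_{b_m}=x_{b_0}+\sum_i x_{b'_i}$ directly instead of citing the context lemma. One step that both you and the paper leave implicit is that concluding $x=\ZERO$ from $x_I=\ZERO$ also requires $x_{V(H)}=\ZERO$, and this follows from $h\in F_\ZERO(H,I)$ applied to the fixed point $x_{V(H)}$ of $h$.
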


\begin{claimproof}
Let $f\in F(G)$ be the extension of $h$ such that
\[
\begin{array}{l}
f_{a_0}(x)=x_{a_0}\\
f_{a_i}=x_{a_{i-1}}\lor x_{a'_i}\quad (1\leq i\leq n)\\
f_{b_0}(x)=\neg x_{b_m}\land x_{a_n}\\
f_{b_i}(x)=x_{b_{i-1}}+x_{b'_i}\quad (1\leq i\leq m).
\end{array}
\]

Let us prove that $f\in F_\ZERO(G,I)$. Since $h(\ZERO)=\ZERO$ and by construction of $f$ as above we have $f(\ZERO)=\ZERO$. Suppose $x\in\Fixe(f)$ and $x_I=\ZERO$. If $x_{a_n}=1$ then $x_{b_0}=\neg x_{b_m}$ and $x_{b_i}=x_{b_{i-1}}$ for $1\leq i\leq m$, thus $x_{b_0}=x_{b_m}$, a contradiction. So $x_{a_n}=0$ and we deduce that $x_{a_i}=0$ for $0\leq i<n$. We also deduce that $x_{b_0}=0$ thus $x_{b_i}=0$ for $1\leq i\leq m$. Thus $x=\ZERO$. This proves that $f\in F_\ZERO(G,I)$.

We now prove the inclusion. Let $x\in\Fixe(f)$. Then $x$ extends a fixed point of $h$, so we only have to prove that $x_A=\ZERO$ or $x_B$ is odd. Suppose that $x_A\neq\ZERO$. We deduce that $x_{a_n}=1$. Thus $x_{b_0}=\neg x_{b_m}$ and $x_{b_i}=x_{b_{i-1}}+x_{b'_i}$ for $1\leq i\leq m$, which implies $x_{b_m}=x_{b_0}+x_{b'_1}+\cdots +x_{b'_m}$. Since $x_{b_0}=\neg x_{b_m}$ we deduce that $x_{b'_1}+\cdots +x_{b'_m}=1$, thus $x_B$ is odd.
\end{claimproof}

\begin{claim} \label{clm:AB-extension-2}
If $f\in F_\ZERO(G,I)$ then 
\[
\begin{array}{l}
f_{a_0}(x)=x_{a_0}\\
f_{a_i}(x)=x_{a_{i-1}}\textrm{ if }x_{a'_i}=0\qquad (1\leq i\leq n)\\
f_{b_0}(x)=\neg x_{b_m}\land x_{a_n}\\
f_{b_i}(x)=x_{b_{i-1}}\textrm{ if }x_{b'_i}=0\qquad (1\leq i\leq m).
\end{array}
\]
\end{claim}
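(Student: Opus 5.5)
The plan is to look only at configurations $x$ with $x_I=\ZERO$. On those, the gadget behaves like an isolated $(n,m)$-gadget, and I will use the hypothesis $f\in F_\ZERO(G,I)$ (no nonzero fixed point extending $\ZERO_I$) together with Lemmas~\ref{lem:robert} and~\ref{lem:acyclic-positive} to pin down the local functions.

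\textbf{Setup.} First note that no arc goes from a gadget vertex into $I$. So $f_I$ depends only on $x_I$, and since $f(\ZERO)=\ZERO$, the configuration $\ZERO_I$ is fixed by $f$. The vertex $a_0$ has only itself as in-neighbour, and $f_{a_0}$ depends essentially on $x_{a_0}$. Hence $f_{a_0}\in\{x_{a_0},\neg x_{a_0}\}$, and $f(\ZERO)=\ZERO$ forces $f_{a_0}(x)=x_{a_0}$. For $1\le i\le n$, let $g_i$ be $f_{a_i}$ with $x_{a'_i}=0$, a function of $x_{a_{i-1}}$ alone. Since $g_i(0)=0$, each $g_i$ is either the constant $0$ or the identity. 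Define the restrictions of $f_{b_i}$ to $x_{b'_i}=0$ in the same way; each is again constant $0$ or the identity.

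\textbf{The branch $x_{a_0}=0$.} The configuration extending $\ZERO_I$ with $x_{a_0}=0$ is fixed. Along the acyclic path, Lemma~\ref{lem:robert} propagates it to the all-zero values on $a_1,\dots,a_n$, because each $g_i(0)=0$. This gives a context $z^0$ for the cycle with $x_{a_n}=0$. Since $\ZERO$ is a fixed point extending $z^0$, the context $z^0$ is not negative by Lemma~\ref{lem:acyclic-positive}. It is not positive either, because then a second fixed point extending $z^0$ would exist; that point is nonzero with zero restriction to $I$, contradicting $f\in F_\ZERO(G,I)$. So $z^0$ is acyclic.

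\textbf{The branch $x_{a_0}=1$.} The configuration with $x_{a_0}=1$ is also fixed, and it propagates uniquely along the path to a context $z^1$ with some value $c=x_{a_n}$. If $z^1$ were acyclic or positive, Lemma~\ref{lem:acyclic-positive} would give a fixed point with $x_I=\ZERO$ and $x_{a_0}=1$, which is a forbidden nonzero fixed point. So $z^1$ is negative. Now $c=0$ is impossible: the cycle context would then coincide with $z^0$ on the in-neighbours of the cycle, and $z^0$ is acyclic, not negative. Hence $c=1$. This forces every $g_i$ to be the identity, since a constant-$0$ $g_i$ would propagate $0$ down to $a_n$. That is exactly $f_{a_i}(x)=x_{a_{i-1}}$ when $x_{a'_i}=0$.

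\textbf{The cycle vertices.} In the negative context $z^1$, every $b_i$ with $i\ge1$ must act functionally. Its restriction, which is $0$ at $0$, is therefore the identity, i.e.\ $f_{b_i}(x)=x_{b_{i-1}}$ when $x_{b'_i}=0$, with $\alpha_i=0$. Oddness of the signature then forces $\alpha_0=1$, so $f_{b_0}(x_{b_m},1)=\neg x_{b_m}$. These $b_i$ restrictions do not involve $a_n$, so they are identities in $z^0$ too. Hence the acyclicity of $z^0$ must come from $b_0$ being fixed, necessarily to $f_{b_0}(\ZERO)=0$. Therefore $f_{b_0}(x_{b_m},0)=0$, and combining the two cases gives $f_{b_0}(x)=\neg x_{b_m}\land x_{a_n}$.

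The main obstacle is the step showing the $x_{a_0}=1$ context is negative with $x_{a_n}=1$. This needs the careful comparison with the $x_{a_0}=0$ branch and a correct application of the trichotomy of Lemma~\ref{lem:acyclic-positive} to rule out all contexts that would create an extra fixed point.
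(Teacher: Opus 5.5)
Your proof is correct, and it takes a different route from the paper's. The paper argues directly from the definitions. It tests the explicit configurations $x^i=e_{a_0}+\dots+e_{a_i}$ for $0\le i\le n$, then $x^n+y^i$ with $y^i=e_{b_0}+\dots+e_{b_i}$ for $0\le i\le m$, and finally $y^m$. Each of these is nonzero and vanishes on $V(H)$, so it cannot be a fixed point. By induction, each already satisfies the fixed-point equations at all vertices but one. The failure at that vertex therefore pins down one truth-table entry at a time: $f_{a_i}(e_{a_{i-1}})=1$, $f_{b_0}(e_{a_n})=1$, $f_{b_i}(e_{b_{i-1}})=1$, $f_{b_0}(e_{a_n}+e_{b_m})=0$ and $f_{b_0}(e_{b_m})=0$.

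You instead apply Lemmas~\ref{lem:robert} and~\ref{lem:acyclic-positive} to the two fixed contexts with $x_{a_0}=0$ and $x_{a_0}=1$. You show the first must be acyclic and the second negative with $x_{a_n}=1$, and you read the local functions off this classification. Your version is more conceptual: it shows the loop $a_0$ acting as a switch between an acyclic and a negative context for the gadget cycle. That is the same context-based viewpoint used in the third claim of that proof and in Lemma~\ref{lem:global_context}. The paper's version is shorter and needs no auxiliary lemma. It also works verbatim for arbitrary $H$, because its test configurations vanish on all of $V(H)$.

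One step in your proof needs a small repair. You assert that $f_I$ depends only on $x_I$, so that $\ZERO_I$ is fixed by $f$. This does not hold in general. In this lemma $I$ is an arbitrary subset of $V(H)$, and vertices of $I$ may have in-neighbours in $V(H)\setminus I$. What is true is that no arc enters $V(H)$ from the gadget, so $f_{V(H)}$ depends only on $x_{V(H)}$, and hence $\ZERO_{V(H)}$ is fixed by $f$.

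So run both branches from $\ZERO_{V(H)}$ instead of $\ZERO_I$. Then $z^0$ and $z^1$ are configurations on $V(G)\setminus\{b_0,\dots,b_m\}$. The configurations produced by Lemma~\ref{lem:acyclic-positive} are then genuine fixed points of $f$ vanishing on $I$, and the rest of your argument goes through unchanged.
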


\begin{claimproof}
Let $f\in F_\ZERO(G,I)$. Since $f(\ZERO)=\ZERO$, we obviously have $f_{a_0}(x)=x_{a_0}$. For $0\leq i\leq n$, let $x^i$ be the sum of $e_{a_j}$ for $0\leq j\leq i$. Since $x^0$ is not a fixed point because $f\in F_\ZERO(G,I)$ and $a_0\notin I$, we have $f_{a_1}( x^0 ) = f_{a_1}( e_{a_0} ) = 1$. By induction on $i$, since $x^{i-1}$ is not a fixed point, we have $f_{a_i}(e_{a_{i-1}})=f_{a_i}(x^{i-1})=1$, and thus $f_{a_i}(x)=x_{a_{i-1}}$ if $x_{a'_i}=0$. Since $x^n$ is not a fixed point, we have $f_{b_0}(e_{a_n})=f_{b_0}(x^n)=1$. For $1\leq i\leq m$, let $y^i$ be the sum of $e_{b_j}$ for $0\leq j\leq i$. Since $x^n+y^i$ is not a fixed point, we have $f_{b_i}(e_{b_{i-1}})=f_{b_i}(x^n+y^{i-1})=1$, and thus $f_{b_i}(x)=x_{b_{i-1}}$ if $x_{b'_i}=0$. Since $x^n+y^m$ is not a fixed point, we have $f_{b_0}(e_{a_n}+e_{b_m})=0$. Since $y^m$ is not a fixed point, we have $f_{b_0}(e_{b_m})=f_{b_0}(y^m)=0$. So we have proved that $f_{b_0}(e_{a_n})=1$, $f_{b_0}(e_{a_n}+e_{b_m})=0$ and $f_{b_0}(e_{b_m})=0$, (and we have that $f_{b_0}(\ZERO)=0$) which means that $f_{b_0}(x)=\neg x_{b_m}\land x_{a_n}$.
\end{claimproof}

\begin{claim}
Let $f\in F_\ZERO(G,I)$ extending $h$. We have 
\[
\{x\in\Fixe(h)_I\mid x_A=\ZERO\textrm{ or }x_B\textrm{ is odd}\}\subseteq \Fixe(f)_I.
\]
\end{claim}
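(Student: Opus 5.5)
Fix $f\in F_\ZERO(G,I)$ extending $h$, and let $x\in\Fixe(h)$ with $x_A=\ZERO$ or $x_B$ odd; we view $x$ as a configuration on $I$. We must construct a fixed point $y$ of $f$ with $y_I=x$. Claim~\ref{clm:AB-extension-2} determines most of the local functions of the gadget once certain inputs vanish, and we use it throughout. The starting observation is that $I$ is closed under in-neighbors in $G$, and $h$ is the restriction of $f$, so $x$ is fixed by $f$. It is a context for the gadget cycle $b_0\to\cdots\to b_m\to b_0$. The vertex $a_0$ has a loop, and $f_{a_0}(y)=y_{a_0}$, so we may choose $y_{a_0}$ freely. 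We split into two cases.

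\emph{Case $x_A=\ZERO$.} Set $y_{a_0}=0$. Since $x_{a'_i}=0$ for all $i$, Claim~\ref{clm:AB-extension-2} gives $f_{a_i}=x_{a_{i-1}}$ along the path, so $y_{a_i}=0$ for all $i$ is consistent. Then $f_{b_0}=\neg y_{b_m}\land y_{a_n}=0$, so $b_0$ is fixed to $0$ and the context is acyclic. We then propagate along the path $b_1,\dots,b_m$ using Lemma~\ref{lem:robert}, with the vertices of $I\cup\{a_0,\dots,a_n,b_0\}$ playing the role of the fixed part. This yields a unique extension fixed by $f$, which is a fixed point $y$ with $y_I=x$. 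Strictly, Lemma~\ref{lem:robert} yields a fixed \emph{configuration} on all of $V(G)$, which is exactly a fixed point.

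\emph{Case $x_A\neq\ZERO$ and $x_B$ odd.} Here the $f_{a_i}$ with $x_{a'_i}=1$ are unknown functions, so the path is not under our control. We instead use the fact that $\{a_0,\dots,a_n\}$ together with $I$ is closed under in-neighbors. The restriction of $f$ to $I\cup\{a_0,\dots,a_n\}$, with $I$ fixed to $x$, is a network on the loop $a_0$ followed by a path. Both choices $y_{a_0}\in\{0,1\}$ are fixed, and by Lemma~\ref{lem:robert} each extends uniquely to a fixed configuration on $I\cup\{a_0,\dots,a_n\}$. So we get two candidates, with values $y_{a_n}=u_0$ and $y_{a_n}=u_1$.

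If for one candidate $y_{a_n}=0$, then $f_{b_0}=0$ is constant. The cycle context is then acyclic, and Lemma~\ref{lem:robert} finishes as in the first case. Note that this argument does not even need $x_B$ odd.

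Otherwise both candidates have $y_{a_n}=1$, and $f_{b_0}=\neg y_{b_m}$. For the $b_i$ with $x_{b'_i}=1$, $f_{b_i}$ as a function of $y_{b_{i-1}}$ is not controlled by Claim~\ref{clm:AB-extension-2}. This is the main obstacle. We would try to handle it by showing that under the context $x$ the cycle is still either acyclic or positive, never negative. We use $f\in F_\ZERO(G,I)$ again: with the context set to $x$, a negative context gives no fixed point, which is allowed, so that condition alone does not suffice. We must exploit that $x_B$ is odd. Each $f_{b_i}$ with $x_{b'_i}=1$ is, for fixed context, one of $y_{b_{i-1}}$, $\neg y_{b_{i-1}}$, $0$, or $1$. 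If any of these is a constant, the context is acyclic and we are done.

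Otherwise the signature has $\alpha_0=1$, $\alpha_i=0$ when $x_{b'_i}=0$, and $\alpha_i\in\{0,1\}$ when $x_{b'_i}=1$. We need the parity of $\alpha$ to be even. We expect this to require more information about the $f_{b_i}$ at $y_{b'_i}=1$. The plan is to derive it by a further application of the non-fixedness of configurations with $y_I=\ZERO$, as in the proof of Claim~\ref{clm:AB-extension-2}. However, since $x_I\ne\ZERO$, that tool may not apply. In that event the honest fallback is to use $h\in F_\ZERO(H,I)$ together with freedom in the choice of $y_{a_0}$ to reach an acyclic context.

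If the cycle is positive, Lemma~\ref{lem:acyclic-positive} gives a fixed extension, which completes the proof.
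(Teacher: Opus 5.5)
Your proposal has a genuine gap, and it is exactly the step you flagged as the main obstacle. You never show that $\alpha_i=1$ for the indices with $x_{b'_i}=1$, so the parity of $\alpha$ is left undetermined and the proof does not close. The constraint to exploit there is not $f\in F_\ZERO(G,I)$ but $f\in F(G)$: $f_{b_i}$ must depend essentially on $x_{b'_i}$.

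Here is why that settles it. In $G$ the vertex $b_i$ ($1\le i\le m$) has exactly two in-neighbors, $b_{i-1}$ and $b'_i$. Claim~\ref{clm:AB-extension-2} says that $f_{b_i}$ copies $x_{b_{i-1}}$ whenever $x_{b'_i}=0$. Suppose that in a functional context with $x_{b'_i}=1$ you had $\alpha_i=0$. Then $f_{b_i}$ would also copy $x_{b_{i-1}}$ when $x_{b'_i}=1$, so $f_{b_i}(x)=x_{b_{i-1}}$ for every $x$. Then $b'_i\to b_i$ would not be an arc of $G_f$, contradicting $G_f=G$. Hence in any functional context with $y_{a_n}=1$ the signature is $\alpha=(1,x_{b'_1},\dots,x_{b'_m})$. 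This is even because $x_B$ is odd, so the context is positive and Lemma~\ref{lem:acyclic-positive} gives the fixed point. This is precisely the paper's argument.

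Neither of your fallbacks can replace this step. The configurations that $f\in F_\ZERO(G,I)$ forces to be non-fixed all have $x_I=\ZERO$, hence $x_{b'_i}=0$. They therefore carry no information about $f_{b_i}$ when $x_{b'_i}=1$. As for $y_{a_0}$: once both choices give $y_{a_n}=1$, that choice no longer affects the cycle's context at all. The cycle's external in-neighbors are $a_n$ and the $b'_i$, so varying $y_{a_0}$ cannot produce an acyclic context.

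Two smaller remarks. First, the branching on $y_{a_0}$ is unnecessary. Fix $y_{a_0}=0$; then if $x_A=\ZERO$ the copies force $y_{a_n}=0$, so the case $y_{a_n}=1$ already implies $x_A\neq\ZERO$ and hence $x_B$ odd, and your two cases merge. Second, the set closed under in-neighbors in $G$ is $V(H)$, since no gadget arc enters $H$; this need not hold for $I$. You should therefore take $x$ to be a fixed point of $h$ on all of $V(H)$ before concluding that it is fixed by $f$.
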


\begin{claimproof}
Let $z\in \Fixe(h)$ such that $z_A=\ZERO$ or $z_B$ is odd, and let us prove that $f$ has a fixed point extending $z$. Since $z$ is fixed by $f$, by Lemma \ref{lem:robert}, there is a configuration $z'$ on $I\cup\{a_0,\dots,a_n\}$ extending $z$ which is fixed by $f$ and with $z'_{a_0}=0$. Since $z'$ is a context for the cycle $C = \{ b_0, \dots, b_m \}$ of the gadget, by Lemma \ref{lem:acyclic-positive}, it is sufficient to prove that $z'$ is not a negative context. Suppose that $z'$ is a functional context for $C$, let $\alpha=(\alpha_0,\dots,\alpha_m)$ be the corresponding signature, and let us prove that $z'$ is then necessarily a positive context. By Claim \ref{clm:AB-extension-2}, if $z'_{a_n}=0$, then $b_0$ is fixed to $0$ in the context $z'$, which is thus acyclic, a contradiction. Thus $z'_{a_n}=1$. We deduce from Claim \ref{clm:AB-extension-2} that $z_A\neq\ZERO$ and $\alpha_0=1$. Let $1\leq i\leq m$. By Claim \ref{clm:AB-extension-2}, if $z_{b'_i}=0$ then $\alpha_i=0$. Suppose that $z_{b'_i}=1$. If $\alpha_i=0$ then $f_{b_i}(e_{b'_i})=0$ and $f_{b_i}(e_{b_{i-1}}+e_{b'_i})=1$. By Claim \ref{clm:AB-extension-2}, $f_{b_i}(\ZERO)=0$ and $f_{b_i}(e_{b_{i-1}})=1$. It means that $f_{b_i}(x)=x_{b_{i-1}}$ for all configurations $x$ on $G$, so $f_{b_i}$ does not depend on $b'_i$, a contradiction. Thus $\alpha_i=1$. So $\alpha_i=z_{b'_i}$ in any case. Hence we have proved that $\alpha=(1,z_{b'_1},\dots,z_{b'_m})$. Since $z_B$ is odd, $\alpha$ is even. Hence $z'$ is indeed a positive context. 
\end{claimproof}
\end{proof}

\begin{lemmarep}\label{lem:odd-extension}
Let $\S$ be an odd-system with variable set $I$, $H$ be a digraph on $I$, and $G$ be the $\S$-extension of $H$. Let $h\in F_\ZERO(H)$. There exists $f\in F_\ZERO(G,I)$ extending $h$, and if $f$ minimizes $|\Fixe(f)_I|$ under these conditions then 
\[
\Fixe^+(f)_I=\Fixe^+(h)\cap \Sol^+(\S).
\]
\end{lemmarep}

\begin{proof} 
We proceed by induction on the number of odd-constraints in $\S$. If $\S$ has no odd-constraint then $G=H$, and $f=h$ has the desired properties since $\Sol(\S)=\B^I$. 

Let $C$ be the set of odd-constraints in $\S$, and suppose that $(A,B)\in C$. Let $\S'=(I,C\setminus \{(A,B)\})$ and $\S''=(I,\{(A,B)\})$. Thus $\Sol(\S)=\Sol(\S')\cap\Sol(\S'')$. Let $G'$ be the $\S'$-extension of $H$; hence $G$ is the $(A,B)$-extension of $G'$ from $I$. 

Let any $f\in F_\ZERO(G,I)$ extending $h$, and let $f'$ be its restriction on $G'$. Suppose that $f'$ has a fixed point $x'\neq \ZERO$ with $x'_I=\ZERO$. Let $U=V(G)\setminus V(G')$, and let $x\in\B^G$ extending  $x'$ with $x_U=\ZERO$. Since the in-neighbors of vertices in $U$ are in $U\cup I$, since $x_{U\cup I}=\ZERO$ and since $f(\ZERO)=\ZERO$, we deduce that $x\in\Fixe^+(f)$ and $x_I=x'_I=\ZERO$, a contradiction. Thus $f'\in F_\ZERO(G',I)$. Since $G$ is the $(A,B)$-extension of $G'$ from $I$, we deduce from Lemma \ref{lem:AB-extension} that 
\[
\Fixe(f')_I\cap\Sol(\S'')\subseteq \Fixe(f)_I. 
\]
Since $f'$ extends $h$, using the induction for the first inclusion, we obtain: 
\[
\Fixe^+(h)\cap \Sol^+(\S)
=\Fixe^+(h)\cap \Sol^+(\S')\cap\Sol^+(\S'')
\subseteq \Fixe^+(f')_I\cap \Sol^+(\S'')\subseteq \Fixe(f)_I.
\]
It remains to prove that these inclusions are equalities for some $f\in F_\ZERO(G,I)$ extending $h$. By induction, there exists $f'\in F_\ZERO(G',I)$ extending $h$ such that $\Fixe^+(f')_I=\Fixe^+(h)\cap \Sol^+(\S')$. By Lemma \ref{lem:AB-extension} there exists $f\in F_\ZERO(G,I)$ extending $f'$ such that $\Fixe(f)_I=\Fixe(f')_I\cap\Sol(\S'')$. 
We then deduce that $\Fixe^+(f)_I=\Fixe^+(h)\cap \Sol^+(\S)$. This completes the proof.
\end{proof}

\end{toappendix}

\begin{toappendix}
We are now ready to prove the main theorem for this section, which we first recall:

\oddexttheorem*

\begin{proof}[{\bf Proof of Theorem~\ref{thm:odd-extension}}]
Let $I$ be the vertex set of $H$.

Suppose that $G$ is UFP-dynamic. By Lemma~\ref{lem:UFP0}, there exists $f\in F(G)$ with $\ZERO$ as unique fixed point. Let $h$ be the restriction of $f$ on $H$. Then $h\in F_\ZERO(H)$, $f$ extends $h$, and $f\in F_\ZERO(G,I)$ ($\subseteq F_\ZERO(G)$). By Lemma~\ref{lem:odd-extension}, we have $\Fixe^+(h)\cap \Sol^+(\S)=\Fixe^+(f)_I=\emptyset$. Thus $\S,H$ is a positive instance of \OSvsD.

Conversely, suppose that there exists $h\in F_\ZERO(H)$ with $\Fixe^+(h)\cap \Sol^+(\S)=\emptyset$. By Lemma~\ref{lem:odd-extension}, there exists $f\in F_\ZERO(G,I)$ extending $h$ such that $\Fixe^+(f)_I=\Fixe^+(h)\cap \Sol^+(\S)=\emptyset$. It means that if $x$ is a fixed point of $f$ then $x_I=\ZERO$ and thus $x=\ZERO$ since $f\in F_\ZERO(G,I)$. So $\ZERO$ is the unique fixed point of $f$. Thus $G$ is UFP-dynamic.
\end{proof}
\end{toappendix}

\newpagexorapx
\section{Complexity lower bound for \texorpdfstring{$\dUFP$}{d-UFP}}\label{sec:dUFP}

\begin{quote}\decisionpb
	{Quantified satisfiability with 2 alternating quantifiers ~(\qsat)}
	{a Boolean formula $\psi$ with variable set $\Lambda$ and $E\subseteq \Lambda$.}
	{does there exist $z\in\B^E$ such that $\psi$ is satisfied by all $x\in\B^\Lambda$ extending $z$?}
\end{quote}

\begin{quote}
\decisionpb
	{{\dUFP} restricted to loop-to-cycle digraphs (\dLCUFP)}
	{a loop-to-cycle digraph $G$ with $\Delta^-(G)\leq d$.}
	{is $G$ UFP-dynamic?}
\end{quote}

We prove below that {\qsat} reduces to {\dUFP} for all $d\geq 2$. Since {\qsat} is $\NPoNP$-complete \cite[Theorem 17.10]{P94}, this proves that {\dUFP} is $\NPoNP$-hard for all $d\geq 2$. Actually, the reduction even shows that {\dLCUFP} is $\NPoNP$-hard for all $d\geq 2$. Combined with Corollary~\ref{cor:in-np-or-nponp} this yields the second half of Theorem~\ref{thm:UFP}: $\NPoNP$-completeness of $d$-\UFP.

\begin{theoremrep}\label{thm:qsat}
Let $\psi,E$ be an instance of \qsat. We can construct in polynomial time a loop-to-cycle digraph $G_{\psi,E}$, with maximum in-degree $2$, such that:
\[
\text{$\psi,E$ is a positive instance of \qsat} \iff \text{$G_{\psi,E}$ is UFP-dynamic} .
\] 
\end{theoremrep}

\begin{proofsketch}
The reduction is through \OSvsD. The main idea of the proof is to construct three objects: a digraph $H$, an odd-system $\S$, and $G = G_{\psi, E}$ the $\S$-extension of $H$. The construction is as follows. Let $f^\psi$ be the circuit encoding $\psi$ and let $\S_\psi = \S_{ f^\psi }$ be the tight encoding of $f^\psi$.  The vertices of $H$ are the variables of $\S_\psi$, which include the inputs $\Lambda$ and the output $\psi$, and an additional vertex $v$. Figure \ref{fig:combined-ufp-complexity-construction} (left) is an illustration of the graph $H$ and of the single constraint in $\S$ which is not also in $\S_\psi$. 
The proof then shows that $\S$ encodes $\psi$, and that $\psi,E$ is a positive instance of \qsat{} if and only if $G_{\psi,E}$ is UFP-dynamic.
\end{proofsketch}

\begin{toappendix}
    \begin{figure}[!ht]
    \centering
    \includegraphics[page=8]{ipefigs-dck.pdf}
    \caption{Graph $H$ from our construction. Vertices in the yellow box (bottom right) are in the graph $H$ but are relevant to the proof only through the constraints of $\S_\psi$, which are not shown.}
    \label{fig:d-ufp-complexity-construction}
\end{figure}
\end{toappendix}

\begin{figure}[!ht]
    \centering
    \includegraphics[page=9]{ipefigs-dck.pdf}
    \caption{\textit{Left:} Graph $H$ from our construction for Theorem~\ref{thm:qsat}. 
    \textit{Right:} Graph $H$ from our construction in our construction for Theorem:~\ref{thm:succinct-4cnf-iff-ufp}.
    \textit{Both:} Vertices in the yellow box (at the bottom) are in the graph $H$ but are relevant to the proofs only through the constraints of $\S_\psi$, which are not shown.}
    \label{fig:combined-ufp-complexity-construction}
\end{figure}

\begin{proof}
Let $\Lambda$ be the variable set of $\psi$, and let $f^\psi$ be the circuit encoding $\psi$. Let $\S_\psi = \S_{ f^\psi }$ be the tight encoding of $f^\psi$. Let $H$ be the digraph defined as follows. The vertices are the variables of $\S_\psi$, which include the inputs $\Lambda$ and the output $\psi$, and an additional vertex $v$. The reader may find Figure \ref{fig:d-ufp-complexity-construction} helpful in following the construction and proof. The arcs are:
\begin{itemize}
\item 
a loop on each vertex not in $E\cup\{\psi\}$,
\item 
an arc from each vertex in $\{\psi,v\}$ to each vertex in $E$.
\end{itemize}

Let $\S$ be the odd-system defined as follows. The set of variables $I$ of $\S$ is the vertex set of $H$, and the constraint set consists of:
\begin{itemize}
\item 
all the constraints of $\S_\psi$, 
\item 
a diff-constraint between $\psi$ and $v$.
\end{itemize}

\begin{claim}\label{clm:d-ufp-s-encodes-psi}
     $\S$ encodes $\psi$, that is $\PosSol(\S)_{I(\S_\psi)}=\Fixe(f^\psi)$.
\end{claim}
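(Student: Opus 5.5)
The claim asks that $\PosSol(\S)_{I(\S_\psi)}=\Fixe(f^\psi)$. The variable set of $\S$ is $I=I(\S_\psi)\cup\{v\}$, and its constraints are those of $\S_\psi$ plus one diff-constraint $(I,\{\psi,v\})$. The approach is to compare the solutions of $\S$ with those of $\S_\psi$, using Lemma~\ref{lem:encoding_circuit}, which gives $\PosSol(\S_\psi)=\Sol(\S_\psi)=\Fixe(f^\psi)$. Recall that $\ZERO\notin\Sol(\S_\psi)$ because $f^\psi$ contains a not-gate.

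First we show $\PosSol(\S)_{I(\S_\psi)}\subseteq\Fixe(f^\psi)$. Take $x\in\PosSol(\S)$ and set $y=x_{I(\S_\psi)}$.
\begin{itemize}
\item Every constraint $(A,B)$ of $\S_\psi$ satisfies $A,B\subseteq I(\S_\psi)$, so whether it holds depends only on $y$. Hence $y\in\Sol(\S_\psi)=\Fixe(f^\psi)$.
\item The only subtle point is that $\Sol(\S_\psi)$ is taken over all of $\B^{I(\S_\psi)}$, including $\ZERO$. Lemma~\ref{lem:encoding_circuit} already tells us $\ZERO\notin\Sol(\S_\psi)$, so $y\neq\ZERO$ automatically and no separate case is needed.
\end{itemize}

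Next we show the reverse inclusion. Take $y\in\Fixe(f^\psi)=\Sol(\S_\psi)$; then $y\neq\ZERO$. Extend $y$ to $x\in\B^I$ by setting $x_v=\neg y_\psi$.
\begin{itemize}
\item The constraints inherited from $\S_\psi$ hold because they only involve $y$.
\item The diff-constraint between $\psi$ and $v$ holds because $x\neq\ZERO$ and $x_\psi\neq x_v$, so $x_{\{\psi,v\}}$ is odd.
\end{itemize}
Thus $x\in\PosSol(\S)$ and $x_{I(\S_\psi)}=y$, which gives equality.

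No step is a real obstacle; the argument is bookkeeping. The only care needed is with the $\ZERO$ exception in the satisfaction of the diff-constraint, and that is handled by $\ZERO\notin\Sol(\S_\psi)$. As a consequence, for every $x\in\PosSol(\S)$ we have $x_v=\neg x_\psi$, which is presumably what the rest of the proof of Theorem~\ref{thm:qsat} exploits.
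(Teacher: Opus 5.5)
There is a genuine gap in your forward inclusion. The fact you rely on, $\ZERO\notin\Sol(\S_\psi)$, is false. Every odd-constraint $(A,B)$ is satisfied by $\ZERO$ because $x_A=\ZERO$, so $\ZERO$ is a solution of every odd-system; the paper says so right after the definition. Lemma~\ref{lem:encoding_circuit} only gives $\PosSol(\S_\psi)=\Fixe(f^\psi)$, i.e.\ $\Sol(\S_\psi)=\Fixe(f^\psi)\cup\{\ZERO\}$. (The sentence claiming otherwise in that lemma's proof is a slip.)

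So the case $y=\ZERO$, i.e.\ $x=e_v$, is exactly the one you dismissed with ``no separate case is needed'', and it is the real content of the claim. Under your own reading, every constraint of $\S_\psi$ has $A\subseteq I(\S_\psi)$ and depends only on $y$. Then $e_v$ satisfies all constraints of $\S_\psi$, since its restriction is $\ZERO$. It also satisfies the diff-constraint $(I,\{\psi,v\})$, since $(e_v)_\psi=0\neq 1=(e_v)_v$. Hence $e_v\in\PosSol(\S)$ while $\ZERO\notin\Fixe(f^\psi)$, and the claim would be false.

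The missing idea is the paper's step ``$\S_\psi$ contains a diff-constraint, and hence $x\neq\ZERO$''. It requires reading the not-gate diff-constraints inside $\S$ as $(I,\{a,b\})$, with $I$ the \emph{whole} variable set of $\S$; a diff-constraint is defined relative to the full variable set. Take $x\in\PosSol(\S)$. Since $x\neq\ZERO$, the diff-constraint for a not-gate $a$ with in-neighbor $b$ forces $x_a\neq x_b$, so $y=x_{I(\S_\psi)}\neq\ZERO$. Then $y$ also satisfies the diff-constraints $(I(\S_\psi),\{a,b\})$, and the and-gate constraints, which only involve $y$. Hence $y\in\PosSol(\S_\psi)=\Fixe(f^\psi)$.

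Your reverse inclusion is fine, with one correction. There $y\neq\ZERO$ follows from $\ZERO\notin\Fixe(f^\psi)$, not from $\Fixe(f^\psi)=\Sol(\S_\psi)$. The full-$I$ diff-constraints then hold because $x\neq\ZERO$ and $y_a\neq y_b$.
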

\begin{claimproof}
    First, suppose $x \in \PosSol(\S)_{I(\S_\psi)}$. Clearly, $x \in \Sol(\S_\psi)_{I(\S_\psi)}=\Sol(\S_\psi)$ since the restriction of $x$ to the variable set of $\S_\psi$ must be a solution of $\S_\psi$. 
    So, either $x=\ZERO$ or $x \in \PosSol({\S_\psi})$. 
    Since $f^\psi$ is a circuit containing at least one not-gate, $\S_\psi$ contains a diff-constraint, and hence $x \ne \ZERO$. Thus $x \in \PosSol(\S_\psi)$, and since $\S_\psi$ tightly encodes $f^\psi$, we obtain $x \in \Fixe(f^\psi)$. 
    Second, suppose $x \in \Fixe(f^\psi)$. Then $x$ has an extension $x'\in\Sol(\S)$, obtained by setting $x'_v=\neg x_{\psi}$ and thus $x \in \PosSol(\S)_{I(\S_\psi)}$. 
\end{claimproof}

Let $G_{\psi,E}$ be the $\S$-extension of $H$; it is a loop-to-cycle digraph since $H$ is a loop-to-cycle digraph. Furthermore, since $\Delta^-(H)=2$, we have $\Delta^-(G_{\psi,E})=2$.

\begin{claim}
    If $\psi,E$ is a yes-instance of \qsat, then $G_{\psi,E}$ is UFP-dynamic.
\end{claim}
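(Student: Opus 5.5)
By Theorem~\ref{thm:odd-extension}, it suffices to show that $\S,H$ is a positive instance of \OSvsD. That is, we must exhibit $h\in F_\ZERO(H)$ with $\Fixe^+(h)\cap\Sol^+(\S)=\emptyset$. Fix a witness $z\in\B^E$ such that $\psi$ is satisfied by every $x\in\B^\Lambda$ extending $z$. The plan is to build $h$ so that every nonzero fixed point compatible with $\S$ is forced to restrict to $z$ on $E$, and then let the circuit encoding turn this into a contradiction.

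First we record what $F_\ZERO(H)$ forces. The vertex $\psi$ has no in-neighbour in $H$, so $h_\psi$ is constant, and $h(\ZERO)=\ZERO$ gives $h_\psi\equiv 0$. Each vertex $u\notin E\cup\{\psi\}$ has only a loop, so $h_u$ depends exactly on $x_u$ and fixes $0$; hence $h_u(x)=x_u$ (this applies in particular to $v$). Each $e\in E$ has in-neighbours exactly $\psi$ and $v$, so we only need $h_e(x_\psi,x_v)$ to depend essentially on both arguments and to satisfy $h_e(0,0)=0$. We define $h_e(x)=\neg x_\psi\land x_v$ if $z_e=1$, and $h_e(x)=x_\psi\land x_v$ if $z_e=0$. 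Both choices depend on both variables, vanish at $(0,0)$, and satisfy $h_e(0,1)=z_e$. Thus $h\in F_\ZERO(H)$.

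Now suppose $x\in\Fixe^+(h)\cap\Sol^+(\S)$. Since $x$ is fixed, $x_\psi=h_\psi(x)=0$. Since $x\neq\ZERO$ and $\S$ contains a diff-constraint between $\psi$ and $v$, we get $x_v\neq x_\psi$, so $x_v=1$. Then for every $e\in E$ we have $x_e=h_e(x)=h_e(0,1)=z_e$, so $x_\Lambda$ extends $z$.

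On the other hand, $x\in\Sol^+(\S)$, so by Claim~\ref{clm:d-ufp-s-encodes-psi} the restriction $x_{I(\S_\psi)}$ is a fixed point of the circuit $f^\psi$. Hence $x_\psi$ equals the value of $\psi$ at $x_\Lambda$, which is $1$ by the choice of $z$. This contradicts $x_\psi=0$. Therefore $\Fixe^+(h)\cap\Sol^+(\S)=\emptyset$, and Theorem~\ref{thm:odd-extension} yields that $G_{\psi,E}$ is UFP-dynamic.

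The only delicate step is choosing $h_E$ so that it genuinely depends on both $\psi$ and $v$ while still fixing $0$ and taking the prescribed values $h_e(0,1)=z_e$; the two and-type formulas above handle both cases $z_e\in\B$.
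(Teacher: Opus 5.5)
Your proof is correct and follows essentially the same route as the paper's. You use the same $h$ (constant $0$ on $\psi$, the identity on looped vertices, and two-input functions on $E$ with $h_e(0,1)=z_e$) and the same contradiction: $x_\psi=0$ forces $x_v=1$, hence $x_E=z$, while the encoding of $f^\psi$ forces $x_\psi=1$. The only difference is cosmetic: for $z_e=1$ you take $\neg x_\psi\land x_v$ where the paper takes $x_\psi\lor x_v$, and both choices work equally well.
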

\begin{claimproof}
Suppose that $\psi,E$ is a positive instance of \qsat. Let $z\in\B^E$ such that $\psi$ is satisfied by all assignments extending $z$. Let $h\in F_\ZERO(H)$ defined as follows: 
\begin{itemize}
\item
$h_{\psi}(x)=0$,
\item
$h_\lambda(x)=x_{\psi}\lor x_v$ for all $\lambda \in E$ with $z_\lambda=1$,
\item
$h_\lambda(x)=x_{\psi}\land x_v$ for all $\lambda \in E$ with $z_\lambda=0$,
\item
$h_{a}(x)=x_a$ for every vertex $a$ distinct from $E\cup\{\psi\}$.
\end{itemize}
Suppose, for a contradiction, that there exists $x\in \Fixe^+(h)\cap \Sol^+(\S)$. Since $x\in \Sol^+(\S)$, $x_v \neq x_\psi$. Since $x\in \Fixe(h)$, we have $x_\psi=0$ and hence $x_v=1$. We deduce from the definition of $h$ that $x_E=z$. Since $x_\psi=0$ and since $\S$ encodes $f^\psi$, $x_\Lambda$ is not a satisfying assignment of $\psi$. However, since $x_E=z$, $x_\Lambda$ extends $z$, a contradiction. Thus $\Fixe^+(h)\cap \Sol^+(\S)=\emptyset$ and, by Theorem~\ref{thm:odd-extension}, $G_{\psi,E}$ is UFP-dynamic. 
\end{claimproof}

\begin{claim}
    If $G_{\psi,E}$ is UFP-dynamic, then $\psi, E$ is a yes-instance of \qsat.
\end{claim}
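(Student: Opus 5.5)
We argue the contrapositive direction directly. Suppose $G_{\psi,E}$ is UFP-dynamic. By Theorem~\ref{thm:odd-extension}, $\S,H$ is a positive instance of \OSvsD, so we may fix $h\in F_\ZERO(H)$ with $\Fixe^+(h)\cap\Sol^+(\S)=\emptyset$. The plan is to read off a witness $z\in\B^E$ from the local functions of $h$ on $E$, and then show that every assignment $x_\Lambda$ extending $z$ satisfies $\psi$.

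First we pin down the structure of $h$. Every vertex $a\notin E\cup\{\psi\}$ carries only a loop in $H$, and $h(\ZERO)=\ZERO$; hence $h_a(x)=x_a$. The vertex $\psi$ has no in-neighbors, so $h_\psi$ is constant, and $h(\ZERO)=\ZERO$ forces $h_\psi=0$. Each $\lambda\in E$ has in-neighbors exactly $\psi$ and $v$, and $h_\lambda(0,0)=0$. We then define
\[
z_\lambda=h_\lambda(x_\psi=0,\,x_v=1)\qquad(\lambda\in E).
\]

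Next, take any assignment $y\in\B^\Lambda$ extending $z$, and suppose for contradiction that $\psi(y)=0$. Let $x'\in\Fixe(f^\psi)$ be the fixed point of the circuit extending $y$. Then $x'_\psi=0$. By Claim~\ref{clm:d-ufp-s-encodes-psi}, $x'$ extends to some $x\in\Sol(\S)$ with $x_v=\neg x'_\psi=1$, and $x\neq\ZERO$ since $x_v=1$. We now check that $x\in\Fixe(h)$:
\begin{itemize}
\item $x_\psi=0=h_\psi(x)$;
\item loop vertices are fixed by $h$;
\item for $\lambda\in E$, $h_\lambda(x)=h_\lambda(0,1)=z_\lambda=y_\lambda=x_\lambda$.
\end{itemize}
Thus $x\in\Fixe^+(h)\cap\Sol^+(\S)$, a contradiction. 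Hence $\psi$ is satisfied by every $y$ extending $z$, and $\psi,E$ is a yes-instance.

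The only delicate point is determining $h$ on the non-$E$ vertices. Here we must use $h(\ZERO)=\ZERO$ together with the exact in-neighborhoods in $H$: loops force the identity, and $\psi$ being a source forces a constant, which must then be $0$. Everything else follows directly from the encoding claim and Theorem~\ref{thm:odd-extension}.
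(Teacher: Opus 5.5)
Your proof is correct and follows the paper's argument. You read off the same witness $z_\lambda=h_\lambda(e_v)$ and reach the contradiction by exhibiting a nonzero fixed point of $h$ that also lies in $\Sol^+(\S)$; the paper instead forces $(x'_\psi,x'_v)=(0,1)$ and concludes $x\neq x'$, which is the same step. Your explicit justification that loops force $h_a(x)=x_a$ and that the source $\psi$ forces $h_\psi=0$ fills in details the paper leaves implicit (your opening word ``contrapositive'' is a misnomer, since you argue directly).
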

\begin{claimproof}
Conversely, suppose that $G_{\psi,E}$ is UFP-dynamic. By Theorem~\ref{thm:odd-extension}, there exists $h\in F_\ZERO(H)$ such that $\Fixe^+(h)\cap \Sol^+(\S)=\emptyset$. Let $z\in\B^E$ defined by $z_\lambda=h_\lambda(e_v)$ for all $\lambda\in E$. Let $z'\in\B^\Lambda$ extending $z$. Since $\S$ encodes $f^\psi$, there exists $x\in\Sol^+(\S)$ with $x_\Lambda=z'$ and $x_v\neq  x_\psi$. Let $x'\in\B^I$ such that $x'_\psi<x'_v$ and $x'_a=x_a$ for all $a\in I\setminus \{\psi,v\}$. Then $x'\in\Fixe^+(h)$ since $x'_\psi=0$ and, for all $\lambda\in E$, we have $h_\lambda(x')= h_\lambda(e_v) =z_\lambda=x_\lambda=x'_\lambda$. Since $x\neq x'$, we deduce that $x_\psi>x_v$, and, since $\S$ encodes $f^\psi$, $x_\Lambda=z'$ is a satisfying assignment of $\psi$. So every assignment extending $z$ is satisfying. Thus $\psi,E$ is a positive instance of \qsat.
\end{claimproof}
The main theorem follows.
\end{proof}

\newpagexorapx
\section{Complexity lower bound for \UFP}\label{sec:UFP}

Consider a 4-CNF formula $\psi$ over a set $\Lambda$ of variables and a set $M$ of clauses. By adding variables that do not appear in any clause, and by duplicating existing clauses, we can assume that the numbers of variables and clauses are powers of $2$, say $2^\ell$
and $2^m$ respectively (the resulting formula is of linear size in the size of the initial formula). Moreover, we can assume that all clauses have size exactly 4 by duplicating a literal within a clause as necessary.
Then, to simplify notations, we take two sets $V,C$ of size $\ell$ and $m$ respectively, and~write:
\[
\Lambda=\{\lambda_v\}_{v\in \B^V}, \quad M=\{\mu_c\}_{c\in \B^C},\qquad \mu_c=(\mu_{c,00},\mu_{c,01},\mu_{c,10},\mu_{c,11}).
\]
Thus, each clause is viewed as an ordered quadruple of literals (not necessarily distinct), rather than as a set of at most four literals. Also, we denote the positive and negative literals associated with $\lambda_v$ by $\EMM{\lambda_v^1}=\lambda_v$ and $\EMM{\lambda_v^0}=\neg \lambda_v$, respectively. Finally, it is convenient to regard an assignment of $\psi$ as a configuration $z$ on $\B^V$: a literal $\lambda_v^\sigma$ is \EM{true under} %
$z$ if $\sigma=z_v$. An assignment is then satisfying if, for each clause index $c\in\B^C$, the clause $\mu_c$ contains a literal which is true under $z$. 

A \EM{succinct representation} of $\psi$ is a circuit $f$ 
with the following specifications: the set of input vertices is $C\cup P$, where $P=\{p_0,p_1\}$, and the set of output vertices is $V\cup\{\rho\}$ (these four sets are pairwise disjoint); and for every fixed point $x$ of $f$ we have
\begin{equation}\label{eq:succinct}
    \mu_{x_C, x_P} = \lambda^{x_\rho}_{x_V}.
\end{equation}

Hence, each input configuration corresponds to a clause ($x_C$) and position in that clause ($x_P$), and the output configuration gives the corresponding literal: the involved variable ($x_V$) and the sign of the literal ($x_\rho$). 

It is possible to show by a lifting argument similar to that in \cite{PY86-projection} that our variant of {\succinct} is $\NEXPTIME$-complete, since $4$\textsc{SAT} (as above) inherits $\NP$-completeness from $3$\textsc{SAT} \cite{P94}.

Having defined the necessary technical concepts, we now introduce the principal problems of interest for this section.

\begin{quote}\decisionpb
	{\succinct}
	{a succinct representation of a 4-CNF formula $\psi$.}
	{is $\psi$ satisfiable?}
\end{quote}

\begin{quote}
\decisionpb
	{{\UFP} restricted to loop-to-cycle digraphs (\LCUFP)}
	{a loop-to-cycle digraph $G$.}
	{is $G$ UFP-dynamic?}
\end{quote}

We prove below that {\succinct} reduces to {\UFP}. Since {\succinct} is $\NEXPTIME$-complete \cite[Theorem 20.2]{P94}, this proves that {\UFP} is $\NEXPTIME$-hard. Actually, the reduction even shows that {\LCUFP} is $\NEXPTIME$-hard. Combined with Corollary~\ref{cor:in-np-or-nponp} this yields the first half of Theorem~\ref{thm:UFP}: $\NEXPTIME$-completeness of \UFP.

\begin{theoremrep}\label{thm:succinct-4cnf-iff-ufp}
Given a succinct representation of a 4-CNF formula $\psi$, we can construct in polynomial time a loop-to-cycle digraph $G_\psi$ such that:
\[
\text{$\psi$ is satisfiable} \iff \text{$G_\psi$ is UFP-dynamic} .
\] 
\end{theoremrep}

\begin{proofsketch}
The proof is similar to, though more involved than, that of Theorem~\ref{thm:qsat}, and again through \OSvsD. We construct a graph $H$, and odd-system $\S$, and $G = G_\psi$ which is the $\S$-extension of $H$. More precisely, Let $f^\psi$ be the succinct representation of $\psi$ and let $\S_\psi = \S_{ f^\psi }$ be the tight encoding of $f^\psi$. The vertices of $H$ are the variables of $\S_\psi$, which include the inputs $C \cup P$ and outputs $V \cup \{\rho\}$ of $f^\psi$, and three additional vertices  $s_0,s_1,\nu$. 
Figure \ref{fig:combined-ufp-complexity-construction} is an illustration of the graph $H$ and of the two constraints in $\S$ which are not also in $\S_\psi$. 
The proof shows that $\S$ does encode $\psi$, and that $\psi$ is satisfiable if and only if $G_\psi$ is UFP-dynamic.
\end{proofsketch}

\begin{toappendix}
    
\begin{figure}[!ht]
    \centering
    \includegraphics[page=3]{ipefigs-dck.pdf}
    \caption{Graph $H$ from our construction. 
    Vertices in the yellow box at the bottom are in the graph $H$ but are relevant to the proof only through the constraints of $S_\psi$, which are not shown.}
    \label{fig:ufp-complexity-construction-v2}
\end{figure}
\end{toappendix}

\begin{proof}
Let $f^\psi$ be the succinct representation of $\psi$. Let $\S_\psi = \S_{ f^\psi }$ be the tight encoding of $f^\psi$. Let $H$ be the digraph defined as follows. The vertices are the variables of $\S_\psi$, which include the inputs $C \cup P$ and outputs $V \cup \{\rho\}$ of $f^\psi$, in addition to some other variables as necessary to guarantee the behavior prescribed above, and three additional vertices  $s_0,s_1,\nu$. The reader may find Figure \ref{fig:ufp-complexity-construction-v2} helpful in following the construction and proof. 
The arcs are:
\begin{itemize}
\item 
a loop on each vertex distinct from $s_0,p_0,p_1,\nu$,
\item 
an arc from each vertex in $C$ to each vertex in $\{p_0,p_1\}$,
\item
an arc from each vertex in $V$ to $\nu$,
\item
an arc from $s_1$ to each vertex in $\{p_0,p_1,\nu\}$. 
\end{itemize}

Let $\S$ be the odd-system defined as follows. The set of variables $I$ of $\S$ is the vertex set of $H$, and the constraint set consists of:
\begin{itemize}
\item 
all the constraints of $\S_\psi$, 
\item 
a diff-constraint between $s_0$ and $s_1$,
\item 
a diff-constraint between $\rho$ and $\nu$.
\end{itemize}

\begin{claim}\label{clm:ufp-s-encodes-psi}
    $\S$ encodes $f^\psi$, that is $\PosSol(\S)_{I(\S_\psi)}=\Fixe(f^\psi)$.
\end{claim}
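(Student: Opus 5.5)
We prove the two inclusions separately, following the same pattern as Claim~\ref{clm:d-ufp-s-encodes-psi}. Write $I=I(\S)=I(\S_\psi)\cup\{s_0,s_1,\nu\}$. The constraints of $\S$ are those of $\S_\psi$, all of whose variables lie in $I(\S_\psi)$, together with the diff-constraints between $s_0$ and $s_1$ and between $\rho$ and $\nu$. Each diff-constraint is of the form $(I,\{a,b\})$, with first component the \emph{whole} variable set $I$.

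For the inclusion $\PosSol(\S)_{I(\S_\psi)}\subseteq\Fixe(f^\psi)$, take $x\in\PosSol(\S)$ and let $y=x_{I(\S_\psi)}$. Every constraint of $\S_\psi$ is a constraint of $\S$ whose sets lie inside $I(\S_\psi)$. Satisfaction of such a constraint depends only on the coordinates in $I(\S_\psi)$, so $y\in\Sol(\S_\psi)$.

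We must still rule out $y=\ZERO$. Since $f^\psi$ contains a not-gate, $\S_\psi$ contains a diff-constraint $(I(\S_\psi),\{u,w\})$ between some gate $u$ and its in-neighbor $w$. Because $y$ satisfies it, either $y=\ZERO$ or $y_u\neq y_w$. We need the latter, and obtaining it is the only delicate point. It is resolved by using $x\neq\ZERO$ through the diff-constraint between $s_0$ and $s_1$ in $\S$: its first component is all of $I$, and $x_I\neq\ZERO$, so $x_{s_0}\neq x_{s_1}$. This alone does not force $y\neq\ZERO$. Suppose therefore $y=\ZERO$. Then $x_\rho=0$, and the diff-constraint between $\rho$ and $\nu$, again triggered by $x\neq\ZERO$, gives $x_\nu=1$. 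None of this is contradictory on its own, so we must look at how the diff-constraint of $\S_\psi$ was defined.

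The fix is to read it as a constraint of $\S$ rather than of $\S_\psi$. Once $\S_\psi$ is embedded in $\S$, a diff-constraint between $u$ and $w$ is understood as $(I,\{u,w\})$ with $I$ the full variable set of $\S$, consistent with the definition of diff-constraints relative to the system's variable set. Since $x\neq\ZERO$, this gives $x_u\neq x_w$, so $y\neq\ZERO$. Hence $y\in\PosSol(\S_\psi)=\Fixe(f^\psi)$ by Lemma~\ref{lem:encoding_circuit}. (Equivalently, one can show directly that every constraint of $\S_\psi$ holds and that the not-gate forces $y\neq\ZERO$.)

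For the reverse inclusion, take $y\in\Fixe(f^\psi)$. Then $y\in\Sol(\S_\psi)$ and $y\neq\ZERO$ by Lemma~\ref{lem:encoding_circuit}. Extend $y$ to $x$ on $I$ by setting $x_{s_0}=0$, $x_{s_1}=1$ and $x_\nu=\neg y_\rho$. Every constraint of $\S_\psi$ still holds, since each is $\ZERO$-or-odd on sets inside $I(\S_\psi)$, apart from the diff-constraints, which hold because $y_u\neq y_w$. The two new diff-constraints hold by construction. Finally $x\neq\ZERO$, so $x\in\PosSol(\S)$ and $y=x_{I(\S_\psi)}$.

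The main obstacle is the first inclusion: ensuring that a nonzero solution of $\S$ cannot restrict to the zero configuration on $I(\S_\psi)$. The argument resolves it through the not-gate diff-constraint, read over the full variable set of $\S$.
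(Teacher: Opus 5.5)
Your proof is correct and follows the same route as the paper: restrict a nonzero solution of $\S$ to $I(\S_\psi)$ to get a solution of $\S_\psi$, use the not-gate diff-constraint to exclude $\ZERO$, and for the converse extend a fixed point by setting $x_{s_0}\neq x_{s_1}$ and $x_\nu=\neg x_\rho$. The paper's step ``$\S_\psi$ contains a diff-constraint, and hence $x\neq\ZERO$'' silently relies on the convention you make explicit, namely reading that constraint as $(I(\S),\{u,w\})$, and you are right that this convention is needed: with the literal constraint $(I(\S_\psi),\{u,w\})$, a configuration equal to $\ZERO$ on $I(\S_\psi)$ with $x_{s_0}\neq x_{s_1}$ and $x_\nu=1$ would be a nonzero solution of $\S$ that contradicts the claim.
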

\begin{claimproof}
    First, suppose $x \in \PosSol(\S)_{I(\S_\psi)}$. Clearly, $x \in \Sol(\S_\psi)_{I(\S_\psi)}=\Sol(\S_\psi)$ since the restriction of $x$ to the variable set of $\S_\psi$ must be a solution of $\S_\psi$. 
    So, either $x=\ZERO$ or $x \in \PosSol({\S_\psi})$. 
    Since $f^\psi$ is a circuit containing at least one not-gate, $\S_\psi$ contains a diff-constraint, and hence $x \ne \ZERO$. Thus $x \in \PosSol(\S_\psi)$, and since $\S_\psi$ tightly encodes $f^\psi$, we obtain $x \in \Fixe(f^\psi)$. 
    Second, suppose $x \in \Fixe(f^\psi)$. Then $x$ has exactly two extensions $x^0,x^1\in\Sol(\S)$, obtained by setting $x^0_{\nu}=x^1_{\nu}=\neg x_{\rho}$, $x^0_{s_0}=x^1_{s_1}=0$ and $x^0_{s_1}=x^1_{s_0}=1$, and thus $x \in \PosSol(\S)_{I(\S_\psi)}$. 
    The claim follows.
\end{claimproof}

Let $G_\psi$ be the $\S$-extension of $H$; it is a loop-to-cycle digraph since $H$ is a loop-to-cycle digraph. 

\begin{claim}\label{clm:psi-sat-implies-ufp}
    If $\psi$ has a satisfying assignment then $G_\psi$ is UFP-dynamic.
\end{claim}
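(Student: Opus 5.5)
We prove the claim through Theorem~\ref{thm:odd-extension}: since $G_\psi$ is the $\S$-extension of $H$, it suffices to exhibit $h\in F_\ZERO(H)$ with $\Fixe^+(h)\cap\Sol^+(\S)=\emptyset$. Fix a satisfying assignment $z\in\B^{\B^V}$ of $\psi$. For every clause index $c\in\B^C$, choose a position $\pi(c)\in\B^P$ such that the literal $\mu_{c,\pi(c)}$ is true under $z$; it exists because $z$ is satisfying. The idea is to hide the exponentially large objects $z$ and $\pi$ inside the local functions of $\nu$ and of $p_0,p_1$. Their in-neighbourhoods in $H$ are exactly $V\cup\{s_1\}$ and $C\cup\{s_1\}$, so they can read an arbitrary Boolean function of $x_V$ and of $x_C$, respectively.

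\textbf{Definition of $h$.} We set
\begin{itemize}
\item $h_a(x)=x_a$ for every looped vertex $a$;
\item $h_{s_0}(x)=0$, since $s_0$ is an isolated trivial vertex;
\item $h_\nu(x)=z_{x_V}$ if $x_{s_1}=1$, and $h_\nu(x)=g(x_V)$ if $x_{s_1}=0$;
\item $h_{p_i}(x)=\pi(x_C)_{p_i}$ if $x_{s_1}=1$, and $h_{p_i}(x)=g_i(x_C)$ if $x_{s_1}=0$.
\end{itemize}
The $s_1=0$ branches are auxiliary functions with $g(\ZERO)=g_i(\ZERO)=0$ that depend essentially on every coordinate. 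We take parity, and switch to OR if the $s_1=1$ branch happens to coincide with parity; this switch guarantees that the two branches differ somewhere. Consequently $h_\nu$ depends essentially on all of $V$ (through the $s_1=0$ branch) and on $s_1$ (because the branches differ), and similarly for $p_0,p_1$. Hence $h\in F(H)$, and $h(\ZERO)=\ZERO$ holds because $x_{s_1}=0$ selects the auxiliary branch.

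\textbf{Key argument.} Suppose $x\in\Fixe^+(h)\cap\Sol^+(\S)$ and derive a contradiction.
\begin{itemize}
\item Being fixed forces $x_{s_0}=0$. The diff-constraint between $s_0$ and $s_1$ (with $x\neq\ZERO$) then gives $x_{s_1}=1$.
\item Fixedness now yields $x_P=\pi(x_C)$ and $x_\nu=z_{x_V}$.
\item By Claim~\ref{clm:ufp-s-encodes-psi}, $x_{I(\S_\psi)}$ is a fixed point of $f^\psi$. By \eqref{eq:succinct}, $\mu_{x_C,\pi(x_C)}=\lambda^{x_\rho}_{x_V}$, and by the choice of $\pi$ this literal is true under $z$, i.e.\ $x_\rho=z_{x_V}$.
\item The diff-constraint between $\rho$ and $\nu$ gives $x_\nu\neq x_\rho=z_{x_V}$, contradicting $x_\nu=z_{x_V}$.
\end{itemize}
So the intersection is empty, and Theorem~\ref{thm:odd-extension} shows that $G_\psi$ is UFP-dynamic.

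\textbf{Expected obstacle.} The main delicacy is guaranteeing that $h$ has exactly the interaction graph $H$: $z$ and $\pi$ need not depend on all inputs, nor differ from the $s_1=0$ branch. The two-branch trick handles this, because fixed points with $x_{s_1}=0$ are harmless: every fixed point has $x_{s_0}=0$, and such points are excluded from $\Sol^+(\S)$ by the $s_0$--$s_1$ diff-constraint. A minor degenerate case arises when one of the tuples has length $1$, where parity and OR coincide. There we would pick the auxiliary function as the identity and verify the branches still differ, or pad the formula so that $|V|,|C|\ge 2$.
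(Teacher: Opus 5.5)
Your proof is correct and matches the paper's argument: the same $h$ ($0$ on $s_0$, identity on looped vertices, and $z$ and the chosen positions hard-wired into $h_\nu$ and $h_{p_0},h_{p_1}$ on the branch $x_{s_1}=1$), the same deduction from the two diff-constraints and \eqref{eq:succinct}, and the same appeal to Theorem~\ref{thm:odd-extension}. Your parity-or-OR switch fills a small point the paper skips: it uses parity on the $x_{s_1}=0$ branch without checking that this branch differs from the $x_{s_1}=1$ branch, which is needed for the arc from $s_1$. In the degenerate case $|V|=1$, padding works, but your ``identity'' alternative does not; there a constant-$0$ auxiliary branch (giving $x_v\land x_{s_1}$) would.
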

\begin{claimproof}
Suppose that $\psi$ has a satisfying assignment $z$. 
Then there exists words $\alpha, \beta$ such that for all $c \in \B^C$, the literal $\mu_{c,\alpha_c\beta_c}$ is true under $z$.
We define $h\in F_\ZERO(H)$ as follows: 
\begin{itemize}
\item
$h_{s_0}(x)=0$,
\item
$h_{p_0}(x)=\alpha_{x_C}$ if $x_{s_1}=1$ and $h_{p_0}(x)=\sum_{c \in C} x_c$  otherwise, 
\item
$h_{p_1}(x)=\beta_{x_C}$ if $x_{s_1}=1$ and $h_{p_1}(x)=\sum_{c \in C} x_c$ otherwise, 
\item
$h_{\nu}(x)=z_{x_V}$ if $x_{s_1}=1$ and $h_{\nu}(x)=\sum_{v \in V} x_v$ otherwise, 
\item
$h_{a}(x)=x_a$ for every vertex $a$ distinct from $s_0,p_1,p_2,\nu$. 
\end{itemize}

For the sake of reader intuition, we note that the definition of $h_a$ is forced by $h \in F_\ZERO(H)$, and also that the intention behind setting $h_i = \sum_{a \in A} x_a$ is simply to ensure that $h\in F(H)$, i.e. to realize all necessary essential dependencies.

Suppose for contradiction that $x\in \Fixe^+(h)\cap \PosSol(\S)$.
Since $x\in \Sol^+(\S)$ we have $x_{s_0}\neq x_{s_1}$, $x_{\rho}\neq x_\nu$ and $\mu_{x_C,x_P}=\lambda^{x_\rho}_{x_V}$ since $\S$ encodes $f^{\psi}$. Since $x\in \Fixe(h)$ we have $x_{s_0}=0$ and hence $x_{s_1}=1$.
We deduce from the definition of $h$ that $x_P=\alpha_{x_C}\beta_{x_C}$ and $x_{\nu}=z_{x_V}$. \\
Thus the variable $\mu_{x_C,x_P}=\mu_{x_C,\alpha_{x_C}\beta_{x_C}}$ is true under $z$ by definition of $\alpha$ and $\beta$. 
We also have that $\mu_{x_C,x_P}=\lambda^{x_\rho}_{x_V}$ since $x\in \Sol(S)$. Thus $z_{x_V}=x_\rho$, and consequently $x_\rho=x_\nu$, yielding the desired contradiction. Thus $\Fixe^+(h) \cap \PosSol(\S) = \emptyset$ and, by Theorem~\ref{thm:odd-extension} $G_\psi$ is UFP-dynamic.
\end{claimproof}

\begin{claim}\label{clm:ufp-implies-psi-sat}
    If $G_\psi$ is UFP-dynamic then $\psi$ has a satisfying assignment.
\end{claim}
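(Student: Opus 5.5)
By Theorem~\ref{thm:odd-extension}, the hypothesis that $G_\psi$ is UFP-dynamic yields some $h\in F_\ZERO(H)$ with $\Fixe^+(h)\cap\PosSol(\S)=\emptyset$. The plan is to extract a satisfying assignment from the behaviour of $h$ on configurations with $x_{s_1}=1$, in the same way that $z$ was read off as $z_\lambda=h_\lambda(e_v)$ in the proof of Theorem~\ref{thm:qsat}.

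First I would pin down the structure of $h$. Every vertex other than $s_0,p_0,p_1,\nu$ carries a loop and no other in-arc in $H$. Since $h(\ZERO)=\ZERO$, each such vertex satisfies $h_a(x)=x_a$. The vertex $s_0$ has no in-neighbours, so $h_{s_0}\equiv 0$. The in-neighbourhoods are $N^-(p_i)=C\cup\{s_1\}$ and $N^-(\nu)=V\cup\{s_1\}$. I would then define the assignment $z\in\B^{\B^V}$ by $z_v=h_\nu(e_{s_1}+\sum_{u\in V,\,v_u=1}e_u)$, i.e. the value of $h_\nu$ when $x_{s_1}=1$ and $x_V=v$. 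Similarly, for each clause index $c$, let $(\alpha_c,\beta_c)=(h_{p_0},h_{p_1})$ evaluated at $x_{s_1}=1$, $x_C=c$. The goal is to show that, for every $c$, the literal $\mu_{c,\alpha_c\beta_c}$ is true under $z$. This gives satisfiability directly.

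The key step is a contradiction argument. Fix $c$ and suppose $\mu_{c,\alpha_c\beta_c}$ is false under $z$. Take the fixed point $y$ of the circuit $f^\psi$ with inputs $y_C=c$ and $y_P=\alpha_c\beta_c$. By \eqref{eq:succinct}, $\mu_{c,\alpha_c\beta_c}=\lambda^{y_\rho}_{y_V}$, and falsity means $z_{y_V}\neq y_\rho$. By Claim~\ref{clm:ufp-s-encodes-psi}, $y$ extends to a solution $x\in\PosSol(\S)$ with $x_{s_0}=0$, $x_{s_1}=1$ and $x_\nu=\neg y_\rho$. I would then check that $x$ is a fixed point of $h$:
\begin{itemize}
\item looped vertices are trivially fixed;
\item $x_{s_0}=0=h_{s_0}(x)$;
\item $h_{p_i}(x)$ depends only on $x_C=c$ and $x_{s_1}=1$, so it equals the prescribed $\alpha_c$ or $\beta_c$, which agrees with $x_P$;
\item $h_\nu(x)=z_{x_V}=z_{y_V}=\neg y_\rho=x_\nu$.
\end{itemize}
Also $x\neq\ZERO$. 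Hence $x\in\Fixe^+(h)\cap\PosSol(\S)$, a contradiction, so every clause is satisfied by $z$.

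The main care point is making sure the local functions are evaluated only through their in-neighbourhoods, so that the quantities $z$, $\alpha$ and $\beta$ are well defined as functions of $x_V$ and $x_C$ alone (given $x_{s_1}=1$). The other point to verify is that the extension $x$ from Claim~\ref{clm:ufp-s-encodes-psi} really can be chosen with $s_1=1$ and $\nu=\neg\rho$; that claim's proof explicitly provides exactly these two extensions, so the argument goes through.
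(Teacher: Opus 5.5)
Your proof is correct and follows essentially the paper's argument. The paper likewise reads off the positions $p(c)$ and the assignment $z$ from $h_{p_0},h_{p_1},h_\nu$ at $x_{s_1}=1$, takes the solution $x$ with $x_{s_0}<x_{s_1}$, $x_C=c$, $x_P=p(c)$, and concludes from $x\notin\Fixe(h)$ that the $\nu$-coordinate must be the one that fails, so $z_{x_V}=x_\rho$. Your version is the contrapositive of this step, and it explicitly checks that every coordinate other than $\nu$ is fixed, which the paper leaves implicit.
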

\begin{claimproof}
If $G_\psi$ is UFP-dynamic, then by Theorem~\ref{thm:odd-extension}, there exists $h\in F_\ZERO(H)$ such that $\Fixe^+(h)\cap \Sol^+(\S)=\emptyset$.

For every clause index $c\in\B^C$, $h$ has a fixed point $x$ extending $c$ with $x_{s_0}<x_{s_1}$. We then refer to $x_P$ as %
the position computed by $h$ for the clause $\mu_c$, and denote it $p(c)$; this is well-defined since the in-neighborhood  of $p_0$ and $p_1$ is $C\cup\{s_1\}$. 
For every variable index $v\in\B^V$, $h$ has a fixed point $x$ extending $v$ with $x_{s_0}<x_{s_1}$. 
We then refer to $x_{\nu}$ as the assignment of $\lambda_v$ computed by $h$; this is again well-defined since the in-neighborhood of $\nu$ and $p_1$ is $V\cup\{s_1\}$. Let $z$ the assignment of $\psi$ such that, for all $v\in\B^{V}$, $z_v$ is the assignment  of $\lambda_v$ computed by $h$. We will prove that $z$ is a satisfying assignment.

Let $c\in\B^{C}$ and let $x\in\Sol^+(\S)$ with $x_{s_0}<x_{s_1}$, $x_C=c$ and $x_P=p(c)$. Because $x\in\Sol^+(\S)$ and $\Fixe^+(h) \cap \PosSol(\S) = \emptyset$, we know that $x \notin \Fixe(h)$.
Then $x_{\nu}\neq x_{\rho}$ and since $\S$ encodes $f^\psi$, we have $\mu_{c,p(c)}=\lambda_v^{x_{\rho}}$ where $v=x_{V}$. 
Since $x\not\in\Fixe(h)$, we deduce that $h_\nu(x) \ne x_\nu$ and so $x + e_\nu \in \Fixe^+(h)$,
thus $x_{\nu}+1$ is the assignment of $\lambda_v$ computed by $h$, that is, $z_v=x_{\nu}+1$. Since $x_{\nu}\neq x_{\rho}$, we deduce that $z_v=x_{\rho}$. Thus clause $c$ contains a literal which is true under $z$, namely $\mu_{c,p(c)}=\lambda_v^{x_{\rho}}$.
We deduce that $z$ is satisfies $\psi$.  
\end{claimproof}

Combining Claims \ref{clm:psi-sat-implies-ufp} and \ref{clm:ufp-implies-psi-sat}, the main result follows.
\end{proof}

\newpagexorapx
\section{Complexity lower bounds for \PeriodPiece{} and \texorpdfstring{$d$-\PeriodPiece}{d-Period Piece}}

Let $\Lambda$ be a non-empty finite subset of $\N^*$. 
For any transformation $f: X \to X$, let $\Per( f )$ denote the set of periodic points of $f$, and $\Per_\Lambda( f )$ the set of periodic points of $f$ on a cycle whose length belongs to $\Lambda$. 
For instance, $\Per_{ \{ 1 \} }( f ) = \Fixe( f )$. 
We shall abuse notation slightly, and write $\Per_\Lambda( f ) = \sigma$ (or $\Per(f)=\sigma$) to mean that $f$ acts as a permutation of $\Per_\Lambda( f )$ (or $\Per(f)$) that is conjugate to $\sigma$. 

We consider the set of permutations $\sigma$ of some non-empty finite set $X = \{1, \dots, k\}$ such that all the cycles of $\sigma$ have lengths in $\Lambda$, which we denote by $S_\Lambda$.
For instance, if $\Lambda = \{ 1 \}$, then $S_\Lambda = \{ \id_k : k \in \N^* \}$, where $\id_k$ is the identity on $k$ elements. 
Let $\Sigma$ be a non-empty finite subset of $S_\Lambda$. 
We say that $f$ is \EM{Period Piece$(\Lambda, \Sigma)$} if $\Per_\Lambda( f ) = \sigma$ for some $\sigma \in \Sigma$.

\begin{quote}
\decisionpb
	{($d$-)$\PeriodPiece(\Lambda, \Sigma)$}
	{a digraph $G$ (with $\Delta^-(G)\leq d$).}
	{is $G$ Period Piece$(\Lambda, \Sigma)$-dynamic?}
\end{quote}

This formalism can express many dynamical properties about periodic points. 
For instance, if $\Lambda = \{ 1 \}$ and $\Sigma = \{ \id_1, \dots, \id_k \}$, Period Piece$(\Lambda, \Sigma)$ is equivalent to ``Between $1$ and $k$ fixed points.'' 
Moreover, if $\Lambda = \{ \ell \}$ and $\Sigma = \{ C_\ell \}$ a cyclic permutation of degree $\ell$, it is equivalent to ``Unique cycle of length $\ell$.''
Hence, if $\Lambda = \{ 1 \}$ and $\Sigma = \{ \id_1 \}$, then $\PeriodPiece(\Lambda, \Sigma)$ and $d$-$\PeriodPiece(\Lambda, \Sigma)$ are $\UFP$ and $\dUFP$, respectively. 
The following lemma allows us to apply the complexity upper bounds from Theorem~\ref{thm:in-nexptime} and Corollary~\ref{cor:in-np-or-nponp}:

\begin{lemmarep} \label{proposition:lambda_in_sigma_Sigma2}
For any $\Lambda$ and $\Sigma$ as above, the Period Piece$(\Lambda, \Sigma)$-property belongs to $\mathrm{FO}\Sigma_2$.
\end{lemmarep}

\begin{proof}
The proof goes bottom up. Firstly, the predicate $[ x \in \Per_{\{ \lambda \}} ]$ can be expressed in $\mathrm{FO}\Pi_1$ as $[ x \in \Per_{\{ 1 \} }] \equiv x \to x$ for $\lambda = 1$ and
\[
    [x \in \Per_{\{ \lambda \}}] \equiv \forall y_1, \dots, y_\lambda \ \left[ (x = y_1 ) \land ( y_1 \to y_2 \to \dots \to y_\lambda ) \right] \implies \left[ ( y_i \ne y_j \ \forall i \ne j ) \land ( y_\lambda \to x ) \right]
\]
for all $\lambda \ge 2$. Thus, the predicate $[x \in \Per_\Lambda]$ can be expressed in $\mathrm{FO}\Pi_1$ as the disjunction 
\[
    [ x \in \Per_{ \Lambda } ] \equiv \bigvee_{ \lambda \in \Lambda } [ x \in \Per_{ \{ \lambda \} } ].
\]
Secondly, let $\sigma$ be a finite permutation of $\{1, \dots, k\}$. Then the property $[\Per_\Lambda = \sigma]$ can be expressed in $\mathrm{FO}\Sigma_2$ as
\[
    [ \Per_\Lambda = \sigma ] \equiv \exists y_1, \dots, y_k \forall x \ (y_i \ne y_j \ \forall i \ne j) \land (y_i \to y_{ \sigma( i ) } \ \forall 1 \le i \le k) \land \left[ [ x \in \Per_\Lambda ] \implies \bigvee_{i=1}^k ( x = y_i ) \right].
\]
Finally, $[ \Per_\Lambda \in \Sigma ]$ can be expressed in $\mathrm{FO}\Sigma_2$ as the disjunction 
\[
    [ \Per_\Lambda \in \Sigma ] \equiv \bigvee_{ \sigma \in \Sigma } [ \Per_\Lambda = \sigma ].
\]
\end{proof}

The following theorem, taken together with Theorem~\ref{thm:in-nexptime} and Corollary \ref{cor:in-np-or-nponp}, yields Theorem~\ref{theorem:perlambda}: $\NEXPTIME$-completeness of $\PeriodPiece(\Lambda, \Sigma)$ and $\NPoNP$-completeness of $d$-$\PeriodPiece(\Lambda, \Sigma)$.

\begin{theoremrep}\label{thm:perlambda-in-sigma-completeness}
For every non-empty finite subset $\Lambda$ of $\N^*$ and every non-empty finite subset $\Sigma$ of $S_\Lambda$, $\PeriodPiece(\Lambda, \Sigma)$ is $\NEXPTIME$-hard and $d$-$\PeriodPiece(\Lambda, \Sigma)$ is $\NPoNP$-hard for all $d\geq 2$. 
\end{theoremrep}

We note that for the complexity lower bound, we do require $\Sigma$ to be non-empty and finite. Let $\Lambda = \{ 1 \}$ for instance. If on the one hand $\Sigma$ is empty, then the Period Piece$(\Lambda, \Sigma)$-property is vacuous. If on the other hand $\Sigma = \{ \id_k : k \ge 1 \}$, then the Period Piece$(\Lambda, \Sigma)$-property is ``At least one fixed point,'' and every digraph is a yes-instance.
Theorem~\ref{thm:perlambda-in-sigma-completeness} follows from Theorem~\ref{thm:perlambda_in_sigma} below.

\begin{theoremrep}\label{thm:perlambda_in_sigma}
Let $\Lambda$ be a non-empty finite subset of $\N^*$ and $\Sigma$ be a non-empty finite subset of $S_\Lambda$. Let $G$ be a loop-to-cycle digraph with at least one terminal cycle and $\Delta^-(G) = 2$. We can construct in polynomial time a digraph $G_{\Lambda, \Sigma}$, with $\Delta^-( G_{\Lambda,\Sigma} ) \leq 2$, such that
\[
    \text{$G$ is UFP-dynamic} \iff \text{$G_{\Lambda,\Sigma}$ is Period Piece$(\Lambda, \Sigma)$-dynamic}.
\] 
\end{theoremrep}

\begin{proofsketch}
The graph $G_{\Lambda,\Sigma}$ is constructed in three steps.
\begin{itemize}
    \item First, repeatedly subdivide the arcs in the terminal cycles of $G$ to obtain $G'$. Then $G$ is UFP-dynamic if and only if $G'$ is UFP-dynamic. Also, for any $f \in F(G')$, all the periodic configurations of $f$ are either fixed points or have very large period. This ensures that if $G$ is UFP-dynamic, then any $f$ over $G_{\Lambda,\Sigma}$ will not have any unwanted short cycles in $\Per_\Lambda( f )$.

    \item Second, make many disjoint copies $G'$, say $G'' = G'_1 \cup \dots \cup G'_n$. This ensures that if $G$ is not UFP-dynamic, then any $f$ over $G_{\Lambda,\Sigma}$ will have either too many cycles in $\Per_\Lambda( f )$ or none at all for a given cycle length.

    \item Third, let $\sigma \in \Sigma$, then there exists a Boolean network $h$ whose interaction graph $H_\sigma$ has in-degree $2$ and $\Per( h ) = \sigma$. Thus, $G_{\Lambda, \Sigma}$ is the disjoint union $G'' \cup H_\sigma$. This ensures that if $G$ is UFP-dynamic, then choosing $f^i$ with a unique fixed point over every copy $G'_i$ and $h$ over $H_\sigma$ will result in $f$ over $G_{\Lambda,\Sigma}$ with exactly $\Per_\Lambda( f ) = \sigma$.
\end{itemize}
The result follows. \end{proofsketch}

\begin{toappendix}
We need several lemmas.

\begin{lemmarep}\label{lem:boolean_sigma}
Let $\sigma$ be a permutation of degree $k \ge 1$. There exists a graph $H_\sigma$ of maximum in-degree $2$ which is from $\sigma$ and a Boolean network $f\in F(H_\sigma)$
such that $\Per( f ) = \sigma$.
\end{lemmarep}

\begin{proof}
The basic idea of the construction is as follows. In the cyclic shift network of size $n \ge 1$, i.e. $h_i(x) = x_{i-1}$ with indices computed cyclically, the unit configurations $e_i$ form a cycle of period $n$. Since $\sigma$ can be decomposed into cycles, say it has $n_\ell$ cycles of length $\ell$, then we use $n_\ell$ cyclic shift networks of size $\ell$ to obtain $\sigma$ thanks to the unit configurations. Any non-unit configuration will be mapped to $e_1$, thanks to a circuit that feeds back to the cycles.

More formally, let $\sigma$ have $n_\ell \ge 0$ cycles of length $\ell$ for $1 \le \ell \le k$, where $\sum_{\ell = 1}^k \ell\cdot n_\ell = k$.
Let $g$ be a layered circuit \cite{GJ11} that computes the Boolean function $\psi : \B^k \to \B$ given by
\[
    \psi( x_1, \dots, x_k ) = \begin{cases}
        1 & \text{if } x = e_i \text{ for some } 1 \le i \le k \\
        0 & \text{otherwise}.
    \end{cases}
\]
As usual, the output vertex of $g$ is denoted $\psi$. Say $d$ is the depth of the circuit $g$. Let $H_\sigma=H$ be the digraph obtained from $g$ by
\begin{itemize}
    \item removing the loops on the inputs $I = \{1, \dots, k\}$ of $g$,

    \item adding arcs so $H[I]$ is exactly a disjoint union of $n_\ell$ cycles for all $1 \le \ell \le k$,

    \item adding an arc from $\psi$ to each vertex in $I$.
\end{itemize}

\begin{figure}
    \centering
    \includegraphics[page=6]{ipefigs-dck.pdf}
    \caption{The graph $H_\sigma$ from our construction for Lemma \ref{lem:boolean_sigma}.}
    \label{fig:lem-boolean-sigma}
\end{figure}

The construction is illustrated in Figure \ref{fig:lem-boolean-sigma}.
We note that $\Delta^-( H ) = 2$. Let $h \in F( H[I] )$ be given by $h_i( x ) = x_{i'}$, where $i'$ is the predecessor of $i$ on its cycle. Note that $h$ is a permutation of $\{0,1\}^I$. Let $f \in F( H )$ be defined by
\begin{itemize}
    \item $f_1( x ) = \lnot x_\psi \lor h_1( x_I )$,
    \item $f_i( x ) = x_\psi \land h_i( x_I )$ for all $i \in I \setminus \{ 1 \}$,
    \item $f_a( x ) = g_a( x )$ for all $a \notin I$.
\end{itemize}

We now prove that the periodic part of $f$ is conjugate to $\sigma$. Let $x$ be a configuration on $H$. If $x_I$ is a unit configuration, then so is $f( x )_I$. If $x_I$ is not a unit configuration, then $f^d( x )_\psi = 0$, and hence $f^{d+1}( x )_I = e_1$. In any case, we obtain that $f^{d+t}( x )_I$ is a unit configuration for all $t \ge 1$.  Thus, for any periodic configuration $x$, $x_I$ is a unit configuration.

Next, if $z$ is a unit configuration on $I$, then there is a unique periodic configuration $\tilde{z}$ of $f$ that extends $z$. Explicitly, let the layers of $g$ be $L_0, L_1, \dots, L_d$, where $L_0 = I$ and $L_d = \{ \psi \}$. Since the circuit is acyclic, each layer $L_\delta$ can be viewed as computing a cumulative function of the inputs, say $\hat{g}_{ L_\delta } = g_{ L_\delta } \circ g_{ L_{\delta - 1} } \circ \cdots \circ g_{ L_1 } : \B^I \to \B^{ L_\delta }$. With this notation, we obtain $\tilde{z}_{L_\delta} = \hat{g}_{ L_\delta }( h^{-\delta}( z ) )$. If $i \in I$ belongs to a cycle of length $\ell$ in $H$, then $\tilde{e_i}$ is a periodic point of $f$ of period $\ell$ accordingly.

We obtain that the set of periodic points of $f$ is $\{ \tilde{e_i} : i \in I \}$ and that the periodic part of $f$ is conjugate to $\sigma$.
\end{proof}

Let $G$ be a digraph. \EM{Subdividing} an arc of $G$ from $a$ to $b$ consists in removing this arc, adding a new vertex $c$ together with an arc from $a$ to $c$ and from $c$ to $b$. A \EM{subdivision} of $G$ is a digraph obtained from $G$ by applying successive subdivisions.

\begin{lemmarep}\label{lem:subdivision}
A digraph is UFP-dynamic if and only if all its subdivisions are UFP-dynamic. 
\end{lemmarep}

\begin{proof}
Let $H$ be a digraph, and let $G$ be obtained from $H$ by subdividing an arc from $a$ to $b$; we denote by $c$ the additional vertex. It is sufficient to prove that $H$ is UFP-dynamic if and only if $G$ is UFP-dynamic. Actually, by Lemma~\ref{lem:UFP0}, it is sufficient to prove that $F_\ZERO(H)$ has a network with unique fixed point if and only if $F_\ZERO(G)$ has a network with a unique fixed point. 

Let $h\in F(H)$ with $\ZERO_H$ as unique fixed point. Given $x\in\B^{G}$, we set $\tilde x=x$ if $x_a=x_c$ and $\tilde x=x+e_a$ otherwise, so that $\tilde x_a=x_c$. Let $f\in F(G)$ defined by $f_v(x)=h_v(x_H)$ for all $v\neq b,c$, $f_b(x)=h_b(\tilde x_H)$ and $f_c(x)=x_a$. Since $\tilde \ZERO_G=\ZERO_G$, we have  $f(\ZERO_G)_H=h((\ZERO_G)_H)=h(\ZERO_H)=\ZERO_H$ and $f_c(\ZERO_G)=0$, thus $\ZERO_G$ is a fixed point of $f$. Suppose that $f(x)=x$. Since $x_c=f_c(x)=x_a$, we have $\tilde x=x$. Hence $x_H=f(x)_H=h(x_H)$, which implies $x_H=\ZERO_H$. We deduce that $x_c=x_a=0$ and thus $x=\ZERO_G$. So $\ZERO_G$ is the unique fixed point of $f$. 

Conversely, let $f\in F(G)$ with $\ZERO_G$ as unique fixed point. Since $a$ is the unique in-neighbor of $c$ in $G$, and $f_c(\ZERO_G)=0$, we deduce that $f_c(x)=x_a$ for all $x\in\B^G$. Given $x\in\B^H$, we denote by $\tilde x$ the extension of $x$ on $G$ such that $\tilde x_c=x_a$. Let $h\in F(H)$ defined by $h(x)=f(\tilde x)_H$. We have $h(\ZERO_H)=f(\ZERO_G)_H=(\ZERO_G)_H=\ZERO_H$. Suppose that $h(x)=x$. Then $\tilde x_H=x=h(x)=f(\tilde x)_H$. Furthermore, $f_c(\tilde x)=x_a=\tilde x_c$. Thus $f(\tilde x)=\tilde x$, which implies $\tilde x=\ZERO_G$ and thus $x=\ZERO_H$. So $\ZERO_H$ is the unique fixed point of $h$. 
\end{proof}

\begin{lemmarep}\label{lem:power}
Let $G$ be a digraph such that no two cycles intersect in a vertex and every cycle has length a power of $2$. Then every limit cycle of every $f\in F(G)$ has length a power of $2$.
\end{lemmarep}

\begin{proof}
If $G$ is acyclic, then $f$ has a unique limit cycle, which is of length one (Lemma~\ref{lem:robert}), so we are done. Let $G$ have a cycle. Let $f\in F(G)$ and let $f^2=f\circ f$.  

\begin{claim}\label{clm:power-1}
If every limit cycle of $f^2$ has length a power of $2$, then every limit cycle of $f$ has length a power of $2$.
\end{claim}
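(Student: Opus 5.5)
The statement to prove is Claim~\ref{clm:power-1}: if every limit cycle of $f^2$ has length a power of $2$, then so does every limit cycle of $f$. I will relate limit cycles of $f$ to limit cycles of $f^2$ directly in the dynamics, and use no structure of $G$.

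Let $x$ be a periodic point of $f$ with period $p$, so its orbit $x, f(x), \dots, f^{p-1}(x)$ has $p$ distinct elements. The point $x$ is also periodic for $f^2$. Its $f^2$-period is the least $q\ge 1$ with $f^{2q}(x)=x$, that is, the least $q$ with $p \mid 2q$. Hence $q = p/\gcd(p,2)$: we get $q=p$ if $p$ is odd and $q=p/2$ if $p$ is even. The $f^2$-orbit of $x$ is a limit cycle of $f^2$, so by hypothesis $q=2^j$ for some $j\ge 0$.

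There are two cases. If $p$ is even, then $p = 2q = 2^{j+1}$, a power of $2$. If $p$ is odd, then $p=q=2^j$ with $p$ odd, which forces $j=0$ and $p=1$, again a power of $2$. Every limit cycle of $f$ is the orbit of one of its points, so all limit cycles of $f$ have length a power of $2$.

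I expect no real obstacle here. The only point needing care is the elementary fact that the $f^2$-period equals $p/\gcd(p,2)$, which follows from $f^{2q}(x)=x \iff p\mid 2q$. Within the surrounding proof of Lemma~\ref{lem:power}, this claim will presumably be iterated: one passes to $f^{2^t}$, where the cycles of $G$ behave as positive cycles or acyclic pieces, and appeals to Lemma~\ref{lem:acyclic-positive} there. That step is handled elsewhere and is not needed for this claim.
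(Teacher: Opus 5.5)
Your proof is correct and follows essentially the same route as the paper: both split on the parity of the length of the $f$-cycle, observing that an odd $f$-cycle is itself an $f^2$-cycle of the same length, while an $f$-cycle of length $2p$ splits into two $f^2$-cycles of length $p$; your computation of the $f^2$-period as $p/\gcd(p,2)$ is the pointwise version of this. Your closing guess about how the claim is used does not match the paper, which instead inducts on the maximum cycle length via $G^2$ and the interaction graph of $f^2$; this has no bearing on the claim itself.
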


\begin{claimproof}
Suppose that $f$ has a limit cycle of length $q$, and let $X$ be its set of configurations. If $q$ is odd then $f^2$ has a limit cycle of length $q$ whose set of configurations is also $X$, so the claim vacuously holds. So $q=2p$ for some $p\geq 1$. Then there is a partition of $X$ in two sets $X_0,X_1$ of equal size such that $f^2$ has two limit cycles of length $p$ whose set of configurations are $X_0$ and $X_1$. If $p$ is a power of two, then $q$ is a power of two and the claim holds. Otherwise, $p$ is not a power of two and the claim vacuously holds.
\end{claimproof}

We denote by $\ell(G)$ the maximum length of a cycle in $G$. We denote by $N(G)$ the set of vertices $i$ in $G$ that have an out-neighbor $j\neq i$. Let $G^2$ be the digraph with vertex set $V(G)$ and an arc from $a$ to $b$ if there exists $c$ (not necessarily distinct from $a,b$) such that $G$ has an arc from $a$ to $c$ and from $c$ to $b$. Let $G^{(2)}$ be the interaction graph of $f^2$. 

\begin{claim}\label{clm:power-2}
Let $G$ be as in the lemma statement. Then the following holds:
\begin{itemize}
\item \label{item:G2_satisfies}
$G^2$ satisfies the conditions of the lemma, %
\item \label{item:l(G2)}
$\ell(G^2)=\max(1,\ell(G)/2)$,
\item \label{item:N(G2)}
$N(G^2)\subseteq N(G)$,
\item \label{item:G2_subgraph}
$G^{(2)}\subseteq G^2$.
\end{itemize}
\end{claim}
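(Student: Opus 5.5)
We need to prove Claim \ref{clm:power-2}: for $G$ in which no two cycles share a vertex and every cycle has length a power of $2$, four properties hold. We handle them in the order (iv), (i)/(ii), (iii), since (iv) is a direct consequence of the definitions and the others are purely combinatorial facts about $G^2$.

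For the last item, $G^{(2)}\subseteq G^2$, we argue locally. The component $(f^2)_b(x)=f_b(f(x))$ depends only on the values $f_c(x)$ for in-neighbors $c$ of $b$ in $G$. Each such $f_c(x)$ depends only on $x_a$ for in-neighbors $a$ of $c$. Hence if $(f^2)_b$ depends essentially on $x_a$, there must be a path $a\to c\to b$ in $G$, i.e. an arc $a\to b$ in $G^2$. Similarly, $N(G^2)\subseteq N(G)$: an arc $a\to b$ of $G^2$ with $b\neq a$ comes from $a\to c\to b$. If $c\neq a$, then $a$ has an out-neighbor $c\neq a$. If $c=a$, then $a\to a\to b$, so $a\to b$ is an arc with $b\neq a$. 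In either case $a\in N(G)$.

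For the structural items, we analyse the closed walks of $G$. Since no two cycles intersect, every closed walk of $G$ stays on a single cycle $C$. (Stepping off $C$ and returning would create a second cycle sharing a vertex with $C$.) A cycle of $G^2$ lifts to a closed walk of $G$ of twice the length, lying on some cycle $C$ of length $2^k$.
\begin{itemize}
\item If $k=0$, $C$ is a loop and the cycle of $G^2$ is the loop on that vertex.
\item If $k\ge 1$, $C$ splits in $G^2$ into exactly two vertex-disjoint cycles of length $2^{k-1}$, namely the even- and odd-indexed vertices.
\end{itemize}
So the cycles of $G^2$ are precisely these: loops from loops of $G$, and the halves of cycles of length at least $2$. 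Each has length a power of $2$, and they are pairwise vertex-disjoint, since distinct cycles of $G$ are disjoint and the two halves of one cycle are disjoint. This gives (i).

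For (ii), note that a cycle of length $2$ gives two loops. Thus the maximum length is $\ell(G)/2$ when $\ell(G)\ge 2$, and $1$ when $\ell(G)=1$.

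The main obstacle is the uniqueness claim in the structural step: that $G^2$ has no further cycles, for instance a cycle using length-two walks $a\to c\to b$ through acyclic parts of $G$. We rule this out by the lifting argument. Any cycle of $G^2$ lifts to a closed walk in $G$, and closed walks are confined to a single cycle of $G$, so such mixed cycles cannot exist. A minor care point is that a closed walk in $G^2$ whose lift revisits vertices still projects to a cycle of $G^2$ contained in one of the listed ones.
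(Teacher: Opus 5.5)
Your proposal is correct and follows the paper's proof essentially step for step. It uses the same description of the cycles of $G^2$ (the loops of $G$ plus the even- and odd-indexed halves of each longer cycle), the same case split $c\neq a$ versus $c=a$ for $N(G^2)\subseteq N(G)$, and the same observation that $(f^2)_b$ can only depend on $x_a$ through some $f_c$ with $a\to c\to b$ in $G$. Your one addition is the lifting argument (a cycle of $G^2$ lifts to a closed walk of $G$, which must stay on a single cycle), which justifies that $G^2$ has no other cycles; the paper simply asserts this.
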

\begin{claimproof}
Let the set of loops of $G$ be $L$. Denote the other cycles of $G$ as $C^s = \{ v^s_0, v^s_1, \dots, v^s_{2 n_s - 1} \}$ for all $s \in S$. Then the collection of cycles of $G^2$ is $L \cup \{ C_0^s, C_1^s : s \in S \}$, where $C_0^s = \{ v^s_0, v^s_2, \dots, v^s_{2 n_s - 2} \}$ and $C_1^s = \{ v^s_1, v^s_3, \dots, v^s_{2 n_s - 1} \}$ for all $s \in S$. The first two items immediately follow.

We now prove the third item. If $i \in N(G^2)$, then there exist $k$ and $j \ne i$ such that $(i,k)$ and $(k,j)$ are arcs of $G$. If $k \ne i$, then $i \in N(G)$ because of $(i,k)$; if $k = i$, then $i \in N(G)$ because of $(k,j)$.

We finally prove the fourth item. For any configuration $c \in \B^{V(G)}$, let $c^1 = f( c )$ and $c^2 = f^2( c )$ and denote $c_{-i} = ( c_v : v \in V(G) \setminus\{ i \} )$. 
Let $(i,j)$ be an arc of $G^{(2)}$.
Then there exist $a, b \in \B^{V(G)}$ such that $a_{-i} = b_{-i}$ and $a^2_j \ne a^2_j$. Since $a^2_j \ne b^2_j$, there exists $k$ such that $(k,j) \in E( G )$ and $a^1_k \ne b^1_k$. But then, $a_{-i} = b_{-i}$ and $a^1_k \ne b^1_k$ imply that $(i,k) \in E(G)$ as well. Thus $(i,j)$ is an arc of $G^2$.
\end{claimproof}

We now prove the lemma by induction on $\ell(G)$, and then on $N(G)$.  Suppose that $\ell(G)=1$. If $N(G)=\emptyset$, then all the arcs of $G$ are loops, and we easily check that every limit cycle of $f$ has length $1$ or $2$. Suppose that $N(G)\neq\emptyset$. Since $\ell(G)=1$, there exists a vertex $i\in N(G)$ such that either $i$ is a source or has itself as unique in-neighbor. 

Suppose that $i$ is a source. Then there exists $\alpha$ such that $f_i(x)=\alpha$ for all $x\in\B^G$. Let $x'=x$ if $x_i=\alpha$, and $x'=x+e_i$ otherwise, and let $f' : \B^G\to\B^G$ defined by $f'(x)=f(x')$. Then the interaction graph $G'$ of $f'$ is a subgraph of $G$ and $i$ has no out-neighbor in $G'$. Thus $N(G')\subsetneq N(G)$ and, by induction, every limit cycle of $f'$ has length a power of $2$. Since $f$ and $f'$ have the same limit cycles we are done.

If $i$ is not a source (i.e., $i$ has itself as unique in-neighbor), then $f^2_i(x)=x_i$ for all $x\in\B^G$. By Claim \ref{clm:power-2}, $\ell(G^{(2)})\leq 1$ and $N(G^{(2)})\subseteq N(G^2)\subseteq N(G)$. For all $x\in \B^G$ and $\alpha\in\B$, let $x^\alpha=x$ if $x_i=\alpha$, and $x^\alpha=x+e_i$ otherwise. Let $f^{(\alpha)}:\B^G\to\B^G$ defined by $f^{(\alpha)}(x)=f(x^\alpha)$.  Then the interaction graph $G^\alpha$ of $f^{(\alpha)}$ is a subgraph of $G^2$ and $i$ has no out-neighbor in $G^\alpha$. Thus $N(G^\alpha)\subsetneq N(G)$ and, by induction, every limit cycle of $f^{(\alpha)}$ has length a power of $2$. Since every limit cycle of $f^2$ is a limit cycle of $f^{(0)}$ or $f^{(1)}$, we deduce that every limit cycle of $f^2$ has length a power of $2$. Thus, by Claim \ref{clm:power-1}, every limit cycle of $f$ has length a power of $2$. This completes the case $\ell(G)=1$. 

Suppose that $\ell(G)\geq 2$. By Claim \ref{clm:power-2} and the induction hypothesis, every limit cycle of $f^2$ has length a power of $2$. Thus, by Claim \ref{clm:power-1}, every limit cycle of $f$ has length a power of $2$.
\end{proof}

\begin{lemmarep}\label{lem:long_cycle}
Let $G$ be a loop-to-cycle digraph with at least one terminal cycle. Suppose that all the terminal cycles have length $2^\ell$. Let $f\in F(G)$ with a unique fixed point. All the periodic configurations of $f$ distinct from the fixed point have period at least $2^{\ell+1}$. 
\end{lemmarep}

\begin{proof}
Let $I$ be the vertices of $G$ that are not in the terminal cycles, and let $L$ be the vertices with a loop. Let $x$ be the unique fixed point of $f$, and let $z$ be a periodic configuration of $f$ distinct from $x$. Then $x_I,z_I$ are fixed global contexts. If $z_I=x_I$ then, by Lemma~\ref{lem:global_context}, $z$ converges to $x$, a contradiction. Thus $z_I\neq x_I$. If $z_I$ is not a negative context for at least one terminal cycle, then, by the first and second points of  Lemma~\ref{lem:global_context}, $f$ has at least two fixed points, a contradiction. Hence $z_I$ is a negative context for at least one terminal cycle, of length $2^\ell$ by hypothesis. The unique divisor of $2^{\ell+1}$ which does not divide $2^{\ell}$ is $2^{\ell+1}$. Hence, by the third point of Lemma~\ref{lem:global_context}, $z_I$ converges to a periodic configuration whose period is a multiple of $2^{\ell+1}$. 
\end{proof}

Given $n$ transformations $f^i$, with domain $X^i$, their \EM{product}, denoted \EM{$(f^1, \dots, f^n)$}, is the transformation with domain $X=X^1\times \cdots\times X^n$ defined by 
\[
f(x^1,\dots,x^n) = (f^1(x^1),\dots,f^n(x^n)). 
\]
In graphical terms, the graph $D_f$ is then exactly the Cartesian product of the graphs $D_{f^1},\ldots, D_{f^n}$. 
To assist with intuition, the Cartesian product of two graphs is shown in \ref{fig:cartesian-product}.
\begin{figure}[!ht]
    \centering
    \includegraphics[page=7,width=.5\linewidth]{ipefigs-dck.pdf}
    \caption{Graphical representation of the product of a transformation of $\{a,b,c\}$ and a transformation of $\{d,e,f\}$.}
    \label{fig:cartesian-product}
\end{figure}

If $\sigma_1,\sigma_2$ are cyclic permutations on $\ell_1$ and $\ell_2$ elements, respectively, then $(\sigma_1,\sigma_2)$ is a permutation that consists of $\ell_1\ell_2/\lcm(\ell_1,\ell_2)$ limit cycles of length $\lcm(\ell_1,\ell_2)$ \cite{DFMR23}. From this observation, we deduce that, for all $\ell\geq 1$,  
\begin{equation}\label{eq:product}
    n_\ell ( f^1, \dots, f^n )=\frac{1}{\ell}
\sum_{\substack{\ell_1,\dots,\ell_n\in\N\\\lcm(\ell_1,\dots,\ell_n)=\ell}}
\prod_{i=1}^n \ell_i\cdot n_{\ell_i}(f^i).
\end{equation}

\begin{lemmarep}\label{lem:two-copies}
Let $\ell \ge 1$ and let $f^1,\dots,f^{n+1}$ be transformations. Suppose that, for $1\leq i\leq n$, $f^i$ does not have a unique fixed point and every limit cycle of $f^i$ has length a power of $2$. Let $f=(f^1, \cdots, f^{n+1})$. If $f$ has a limit cycle of length $\ell$, then $f$ has at least $2^{n-1}$ limit cycles of length~$\ell$. 
\end{lemmarep} 

\begin{proof}
Let $k$ such that $\ell=2^k\cdot p$ for some odd integer $p$. By \eqref{eq:product}, there exists $\ell_1,\dots,\ell_{n+1}\in\N$ with $\lcm(\ell_1,\dots,\ell_{n+1})=\ell$ and $n_{\ell_i}(f^i)\geq 1$ for $1\leq i\leq n+1$. By hypothesis, for all $1\leq i\leq n$, we have $\ell_i=2^{k_i}$ for some $k_i\in\N$ and thus $\ell_{n+1}=2^{k_{n+1}}\cdot p$ for some $k_{n+1}\in\N$; furthermore, $\max_{i=1}^{n+1} k_i=k$. If $k_{n+1}<k$ we can suppose, without loss, that $k_n=k$. Thus we can suppose that $\ell_n\cdot\ell_{n+1}\geq 2^k\cdot p=\ell$. Consequently, by \eqref{eq:product} we have
\[
n_\ell(f)
\geq \frac{1}{\ell}\prod_{i=1}^{n+1} \ell_i\cdot n_{\ell_i}(f^i)
\geq \prod_{i=1}^{n-1} \ell_i\cdot n_{\ell_i}(f^i).
\]
Let $1\leq i<n$. If $\ell_i=1$ then $n_{\ell_i}(f^i)\geq 2$ since $f^i$ does not have a unique fixed point. Thus $\ell_i \cdot n_{\ell_i}(f^i)\geq 2$ in every case. We deduce that $n_\ell(f)\geq 2^{n-1}$. 
\end{proof}

\begin{lemmarep}\label{lem:unique-copy}
Let $f,h$ be transformations and $\ell\in\N$ with $n_\ell(h)\geq 1$. Suppose that $f$ has a unique fixed point and that there is no integer $2\leq p\leq\ell$ with $n_p(f)\geq 1$. Then $n_\ell(f,h)=n_\ell(h)$. 
\end{lemmarep}

\begin{proof}
Let $\Omega$ be the set of couples $(p,q)\in\N^2$ with $\lcm(p,q)=\ell$ and $n_p(f)n_q(h)\geq 1$; since $n_1(f)=1$ and $n_\ell(h)\geq 1$ we have $(1,\ell)\in\Omega$. If $(p,q)\in\Omega$, then $p\leq \ell$, and thus $p=1$, which forces $q=\ell$. Thus $\Omega=\{(1,\ell)\}$ and we deduce from \eqref{eq:product} that $n_\ell(f,h)=\frac{1}{\ell}n_1(f)\ell n_\ell(h)=n_\ell(h)$. 
\end{proof}
\end{toappendix}

\begin{toappendix}
\begin{proof}[\bf Proof of Theorem~\ref{thm:perlambda_in_sigma}]
Recall our instance $G$ of $\UFP$ is a loop-to-cycle digraph.
The construction of $G_{\Lambda, \Sigma}$ is as follows. 
Let $k$ be an integer greater than any element of $\Lambda$ and greater than the degree of any permutation in $\Sigma$. 
Let $m$ be the length of a longest terminal cycle in $G$ and let $N = \max(\lceil\log_2 m\rceil,\lceil\log_2 k\rceil)$. 
Let $G'$ be obtained from $G$ by applying successive subdivisions of arcs in the terminal cycles until each terminal cycle has length $2^N$. 
By Lemma~\ref{lem:subdivision}, $G'$ is UFP-dynamic if and only if $G$ is UFP-dynamic. Let $n = \lceil\log_2 k\rceil + 2$, and let $G'_1, \dots, G'_n$ be disjoint copies of $G'$. 
Let $\sigma \in \Sigma$ and $H_\sigma$ be as in Lemma~\ref{lem:boolean_sigma}. 
Thus $H_\sigma$ is Period Piece$(\Lambda, \Sigma)$-dynamic and maximum in-degree $2$. 
Finally, $G_{\Lambda, \Sigma}$ is the disjoint union of $G'_1, \dots, G'_n, H_\sigma$. 

We now prove that $G'$ is UFP-dynamic if and only if $G_{\Lambda, \Sigma}$ is Period Piece$(\Lambda, \Sigma)$-dynamic.

\begin{claim}
    If $G_{\Lambda, \Sigma}$ is Period Piece$(\Lambda, \Sigma)$-dynamic then $G$ is UFP-dynamic.
\end{claim}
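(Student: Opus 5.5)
We argue by contraposition: assuming $G$ is not UFP-dynamic, we show that no $f\in F(G_{\Lambda,\Sigma})$ is Period Piece$(\Lambda,\Sigma)$. By Lemma~\ref{lem:subdivision}, $G'$ is not UFP-dynamic either. Since $G_{\Lambda,\Sigma}$ is a disjoint union, any $f\in F(G_{\Lambda,\Sigma})$ is, up to conjugacy, the product $(f^1,\dots,f^n,h)$ with $f^i\in F(G'_i)$ and $h\in F(H_\sigma)$. Consequently no $f^i$ has a unique fixed point. Moreover, $G'$ satisfies the hypotheses of Lemma~\ref{lem:power}: it is loop-to-cycle, so its only cycles are the loops and the terminal cycles, these are pairwise vertex-disjoint, and every terminal cycle now has length $2^N$. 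Hence every limit cycle of each $f^i$ has length a power of $2$.

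Suppose for contradiction that $\Per_\Lambda(f)=\tau$ for some $\tau\in\Sigma$. The permutation $\tau$ has degree at least $1$, so $f$ has a limit cycle of some length $\ell\in\Lambda$. We apply Lemma~\ref{lem:two-copies} to the $n+1$ transformations $f^1,\dots,f^n,h$. The first $n$ satisfy the hypotheses, and $h$ plays the role of $f^{n+1}$, on which the lemma imposes no condition. The lemma gives at least $2^{n-1}$ limit cycles of length $\ell$. Each of them contributes $\ell\geq 1$ points of $\Per_\Lambda(f)$, so $|\Per_\Lambda(f)|\geq 2^{n-1}=2^{\lceil\log_2 k\rceil+1}\geq 2k>k$. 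On the other hand, $|\Per_\Lambda(f)|$ equals the degree of $\tau$, which is less than $k$ by the choice of $k$. This contradiction shows that $G_{\Lambda,\Sigma}$ is not Period Piece$(\Lambda,\Sigma)$-dynamic, which proves the claim.

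The main obstacle is checking the hypotheses of Lemma~\ref{lem:power} for $G'$, namely that no two cycles of a loop-to-cycle digraph intersect and that every cycle has length a power of $2$ (loops have length $2^0$). This follows from the definition of loop-to-cycle digraphs, since every non-trivial strong component is either a loop or a terminal cycle. The other point to verify is that the transformations to which Lemma~\ref{lem:two-copies} is applied are correctly identified: the conjugacy between $f$ and the product $(f^1,\dots,f^n,h)$ comes from the disjoint-union structure of $G_{\Lambda,\Sigma}$, because the local functions on each component depend only on variables within that component. The remaining counting step is then a short computation.
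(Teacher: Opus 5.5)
Your proof is correct and follows the paper's argument: restrict $f$ to the components to get $(f^1,\dots,f^n,h)$, then use Lemma~\ref{lem:power} and Lemma~\ref{lem:two-copies} to force at least $2^{n-1}>k$ limit cycles of some length $\ell\in\Lambda$, which contradicts the size of $\tau$. The paper argues directly rather than by contraposition and compares $n_\ell(f)$ with $n_\ell(\tau)$ instead of $|\Per_\Lambda(f)|$ with the degree of $\tau$; these are cosmetic differences, and your explicit check that $G'$ meets the hypotheses of Lemma~\ref{lem:power} fills in a step the paper leaves implicit.
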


\begin{claimproof}
Suppose that $G_{\Lambda, \Sigma}$ is Period Piece$(\Lambda, \Sigma)$-dynamic. Let $f \in F( G_{\Lambda, \Sigma} )$ with $\Per_\Lambda( f ) = \tau$ for some $\tau \in \Sigma$. Let $f^i$ be the restriction of $f$ on $G_i$, and let $h'$ be the restriction of $f$ on $H_\sigma$. Hence $f=(f^1, \dots, f^n, h')$. Suppose, for a contradiction, that each $f^i$ does not have a unique fixed point. By Lemma~\ref{lem:power}, every limit cycle of $f^i$ has length a power of $2$. Let $\ell$ with $n_\ell( \tau ) \geq 1$; thus $\ell \in \Lambda$. Since $\Per_\Lambda( f ) = \tau$, we have $n_\ell(f) = n_\ell(\tau) \geq 1$. But, by Lemma~\ref{lem:two-copies}, we have 
\[
    n_\ell( f ) \ge 2^{n-1} > k \ge n_\ell( \tau ),
\]
a contradiction. Thus $f^i$ has a unique fixed point for some $1\leq i\leq n$. So $G'_i$ is UFP-dynamic and thus $G'$ is UFP-dynamic. 
\end{claimproof}

\begin{claim}
    If 
    $G$ is UFP-dynamic
    then 
    $G_{\Lambda, \Sigma}$ is Period Piece$(\Lambda, \Sigma)$-dynamic.
\end{claim}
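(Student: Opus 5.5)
We take $f^i\in F(G'_i)$ with a unique fixed point for every $1\le i\le n$, and $h$ with $\Per(h)=\sigma$ from Lemma~\ref{lem:boolean_sigma}. We then show that $f=(f^1,\dots,f^n,h)$ satisfies $\Per_\Lambda(f)=\sigma$. Since $G$ is UFP-dynamic, Lemma~\ref{lem:subdivision} shows that $G'$ is UFP-dynamic, so each $f^i$ exists. Because the $G'_i$ and $H_\sigma$ are pairwise disjoint, $f\in F(G_{\Lambda,\Sigma})$.

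Next we control the short cycles of each factor. The graph $G'$ is a loop-to-cycle digraph with at least one terminal cycle, and all its terminal cycles have length $2^N$. So Lemma~\ref{lem:long_cycle} applies: every periodic configuration of $f^i$ other than its fixed point has period at least $2^{N+1}\ge 2k$, which exceeds $k$. Hence $f^i$ has a unique fixed point and $n_p(f^i)=0$ for all $2\le p\le k$.

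We then absorb the factors one at a time. The product $F^j=(f^1,\dots,f^j)$ has a unique fixed point. By \eqref{eq:product}, a cycle of $F^j$ of length $p\le k$ needs component lengths $\ell_i$ that divide $p\le k$, so every $\ell_i=1$ and hence $p=1$. Therefore $F^n$ has a unique fixed point and no cycle of length in $[2,k]$.

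The last step handles $h$. Every cycle of $h$ has length in $\Lambda$, hence less than $k$. For each $\ell\in\Lambda$ we show $n_\ell(F^n,h)=n_\ell(h)$. If $n_\ell(h)\ge1$ this is Lemma~\ref{lem:unique-copy}, since $\ell<k$. If $n_\ell(h)=0$, we use \eqref{eq:product} directly. Any pair $(p,q)$ with $\lcm(p,q)=\ell$ has $p\mid\ell\le k$, which forces $p=1$ and then $q=\ell$, contributing $n_1(F^n)\,\ell\, n_\ell(h)=0$.

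It remains to check that the action of $f$ on $\Per_\Lambda(f)$ is conjugate to $\sigma$. Any periodic point of $f$ lies on a cycle of length $\lcm(p,q)$, where $p$ is the period of its $F^n$-component and $q$ that of its $h$-component. If that length lies in $\Lambda$, it is less than $k$, so $p=1$. The point is therefore (unique fixed point of $F^n$, periodic point of $h$). This gives a bijection between $\Per_\Lambda(f)$ and $\Per(h)$ that commutes with the dynamics, because every cycle of $h$ has length in $\Lambda$. So $\Per_\Lambda(f)=\sigma\in\Sigma$.

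The main obstacle is this bookkeeping step: we must make sure that long cycles of $F^n$ cannot combine with cycles of $h$ to produce a cycle whose length lies in $\Lambda$. The choice $2^N\ge k>\max\Lambda$ excludes this.
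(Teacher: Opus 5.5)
Your proof is correct and follows the paper's route: you use Lemma~\ref{lem:subdivision} to take uniquely-fixed $f^i$ on each copy of $G'$, Lemma~\ref{lem:long_cycle} to exclude cycles of length in $[2,k]$ for $(f^1,\dots,f^n)$, and Lemma~\ref{lem:unique-copy} to get $n_\ell(f)=n_\ell(\sigma)$ for $\ell\in\Lambda$. You also treat separately the lengths $\ell\in\Lambda$ with $n_\ell(\sigma)=0$, which the paper glosses over by asserting $n_\ell(h)\geq 1$, and your direct identification $\Per_\Lambda(f)=\{x\}\times\Per(h)$ settles the conjugacy on its own, so the counting step is not actually needed.
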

\begin{claimproof}
Suppose that $G'$ is UFP-dynamic. For $1\leq i\leq n$, let $f^i\in F(G'_i)$ with a unique fixed point $x^i$. Let $f' = (f^1, \dots, f^n)$ and $f=(f',h)$, where $h$ is the Boolean network, already constructed, with $\Per( h ) = \sigma$ for $\sigma \in \Sigma$ chosen above. Then $f\in F( G_{\Lambda, \Sigma} )$. By Lemma~\ref{lem:long_cycle}, for all $1\leq i\leq n$, all the periodic configurations of $f^i$ distinct from $x^i$ have period at least $2^{N+1}$. We deduce that $x=(x^1,\dots,x^n)$ is the unique fixed point of $f'$ and that all the periodic configurations of $f'$ distinct from $x$ have period at least $2^{N+1}$: there is no $2\leq p<2^{N+1}$ with $n_p(f')\geq 1$. Let $\ell \in \Lambda$; we prove that $n_\ell( f ) = n_\ell( \sigma )$. Since $\Per( h ) = \sigma$, we have $n_\ell(h) = n_\ell(\sigma) \geq 1$. Since $\ell \leq k < 2^{N+1}$, there is no integer $2\leq p\leq\ell$ with $n_p(f')\geq 1$, and we deduce from Lemma~\ref{lem:unique-copy} that 
\[
    n_\ell(f) = n_\ell(f',h) = n_\ell(h) = n_\ell(\sigma).
\]
So $\Per_\Lambda( f ) = \sigma$ and thus $G_{\Lambda, \Sigma}$ is Period Piece$(\Lambda, \Sigma)$-dynamic. 
\end{claimproof}
The theorem follows.
\end{proof}
\end{toappendix}

\bibliography{ref}

@article{PY86-projection,
title = {A note on succinct representations of graphs},
journal = {Information and Control},
volume = {71},
number = {3},
pages = {181-185},
year = {1986},
issn = {0019-9958},
doi = {https://doi.org/10.1016/S0019-9958(86)80009-2},
url = {https://www.sciencedirect.com/science/article/pii/S0019995886800092},
author = {Christos H. Papadimitriou and Mihalis Yannakakis},
}

@book{DF13,
  title={Fundamentals of parameterized complexity},
  author={Downey, Rodney G. and Fellows, Michael R.},
  year={2013},
  publisher={Springer}
}

@article{GJ11,
    author = {Anna G\'al and Jing-Tang Jang},
    title = {The size and depth of layered {B}oolean circuits},
    journal = {Information Processing Letters},
    year = {2011},
    volume = {111},
    pages = {213-217}
}

@INPROCEEDINGS{GI09-tiling-NEXP,
  author={Gottesman, Daniel and Irani, Sandy},
  booktitle={2009 50th Annual IEEE Symposium on Foundations of Computer Science}, 
  title={The Quantum and Classical Complexity of Translationally Invariant Tiling and Hamiltonian Problems}, 
  year={2009},
  volume={},
  number={},
  pages={95-104},
  doi={10.1109/FOCS.2009.22}}

@article{BPR26,
    author = {Florian Bridoux and Aymeric Picard Marchetto and Adrien Richard},
    title = {Interaction graphs of isomorphic automata networks {II}: Universal dynamics},
    journal = {Journal of Computer and System Sciences},
    year = {2026},
    volume = {156},
    number = {103717}
}

@article{Grabisch_2013,
	doi = {10.1016/j.mathsocsci.2013.07.003},
	year = 2013,
	publisher = {Elsevier {BV}},
	volume = {66},
	number = {3},
	pages = {316--330},
	author = {Michel Grabisch and Agnieszka Rusinowska},
	title = {A model of influence based on aggregation functions},
	journal = {Mathematical Social Sciences}
}

@article{Montagud22,
	doi = {10.7554/elife.72626},
	year = 2022,
	publisher = {{eLife} Sciences Publications, Ltd},
	volume = {11},
	author = {Arnau Montagud and Jonas B{\'{e}}al and Luis Tobalina and Pauline Traynard and Vigneshwari Subramanian and Bence Szalai and R{\'{o}}bert Alföldi and L{\'{a}}szl{\'{o}} Pusk{\'{a}}s and Alfonso Valencia and Emmanuel Barillot and Julio Saez-Rodriguez and Laurence Calzone},
	title = {Patient-specific {B}oolean models of signalling networks guide personalised treatments},
	journal = {{eLife}}
}

@book{Rozum_2024, title={Boolean Networks as Predictive Models of Emergent Biological Behaviors}, ISBN={9781009292962}, DOI={10.1017/9781009292955}, publisher={Cambridge University Press}, author={Rozum, Jordan C. and Campbell, Colin and Newby, Eli and Nasrollahi, Fatemeh Sadat Fatemi and Albert, Réka}, year={2024}  }

@article{Gaucherel_2017, title={Understand ecosystem regime shifts by modelling ecosystem development using {B}oolean networks}, volume={31}, ISSN={1476-945X}, DOI={10.1016/j.ecocom.2017.06.001}, journal={Ecological Complexity}, publisher={Elsevier BV}, author={Gaucherel, C. and Théro, H. and Puiseux, A. and Bonhomme, V.}, year={2017}, pages={104–114} }

@article{Poindron_2021, title={A general model of binary opinions updating}, volume={109}, ISSN={0165-4896}, DOI={10.1016/j.mathsocsci.2020.10.004}, journal={Mathematical Social Sciences}, publisher={Elsevier BV}, author={Poindron, Alexis}, year={2021}, pages={52–76} }

@book{P94,
  author    = {Papadimitriou, C. H.},
  title     = {Computational complexity},
  publisher = {Addison-Wesley},
  year      = {1994},
  isbn      = {978-0-201-53082-7},
  bibsource = {dblp computer science bibliography, https://dblp.org}
}

@article{ABGGPRT26,
	author = {Aracena, Julio and Bridoux, Florian and Gadouleau, Maximilien and Guillon, Pierre and Perrot, K{\'e}vin and Richard, Adrien and Theyssier, Guillaume},
	date = {2026/01/21},
	doi = {10.1007/s11047-026-10064-x},
	id = {Aracena2026},
	isbn = {1572-9796},
	journal = {Natural Computing},
	number = {1},
	pages = {6},
	title = {On the dynamics of bounded-degree automata networks},
	url = {https://doi.org/10.1007/s11047-026-10064-x},
	volume = {25},
	year = {2026}}

@article{DFMR23,
  title={An algorithmic pipeline for solving equations over discrete dynamical systems modelling hypothesis on real phenomena},
  author={Dennunzio, Alberto and Formenti, Enrico and Margara, Luciano and Riva, Sara},
  journal={Journal of Computational Science},
  volume={66},
  pages={101932},
  year={2023},
  publisher={Elsevier}
}

@article{bppmr2023,
  title = {{Interaction graphs of isomorphic automata networks I: Complete digraph and minimum in-degree}},
  author = {Bridoux, Florian and Perrot, K{\'e}vin and Picard Marchetto, Aymeric and Richard, Adrien},
  month = dec,
  year = {2023},
  volume = {138},
  pages = {103458},
  journal = {Journal of Computer and System Sciences},
  doi = {10.1016/j.jcss.2023.05.003},
  eprint = {2301.01958}
}

@article{BDPR22,
  title={Complexity of fixed point counting problems in {B}oolean Networks},
  author={Bridoux, Florian and Durbec, Am{\'e}lia and Perrot, K{\'e}vin and Richard, Adrien},
  journal={Journal of Computer and System Sciences},
  year={2022},
  publisher={Elsevier}
}

@article{G20,
  title={On the influence of the interaction graph on a finite dynamical system},
  author={Gadouleau, Maximilien},
  journal={Natural Computing},
  volume={19},
  number={1},
  pages={15--28},
  year={2020},
  publisher={Springer}
}

@article{McC00,
  title={Even dicycles},
  author={McCuaig, William},
  journal={Journal of Graph Theory},
  volume={35},
  number={1},
  pages={46--68},
  year={2000},
  publisher={Wiley Online Library}
}

@article{GR16,
  title={Simple dynamics on graphs},
  author={Gadouleau, Maximilien and Richard, Adrien},
  journal={Theoretical Computer Science},
  volume={628},
  pages={62-77},
  year={2016},
  publisher={Elsevier}
}

@article{R07,
title={Information flows, graphs and their guessing numbers},
  author={Riis, S.},
  journal={The Electronic Journal of Combinatorics},
  volume={14},
  pages={\#R44},
  year={2007},
}

@article{RST99,
  title={Permanents, Pfaffian orientations, and even directed circuits},
  author={Robertson, N. and Seymour, P. and Thomas, R.},
  journal={Annals of Mathematics},
  volume={150},
  number={3},
  pages={929-975},
  year={1999},
  publisher={PRINCETON UNIVERSITY}
}

@book{BG08,
 author = {Bang-Jensen, J. and Gutin, G.~Z.},
 title = {Digraphs: Theory, Algorithms and Applications},
 year = {2008},
 edition = {2nd},
 publisher = {Springer},
}

@ARTICLE{R80,
  author = {Robert, F.},
  title = {Iterations sur des ensembles finis et automates cellulaires contractants},
  journal = {Linear Algebra and its Applications},
  year = {1980},
  volume = {29},
  pages = {393-412}
}

@BOOK{R86,
  title = {Discrete iterations: a metric study},
  publisher = {Springer},
  year = {1986},
  author = {Robert, F.},
  volume = {6},
  series = {Series in Computational Mathematics}
}

@ARTICLE{A08,
  author = {Aracena, J.},
  title = {Maximum Number of Fixed Points in Regulatory {B}oolean Networks},
  journal = {Bulletin of Mathematical Biology},
  year = {2008},
  volume = {70},
  pages = {1398-1409},
  number = {5},
}

@BOOK{R95,
  title = {Les syst\`emes dynamiques discrets},
  publisher = {Springer},
  year = {1995},
  author = {Robert, F.},
  volume = {19},
  series = {Math\'ematiques et Applications}
}

@Article{N15,
  title={Quantitative and logic modelling of molecular and gene networks},
  author={Le Nov{\`e}re, Nicolas},
  journal={Nature Reviews Genetics},
  year={2015},
  volume = 	 {16},
  pages = 	 {146-158},
  publisher={Nature Publishing Group}
}

@Article{DNS12,
  author = 	 {Demongeot, J. and Noual, M. and Sen\'e, S.},
  title = 	 {Combinatorics of {B}oolean automata circuits dynamics},
  journal = 	 {Discrete Applied Mathematics},
  year = 	 {2012},
  OPTkey = 	 {},
  volume = 	 {160},
  number = 	 {4-5},
  pages = 	 {398-415},
  OPTmonth = 	 {},
  OPTnote = 	 {},
  OPTannote = 	 {}
}

\end{document}